\documentclass[12pt]{article}
\usepackage{amsmath}
\usepackage{amssymb}
\usepackage{amsthm}
\usepackage{authblk}
\usepackage[margin=1.2in]{geometry}
\usepackage{microtype}
\usepackage{dsfont}
\usepackage{hyperref}
\usepackage{graphicx}
\usepackage{subcaption}

\theoremstyle{plain}
\theoremstyle{definition}
\newtheorem{definition}{Definition}
\newtheorem{theorem}{Theorem}
\newtheorem{lemma}{Lemma}
\newtheorem{assumption}{Assumption}
 
\newtheorem{remark}{Remark}
\newtheorem{proposition}{Proposition}

\newcommand{\x}{\pmb{x}}
\newcommand{\y}{\pmb{y}}
\newcommand{\D}{\pmb{D}}
\newcommand{\I}{\pmb{I}}
\newcommand{\X}{\pmb{X}}
\newcommand{\Y}{\pmb{Y}}
\newcommand{\Z}{\pmb{Z}}
\newcommand{\xu}{\pmb{u}}
\newcommand{\xv}{\pmb{v}}
\newcommand{\vz}{\pmb{z}}
\newcommand{\ve}{\pmb{e}}
\newcommand{\vf}{\pmb{f}}
\newcommand{\px}{\pmb{\xi}}
\newcommand{\Q}{\pmb{Q}}
\newcommand{\W}{\pmb{W}}
\newcommand{\Fm}{\pmb{F}}
\newcommand{\ind}{\mathds{1}}
\newcommand{\tr}{\operatorname{tr}}
\newcommand{\E}{\mathbb{E}}
\newcommand{\C}{\mathbb{C}}
\newcommand{\mA}{\mathcal{A}}

\def\lV{\left\lVert}
\def\rV{\right\lVert}
\def\lv{\left\lvert}
\def\rv{\right\lvert}
\def\lk{\left(}
\def\rk{\right)}
\def\lg{\langle}
\def\rg{\rangle}
\def\lz{\left[}
\def\rz{\right]}

\begin{document}
\title{PhaseLift for Coded Diffraction Patterns: Optimal Sampling Rate}

\author{Gao Huang\footnote{School of Mathematical Sciences, Zhejiang University, Hangzhou 310027, P. R. China, E-mail address: hgmath@zju.edu.cn} }	
\author{Song Li\footnote{School of Mathematical Sciences, Zhejiang University, Hangzhou 310027, P. R. China, E-mail address: songli@zju.edu.cn}}
\date{\today}
\affil{}
\renewcommand*{\Affilfont}{\small\itshape}

\maketitle

\begin{abstract}
Recovering a complex-valued signal from coded diffraction patterns, namely the Fourier intensities obtained after modulating the signal with a collection of masks, is a fundamental structured phase retrieval problem arising in diffraction imaging and related applications.
Despite its practical importance, the theoretical analysis of this structured framework remains scarce.
In the standard random mask model, the optimal sampling rate achievable by computationally tractable recovery methods has remained open.

In this paper, we establish the optimal sampling rate for the PhaseLift feasibility program.
More precisely, PhaseLift achieves exact recovery of an unknown signal $\x_0\in\C^n$, up to a global phase,
from $\mathcal{O}\lk\log n\rk$ random masks, with polynomially decaying failure probability.
Since $\Omega\lk\log n\rk$ masks are necessary to identify certain signals under the erasure mask ensemble, our result thereby achieves the optimal mask complexity.
Equivalently, PhaseLift attains the optimal total sampling rate of $m=\mathcal{O}\lk n\log n\rk$ scalar intensity measurements.
The proof is based on an approximate dual certificate construction via a refined golfing scheme that combines adaptive mask allocation with a dimension-independent truncation threshold.
\end{abstract}

\noindent\textbf{Keywords.}
Phase Retrieval; PhaseLift; Coded Diffraction Patterns; Convex Recovery.

\smallskip
\noindent\textbf{2020 Mathematics Subject Classification.}
Primary 94A12; Secondary 42A38, 60B20.

\section{Introduction}

Many imaging systems record only the intensity of a wave field, while its phase is lost.
Under the Fraunhofer diffraction approximation, the complex field at the detector is given by the Fourier transform of the object, whereas only its squared magnitude is observed.
This gives rise to the Fourier phase retrieval problem~\cite{bendory2017fourier}:
reconstruct an unknown signal $\x_0\in\C^n$ from the intensity measurements
\begin{equation}\label{eq:fourier}
        y_k=\lv\sum_{t=0}^{n-1} \x_0(t) e^{-\frac{2\pi\mathrm{i}kt}{N}}\rv^2,
        \qquad 0\le k\le N-1.
\end{equation}
Taking $N=2n-1$ yields an oversampled Fourier acquisition that determines the complete aperiodic autocorrelation of $\x_0$.
Fourier phase retrieval arises naturally in optics and crystallography and remains fundamental in coherent diffraction imaging, X-ray crystallography, astronomical imaging, and microscopy~\cite{fienup1987phase,millane1990phase,harrison1993phase,shechtman2015overview}.

The difficulty, however, extends well beyond the quadratic nonlinearity of the measurement map.
Even when the measurements in~\eqref{eq:fourier} determine the complete aperiodic autocorrelation of $\x_0$, they generally do not determine the signal uniquely~\cite{bendory2017fourier}. 
In one dimension, reciprocal conjugate pairs of roots of the autocorrelation polynomial can generate exponentially many signals with identical Fourier magnitudes, beyond the unavoidable ambiguities of global phase, translation, and conjugate reflection~\cite{beinert2015ambiguities,bendory2017fourier}. 
Moreover, uniqueness alone provides neither quantitative stability nor a provably efficient reconstruction procedure, and classical alternating projection methods lack general global convergence guarantees~\cite{gerchberg1972practical,fienup1982algorithms,bendory2017fourier,fannjiang2020numerics}. 
These challenges in identifiability, stability, and computation motivate the introduction of additional measurement diversity through multiple masks or coded illuminations~\cite{bandeira2014masked,candes2015cdp}.

Coded diffraction pattern (CDP) measurements provide a physically natural way to introduce such diversity while preserving the Fourier structure of the acquisition. 
The object is modulated by $L$ known masks, and the Fourier intensity of each masked signal is recorded:
\begin{equation}\label{eq:cdp0}
y_{k,\ell}=\lv\lg\vf_k,\D_\ell\x_0\rg\rv^2,
\qquad
1\le k\le n,\quad 1\le \ell\le L,
\end{equation}
where $\D_\ell=\operatorname{diag}\lk\pmb{d}_\ell\rk$ denotes the diagonal modulation associated with the $\ell$-th mask and $\{\vf_k\}_{k=1}^n$ are the columns of the $n\times n$ discrete Fourier transform matrix.
Such measurements can be implemented by modulating the incident illumination or by placing a mask or phase plate after the object~\cite{fannjiang2012phase,loewen2018diffraction}. 
Since each mask produces a complete diffraction pattern, the natural sampling parameter is the number $L$ of masks, corresponding to a total of $m:=nL$ scalar intensity measurements.

In this paper, we focus on random modulation patterns, where the masks $\left\{\D_\ell\right\}_{\ell=1}^L$ are generated according to a prescribed entrywise random model~\cite{candes2015cdp,candes2015wirtinger,gross2017improved}.
Randomization provides a flexible way to generate diverse illuminations while avoiding the more delicate task of designing deterministic masks with general recovery guarantees.
Even so, the random CDP model remains highly structured: 
%substantially more structured than standard random ensembles consideredin~\cite{candes2013phaselift,candes2014solving,krahmer2020complex,huang2025stable}: 
within each diffraction pattern, all Fourier frequencies are coupled through a common random mask.
More precisely, for each fixed mask $\D_\ell$, the vectors $\left\{\D_\ell^*\vf_k\right\}_{k=1}^n$ share the same random mask entries and are therefore dependent across Fourier frequencies.
This within-pattern dependence, combined with the rigid Fourier structure, constitutes the main technical challenge in the analysis.

For independent Gaussian, sub-Gaussian, and heavy-tailed sampling ensembles, a mature theory has been developed: convex lifting methods achieve optimal or near-optimal guarantees for exact recovery, stability, and robustness~\cite{candes2013matrix,candes2013phaselift,candes2014solving,chen2015exact,cai2015rop,kueng2017low,kabanava2016stable,krahmer2020complex,huang2025robust,huang2025stable,huang2026low}, while computationally efficient nonconvex algorithms attain comparable sampling rates with provable global convergence guarantees~\cite{candes2015wirtinger,chen2017solving,wang2017solving,zhang2017nonconvex,duchi2019solving,chen2019gradient,tan2023online,godeme2023provable}. 
However, the corresponding theory for CDP phase retrieval under random masks remains considerably scarce.
In particular, Cand\`es, Li, and Soltanolkotabi raised the problem of determining the optimal number of masks, and hence the optimal sampling rate, for computationally tractable recovery~\cite{candes2015cdp,candes2015wirtinger}, whose answer still remained elusive.
They proved that the PhaseLift feasibility program recovers any fixed complex signal from $\mathcal{O}\lk\log^4 n\rk$ masks~\cite{candes2015cdp} and they also established geometric convergence of Wirtinger Flow with the same mask order~\cite{candes2015wirtinger}. 
Gross, Krahmer, and Kueng subsequently reduced the sufficient number of masks for PhaseLift to $\mathcal{O}\lk\log^2 n\rk$ and they also showed that, for the erasure mask distribution, any recovery method requires $\Omega\lk\log n\rk$ masks to identify even certain coordinate signals~\cite{gross2017improved}.
Thus, prior to the present work, a logarithmic gap remained between the best computational guarantee for exact recovery and the corresponding lower bound.

More recent nonconvex results for CDP phase retrieval include resampled methods that recover to accuracy $\varepsilon$ with $\varepsilon$-dependent sample rate~\cite{li2022sampling,li2025taf,gao2025affine}, as well as a local linear convergence guarantee for mirror descent under a substantially larger than logarithmic mask requirement~\cite{godeme2023provable}.
The former use fresh masks at each iteration and thus do not establish exact recovery from a fixed collection of masks, 
whereas the latter guarantees only local convergence rather than global exact recovery.
Several works have also investigated the robustness of recovery under the random mask model to measurement noise~\cite{soltanolkotabi2014algorithms,li2021masks,huang2026recovery}.
A complementary literature investigates alternative mask designs, acquisition schemes, uniqueness conditions, and computational frameworks beyond the random mask model~\cite{bandeira2014masked,jaganathan2015phase,chen2018fourier,chen2018phase,guerrero2020phase,jaming2023uniqueness,ye2024green,wang2026photonic}.
Although these developments provide important guarantees in related settings, the logarithmic gap under the random mask model persists.

Motivated by the open problem raised by Cand\`es, Li, and Soltanolkotabi~\cite{candes2015cdp,candes2015wirtinger}, together with the logarithmic lower bound established by Gross, Krahmer, and Kueng~\cite{gross2017improved}, we are led to the following question:
\begin{quote}
\textbf{\textit{Can computationally tractable recovery for CDP phase retrieval~\eqref{eq:cdp0} be achieved using only $\mathcal{O}\lk\log n\rk$ random masks, which is optimal in order?}}
\end{quote}

In this paper, we answer this question affirmatively.
We prove that, for any unknown signal $\x_0\in\C^n$ and $\omega\ge1$, the PhaseLift feasibility program exactly recovers $\x_0$, up to a global phase, from $L\ge C\omega\log n$ random masks with polynomially decaying failure probability $n^{-\omega}$,
where $C$ depends only on the mask distribution parameters.
Together with the information-theoretic lower bound, our result is optimal in its joint dependence on the dimension and the failure probability.
Equivalently, PhaseLift attains the optimal total sampling rate $m=\mathcal{O}\lk\omega n\log n\rk$, and hence $m=\mathcal{O}\lk n\log n\rk$ for fixed $\omega$.
%This improves upon the previous PhaseLift guarantees in~\cite{candes2015cdp,gross2017improved}.
%Consequently, PhaseLift achieves the optimal mask order $\mathcal{O}\lk\log n\rk$ and the corresponding optimal total sampling rate %$m=\mathcal{O}\lk n\log n\rk$, matching the information-theoretic lower bound. % established in~\cite{gross2017improved}.

The proof proceeds by constructing an approximate dual certificate via a refinement of the golfing schemes developed in~\cite{candes2015cdp,gross2017improved}.
Alternative approaches to establishing PhaseLift recovery guarantees, such as those based on rank-one RIP-$\ell_1/\ell_2$ estimates~\cite{chen2015exact,cai2015rop} and small ball method~\cite{kueng2017low,krahmer2020complex,huang2025low,huang2026low},
are less suited to the CDP setting, due to its rigid Fourier structure and strong within-pattern dependence.
Our argument instead exploits the conditional structure inherent in the golfing construction and introduces two main refinements.
First, the masks are allocated adaptively across the golfing steps, with the batch sizes decreasing as the residual contracts, in contrast to the fixed batch sizes used in~\cite{candes2015cdp} and the two stage allocation scheme of~\cite{gross2017improved}.
Second, each golfing update is constructed from an adaptively truncated operator with a dimension-independent truncation threshold.
We believe that these ideas are of independent interest and may prove useful in the analysis of other recovery methods with CDP measurements, as well as inverse problems involving structured sampling ensembles.

\subsection{Problem Setup}\label{sec:setup}

\paragraph{CDP Measurements.}
Let $\x_0\in\C^n$ be an unknown signal. 
Given $L$ masks $\D_1,\ldots,\D_L$, the CDP measurements are given by
\begin{equation*}\label{eq:cdp}
        y_{k,\ell}=\lv\lg \vf_k,\D_\ell \x_0\rg\rv^2,\qquad
        1\le k\le n,\quad 1\le \ell\le L.
\end{equation*}
We lift the signal to the rank-one PSD matrix $\X_0:=\x_0\x_0^*$.
For $1\le k\le n$ and $1\le \ell\le L$, define the measurement matrices $\Fm_{k,\ell}:=\D_\ell^*\vf_k\vf_k^*\D_\ell$.
Then the measurements can be written linearly in the lifted variable:
\begin{equation*}
        y_{k,\ell}=\tr\lk\Fm_{k,\ell}\X_0\rk,\qquad
        1\le k\le n,\quad 1\le \ell\le L.
\end{equation*}
Let $\mathcal{H}_n$ denote the space of $n\times n$ Hermitian matrices, equipped with the Hilbert--Schmidt inner product
$\lg\Z,\W\rg=\tr\lk\Z\W\rk$.
The associated linear sampling map $\mA:\mathcal{H}_n\to\mathbb{R}^{nL}$ is defined componentwise by
\begin{equation}\label{eq:sampling}
        \mA\lk\Z\rk :=\left\{\tr\lk\Fm_{k,\ell}\Z\rk\right\}_{1\le k\le n,1\le \ell\le L}.
\end{equation}

\paragraph{Random Mask Model.} 
We adopt the random mask model considered in~\cite{candes2015cdp,candes2015wirtinger,gross2017improved}. 

\begin{assumption}\label{ass:mask}
Let $\pmb{d}\in\C^n$ be a random vector whose entries are i.i.d.\ copies of a complex-valued random variable $d$, and let
$\pmb{d}_1,\ldots,\pmb{d}_L$ be independent copies of $\pmb{d}$.
For each $1\le \ell\le L$, set $\D_\ell=\operatorname{diag}\lk\pmb{d}_\ell\rk$.
The distribution of $d$ is symmetric and satisfies
\begin{equation*}
    \E d=0, \qquad \E d^2=0,
    \qquad
    \E\lv d\rv^4= 2\lk\E\lv d\rv^2\rk^2,
    \qquad
    \lv d\rv\le M
    \quad\text{almost surely},
\end{equation*}
for some fixed constant $M>0$.
We denote $\nu:=\E\lv d\rv^2>0$.
When $n$ is odd, the condition $\E d^2=0$ may be omitted.
\end{assumption}

\begin{remark}\label{rem:mask}
The random mask model in Assumption~\ref{ass:mask} originates from~\cite{candes2015cdp}.
A related model for real-valued masks was considered in~\cite{gross2017improved}, where the analysis requires $n$ to be odd. 
Furthermore, when $n$ is odd, the condition $\E d^2=0$ in Assumption~\ref{ass:mask} can be omitted; see~\cite[Section 1.5]{candes2015cdp}. This permits the use of real-valued masks.
\end{remark}

An example satisfying Assumption~\ref{ass:mask} is the octanary mask ensemble $d=b_1b_2$ introduced in~\cite{candes2015cdp}, where $b_1$ and $b_2$ are independent random variables distributed as
\begin{equation}\label{eq:octanary_mask}
    b_1\sim
    \begin{cases}
        1,  & \text{with probability }1/4,\\
        -1, & \text{with probability }1/4,\\
        -\mathrm{i}, & \text{with probability }1/4,\\
        \mathrm{i},  & \text{with probability }1/4,
    \end{cases}
    \qquad
    b_2\sim
    \begin{cases}
        1/\sqrt{2}, & \text{with probability }4/5,\\
        \sqrt{3},   & \text{with probability }1/5.
    \end{cases}
\end{equation}
For this ensemble, $M=\sqrt{3}$ and $\nu=1$.
When $n$ is odd, Assumption~\ref{ass:mask} includes, in particular, the real-valued erasure mask ensemble considered in~\cite{candes2015cdp,gross2017improved}, given by
\begin{equation}\label{eq:erasure_mask}
        d\sim
        \begin{cases}
        \sqrt{2}, & \text{with probability }1/4,\\
        0,        & \text{with probability }1/2,\\
        -\sqrt{2},& \text{with probability }1/4.
        \end{cases}
\end{equation}
For this ensemble, one has $M=\sqrt{2}$ and $\nu=1$.

\paragraph{PhaseLift Program.} 
The PhaseLift approach was introduced by Candès et al.~\cite{candes2013matrix,candes2013phaselift} and has since been developed in several variants.
We adopt the feasibility formulation considered in~\cite{gross2017improved}, in which the signal energy $\lV\x_0\rV_2^2$ is available as part of the problem data.
This quantity can be obtained, for instance, by using one additional deterministic calibration mask $\D_0=\I$.
Indeed, Fourier orthogonality yields $\sum_{k=1}^n \lv\lg \vf_k,\x_0\rg\rv^2 = n\lV\x_0\rV_2^2$.
Accordingly, we impose the trace constraint $\tr\lk\Z\rk=\tr\lk\X_0\rk$ in the feasibility program.
On the dual side, incorporating the trace constraint allows the dual certificate to contain a multiple of the identity and hence to lie in $\operatorname{range}\lk\mathcal{A}^*\rk+\operatorname{span}\left\{\I\right\}$.
By homogeneity, we normalize $\lV\x_0\rV_2=1$ in the sequel. 
Under this normalization, the PhaseLift feasibility problem considered in this paper is
\begin{equation}\label{eq:PhaseLift}
\begin{array}{ll}
\text{find} & \Z\in\mathcal{H}_n,\\[2mm]
\text{subject to}
& \mA\lk\Z\rk=\mA\lk\X_0\rk,\\
& \Z\succeq 0,\\
& \tr\lk\Z\rk=1.
\end{array}
\end{equation}

\subsection{Main Theorem}\label{sec:theorem}

We now state our main recovery guarantee: 
$\mathcal{O}\lk\log n\rk$ random masks suffice for exact recovery via the PhaseLift feasibility program~\eqref{eq:PhaseLift}.

\begin{theorem}\label{thm:main}
Let $n\ge2$ and let $\x_0\in\C^n$ be an unknown signal satisfying $\lV\x_0\rV_2=1$, and suppose that the CDP measurements in~\eqref{eq:cdp0} are generated according to the random mask model in Assumption~\ref{ass:mask} with mask parameters $M$ and $\nu$.
Then there exists a constant $C=C\lk M,\nu\rk>0$ such that, for every $\omega\ge 1$, if
\begin{equation*}
        L \ge C\,\omega\,\log n,
\end{equation*}
the matrix $\X_0=\x_0\x_0^*$ is the unique feasible point of the PhaseLift feasibility program~\eqref{eq:PhaseLift} with probability at least $1-n^{-\omega}$. 
Consequently, $\x_0$ is recovered up to a global phase.
%Here, $M$ and $\nu$ are the mask parameters specified in Assumption~\ref{ass:mask}.
\end{theorem}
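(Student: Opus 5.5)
The proof follows the standard dual-certificate route for PhaseLift, adapted to the feasibility program~\eqref{eq:PhaseLift}. Let $T:=\{\x_0\xu^*+\xu\x_0^*:\xu\in\C^n\}$ be the tangent space at $\X_0$, and write $\mathcal{P}_T$ and $\mathcal{P}_{T^\perp}$ for the orthogonal projections onto $T$ and its complement. The first step is a deterministic uniqueness lemma. Suppose that:
\begin{itemize}
\item[(i)] $\mA$ satisfies a one-sided injectivity bound on $T$: $\frac{1}{nL\nu^2}\lV\mA(\Z)\rV_2^2\ge c_0\lV\Z\rV_F^2$ for all $\Z\in T$ (equivalently, a lower bound on $\mathcal{P}_T\mA^*\mA\mathcal{P}_T$, after the identity component is handled using the trace constraint);
\item[(ii)] there is an approximate dual certificate $\Y\in\operatorname{range}(\mA^*)+\operatorname{span}\{\I\}$ with $\lV\mathcal{P}_T(\Y)\rV_F\le \epsilon$ and $\mathcal{P}_{T^\perp}(\Y)\preceq -\tfrac12\mathcal{P}_{T^\perp}$ (or the sign convention adapted to the feasibility form);
\item[(iii)] $\lV\mA\rV$ is at most polynomial in $n$, which holds deterministically since $|d|\le M$.
\end{itemize}
Then $\X_0$ is the only feasible point. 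To see this, take a feasible perturbation $\X_0+\pmb{H}$. Then $\mA(\pmb{H})=0$, $\tr\pmb{H}=0$ and $\pmb{H}_{T^\perp}\succeq0$. Pairing with $\Y$ gives $0=\lg \Y,\pmb{H}\rg$, which forces $\tr \pmb{H}_{T^\perp}\lesssim \epsilon\lV\pmb{H}_T\rV_F$. Injectivity gives $\lV\pmb{H}_T\rV_F\lesssim \lV\mA\rV\,\lV\pmb{H}_{T^\perp}\rV_F$. Taking $\epsilon$ a small inverse polynomial closes the loop. This is the lemma of Gross--Krahmer--Kueng and Cand\`es--Li--Soltanolkotabi. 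The useful feature is that $\epsilon$ may be $n^{-c}$, which costs only $O(\log n)$ golfing steps provided each step contracts by a constant factor.

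The second step is to record the isotropy properties of a single mask. By Assumption~\ref{ass:mask} (in particular $\E|d|^4=2\nu^2$ and $\E d^2=0$), $\E\sum_k \Fm_{k,1}\,\tr(\Fm_{k,1}\Z)=n\nu^2(\Z+\tr(\Z)\I)$. So the normalized operator $\mathcal{R}_\ell(\Z):=\frac1{n\nu^2}\sum_k\Fm_{k,\ell}\tr(\Fm_{k,\ell}\Z)-\tr(\Z)\I$ has expectation equal to the identity, up to the identity component, which the span$\{\I\}$ freedom absorbs. Injectivity (i) on $T$ is then proved by a matrix Bernstein or Chernoff argument over the $L$ independent masks. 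The summands are bounded because $|\lg\vf_k,\D_\ell\x_0\rg|^2\le M^2\sum_t|\x_0(t)|^2$ up to normalization and $\lV\D_\ell^*\vf_k\rV^2\le M^2 n$. This yields a constant lower bound once $L\ge C\omega\log n$.

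The third step, and the main obstacle, is the golfing construction of $\Y$ with only $O(\omega\log n)$ masks in total. I would set $\W_0:=\mathcal{P}_T(\text{target})$, split the masks into batches $B_1,\dots,B_J$ with $J\approx \log n$, and iterate $\W_j=\W_{j-1}-\mathcal{P}_T\mathcal{R}_{B_j}\mathcal{P}_T(\W_{j-1})$, with $\Y=\sum_j \mathcal{R}_{B_j}(\W_{j-1})$. Two refinements are needed.
\begin{itemize}
\item \emph{Adaptive batch sizes.} The batch sizes $|B_j|$ decrease geometrically as $\lV\W_{j-1}\rV_F$ shrinks, since the $T^\perp$ deviation must be controlled in operator norm only relative to the current residual. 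Choose $|B_j|\approx C(\omega\log n)\,2^{-j}+C$ so that $\sum_j|B_j|=O(\omega\log n)$, while the early batches are large enough to obtain failure probability $n^{-\omega-1}$.
\item \emph{Truncation.} Each update uses a truncated operator in which the terms $\Fm_{k,\ell}\tr(\Fm_{k,\ell}\W)$ are cut off at a dimension-independent level $|\lg \vf_k,\D_\ell \x_0\rg|^2\le \tau$. This makes the summands bounded by constants rather than by $\log n$, so that Bernstein yields constant contraction per step from $O(1)$ masks in the late steps. The truncation bias is a small constant multiple of $\W$, and it is absorbed into the contraction.
\end{itemize}
Conditioning on earlier batches, $\W_{j-1}$ is independent of $B_j$, which circumvents the within-pattern dependence. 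The key estimates per step are $\lV\W_j\rV_F\le \tfrac12\lV\W_{j-1}\rV_F$ and $\lV\mathcal{P}_{T^\perp}\mathcal{R}_{B_j}(\W_{j-1})\rV\le c\,2^{-j}$. Obtaining the latter with $O(1)$ masks in late steps is hardest. I would try a matrix Bernstein argument in which the variance proxy is bounded using Fourier orthogonality $\sum_k\vf_k\vf_k^*=n\I$ and the truncation. Where that fails in the late steps, I would allow the failure probability per step to be only constant and use a Gross-style ``repeat until success'' selection among a few candidate batches, with total mask count still $O(\omega\log n)$. Summing the $T^\perp$ parts gives $\mathcal{P}_{T^\perp}\Y\preceq -\tfrac12$, and $\lV\mathcal{P}_T\Y\rV_F=\lV\W_J\rV_F\le n^{-c}$. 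A union bound over $J=O(\log n)$ steps then gives probability $1-n^{-\omega}$, and the deterministic lemma concludes.
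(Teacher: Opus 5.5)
Your outer architecture matches the paper's: the deterministic uniqueness lemma, injectivity on $T$, and conditional golfing with shrinking batches and constant-level truncation. But the golfing step as you set it up fails exactly where $\mathcal{O}(\log n)$ is separated from $\mathcal{O}(\log^2 n)$.

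You require $\|\pmb{P}_{T^\perp}\mathcal{R}_{B_j}(\W_{j-1})\|_{\mathrm{op}}\le c2^{-j}$ while $\|\W_{j-1}\|_F\approx 2^{-j}$. That is a constant \emph{relative} $T^\perp$ tolerance. Matrix Bernstein then gives failure probability $2n\exp(-c'|B_j|)$, which is vacuous unless $|B_j|\gtrsim\log n$. This is not an artifact of the inequality. Take $\x_0=\ve_1$ and erasure masks. A batch of $m\ll\log_2 n$ masks erases some coordinate $t\ge2$ in all of its masks with high probability. The $(t,t)$ entry of the $T^\perp$ part of the update then equals $-\tr(\W)$, so the relative error is at least $|\tr\W|/\|\W\|_F$, which is not small unless the residual is nearly trace-free. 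Hence the per-trial success probability with $\mathcal{O}(1)$ masks is not bounded below, and ``repeat until success'' cannot rescue the late steps. At constant relative tolerance, every one of the $\Theta(\log n)$ steps needs $\gtrsim\log n$ masks.

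The missing idea is to let the relative tolerance \emph{grow}. The paper takes $t_{1,j}=\eta2^{j/2}$, so the absolute contributions $t_{1,j}\|\Q_j\|_F\le\eta2^{-j/2}$ still sum to at most $1/2$. It then uses the linear regime of Bernstein, with constant radius, whose failure probability is $2n\exp(-c\,m_j\min\{t_{1,j}^2,t_{1,j}\}\nu^4/(M^8\tau_0))$. This permits $m_j\approx C_2\frac{M^8\tau_0}{\nu^4}\bigl(1+\log n/\min\{t_{1,j}^2,t_{1,j}\}\bigr)$, which decays like $2^{-j/2}$ and sums to $\mathcal{O}(\log n)$.

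Your schedule $|B_j|\approx C\omega\log n\,2^{-j}+C$ also decays too fast for contraction factor $1/2$. The $2n$ factor forces $|B_j|t_j\gtrsim\log n$. So each of the roughly $\log_2\log n$ steps with $C\omega\lesssim2^j\lesssim\omega\log n$ contributes an absolute $T^\perp$ error of order $1/(C\omega)$, and the total grows like $\log\log n$.

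Finally, you need an argument for the $\omega$-dependence and for the random total mask count under repetition. The paper keeps $m_j$ independent of $\omega$ and shows $\mathbb{P}(L_{\mathrm{total}}>4\omega B)\le(2n)^{-\omega}$ via Lemma~\ref{lm:geometric}. You only assert that the count stays $\mathcal{O}(\omega\log n)$.

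Your truncation rule also fails to give the constant radius on which all of this rests. Write $\W=\x_0\vz^*+\vz\x_0^*$. Cutting off on $|\langle\vf_k,\D_\ell\x_0\rangle|^2\le\tau$ leaves the factor $\langle\vf_k,\D_\ell\vz\rangle$ in $\tr(\Fm_{k,\ell}\W)=2\operatorname{Re}\bigl(\langle\vf_k,\D_\ell\x_0\rangle\overline{\langle\vf_k,\D_\ell\vz\rangle}\bigr)$ untruncated. The almost-sure radius of a per-mask summand then involves $\max_k|\langle\vf_k,\D_\ell\vz\rangle|/\|\vz\|_2$. That is $\sqrt n$ in the worst case and $\sqrt{\log n}$ typically, and it multiplies exactly the linear term $R_1t$ that the growing tolerance exploits.

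The paper instead truncates adaptively to the current residual. It keeps only those $k$ with $|\tr(\Fm_{k,\ell}\Q_j)|\le K_0M^2\tau_0\|\Q_j\|_F$, which is legitimate because $\Q_j$ is fixed given the past batches. Combined with $\sum_k\Fm_{k,\ell}=n\D_\ell^*\D_\ell\preceq nM^2\I$, this gives the deterministic radius $\lesssim M^4\tau_0/\nu^2$ of Lemma~\ref{lm:R}. The bias stays $\lesssim(M^4/\nu^2)e^{-\tau_0}\|\Q_j\|_F$ by Proposition~\ref{prop:bias}.

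A smaller point of the same kind affects your injectivity sketch. The per-mask summands on $T$ are bounded only by $\mathcal{O}(n)$ deterministically and $\mathcal{O}(\log n)$ with high probability, so a two-sided Bernstein argument yields $L\gtrsim\log^2 n$. The paper instead imports the $\mathcal{O}(\log n)$ injectivity bound of Proposition~\ref{prop:robust} from Gross--Krahmer--Kueng.
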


We record several remarks on Theorem~\ref{thm:main}, compare it with previous recovery guarantees, and discuss its implications.

\begin{enumerate}
\item \textbf{Previous Results:}
The PhaseLift recovery guarantee in~\cite{candes2015cdp} requires $L\gtrsim_{M,\nu}\omega\log^4 n$ random masks and succeeds with polynomially decaying failure probability $1-n^{-\omega}$.
The result in~\cite{gross2017improved} reduces the PhaseLift mask requirement to $L\gtrsim_{M,\nu}\eta\log^2 n$ and guarantees exact recovery with probability at least $1-e^{-\eta}$, which is a constant success probability when $\eta$ is fixed.

\item \textbf{Mask Lower Bound:}
For the erasure mask ensemble,~\cite[Lemma~19]{gross2017improved} shows, through a failure of injectivity for coordinate signals, that any recovery algorithm requires $\Omega\lk\log n\rk$ masks.
In Appendix~\ref{ap:flat}, we show that this obstruction persists for the flat signal $\x_0=\frac{1}{\sqrt{n}}\lk1,\ldots,1\rk^\top$.
More precisely, for $\x_0$ to be uniquely identifiable with probability at least $1-n^{-\omega}$, it is necessary that
$L\gtrsim\omega\log_2 n$ with $\omega\ge1$.
Thus, this information-theoretic obstruction is not confined to sparse or coordinate signals.

\item \textbf{Optimal Sampling Rate:}
Theorem~\ref{thm:main} further reduces the sufficient number of masks to $L\gtrsim_{M,\nu}\omega\log n$, while guaranteeing exact recovery with probability at least $1-n^{-\omega}$.
Together with the preceding lower bound, this shows that the mask requirement is \textit{optimal in its joint dependence on the dimension and the prescribed failure probability}, up to constants depending on the mask distribution.
Since each mask produces $n$ scalar intensity measurements, the corresponding total number of measurements satisfies
\begin{equation*}
m=nL\gtrsim_{M,\nu} \omega n\log n.
\end{equation*}
Thus, to achieve success probability $1-n^{-\omega}$, Theorem~\ref{thm:main} attains the optimal sampling rate $m\asymp \omega n\log n$, and hence $m\asymp n\log n$ for fixed $\omega$.
If the deterministic calibration mask $\D_0=\I$, used to obtain the signal energy, is also counted, the total sampling rate becomes $\lk L+1\rk n$, which remains of the same order.

\item \textbf{Comparison at Success Probability:}
%Theorem~\ref{thm:main} also improves the dependence of the mask requirement on the target failure probability compared with~\cite{gross2017improved}.
%Indeed, 
To obtain a success probability of at least $1-n^{-\omega}$ from the result in~\cite{gross2017improved}, one must set $\eta=\omega\log n$, leading to the requirement $L\gtrsim_{M,\nu}\omega\log^3 n$.
By contrast, Theorem~\ref{thm:main} achieves the same success probability with $L\gtrsim_{M,\nu}\omega\log n$ random masks, improving the previous mask bound by a factor of $\log^2 n$.

\item \textbf{Parameter Dependence:}
The constant $C$ in Theorem~\ref{thm:main} depends only on the mask distribution parameters $M$ and $\nu$ and, after enlarging the absolute constant $\widetilde{C}>0$ if necessary, can be taken as
$\widetilde{C}\frac{M^8}{\nu^4} \lz 1+ \log^2\lk \frac{M^4}{\nu^2}\rk \rz$.
%However, we do not know whether this dependence on $M$ and $\nu$ is optimal.

\item \textbf{Nonuniform Guarantee:}
As in~\cite{candes2015cdp,gross2017improved}, Theorem~\ref{thm:main} provides a recovery guarantee for each fixed signal $\x_0$.
It does not ensure simultaneous recovery of all signals in $\C^n$ from a single realization of the random masks.
Establishing a corresponding uniform recovery guarantee remains an interesting direction for future work.
\end{enumerate}

\section{Proofs}\label{sec:proof}

This section contains the proofs of our main results. 
We first introduce some notation and basic facts that will be used throughout the analysis.
Vectors and matrices are denoted by bold lowercase and uppercase symbols, respectively, such as $\vz,\x_0,\vf_k$ and $\Z,\X_0,\Fm_{k,\ell}$.
For a vector $\vz$, we denote its Euclidean norm by $\lV\vz\rV_2$.
For a matrix $\Z$, we denote its Frobenius, operator, and trace norms by $\lV\Z\rV_F$, $\lV\Z\rV_{\mathrm{op}}$, and $\lV\Z\rV_*$, respectively.
For nonnegative quantities $a$ and $b$, we write $a\lesssim b$ if $a\le Cb$ for some absolute constant $C>0$, and define $a\gtrsim b$ analogously.
We write $a\lesssim_p b$ if the implicit constant depends only on the parameter $p$, and use $a\gtrsim_p b$ with the analogous meaning.
The notations $b_t=\mathcal{O}\lk a_t\rk$ and $b_t=\Omega\lk a_t\rk$ are equivalent to $b_t\lesssim a_t$ and $b_t\gtrsim a_t$, respectively, with the implicit constants uniform in $t$.

The tangent space associated with the rank-one matrix $\X_0=\x_0\x_0^*$ is
\begin{equation*}
        T:=
        \left\{\x_0\vz^*+\vz\x_0^*:\vz\in\C^n\right\}\subset \mathcal{H}_n.
\end{equation*}
Let $\pmb{P}_T$ and $\pmb{P}_{T^\perp}$ denote the orthogonal projections onto $T$ and $T^\perp$, respectively.
For any $\Z\in\mathcal{H}_n$ we write
\begin{equation*}
\Z_T:=\pmb{P}_T\Z, \qquad \Z_{T^\perp}:=\pmb{P}_{T^\perp}\Z. 
\end{equation*}
Under the normalization $\lV\x_0\rV_2=1$, the matrix $\X_0=\x_0\x_0^*$ is the orthogonal projector onto $\operatorname{span}\{\x_0\}$.
Hence, for every $\Z\in\mathcal{H}_n$, 
\begin{equation}\label{eq:proj0} 
\left\{ \begin{aligned} 
&\Z_T= \X_0\Z+\Z\X_0-\X_0\Z\X_0,\\[4pt] 
&\Z_{T^\perp} = \lk\I-\X_0\rk\Z\lk\I-\X_0\rk. 
\end{aligned} 
\right. 
\end{equation}

\subsection{Preliminaries}\label{subsec:pre}

We gather the preliminary ingredients required for the proof.
They include the sub-Gaussianity of masked Fourier forms, the near isotropicity of the measurement operator, the Bernstein-type inequalities used for the concentration estimates, and the robust injectivity and dual certificate conditions that ensure exact recovery from an approximate dual certificate.

\subsubsection{Sub-Gaussianity of Masked Fourier Forms}\label{subsec:subgaussian}

We first record a sub-Gaussian estimate for the masked Fourier linear forms.

\begin{lemma}\label{lm:subgaussian}
Let $\pmb{d}=\lk d_1,\ldots,d_n\rk\in\C^n$ be a random vector whose entries are independent, mean-zero, and satisfy $\lv d_j\rv\le M$ almost surely, and let $\D=\operatorname{diag}\lk\pmb{d}\rk$.
Then for every fixed $\xv\in\C^n$, every $1\le k\le n$, every $t\ge0$,
\begin{equation}\label{eq:sub_P}
        \mathbb{P}\lk\lv\lg \vf_k,\D\xv\rg\rv\ge tM\lV\xv\rV_2\rk
        \le
        4\exp\lk-t^2/4\rk.
\end{equation}
Consequently, for every $p\ge2$,
\begin{equation}\label{eq:sub_E}
        \lk\E\lv\lg \vf_k,\D\xv\rg\rv^p\rk^{1/p}
        \le 2\sqrt{2}M\sqrt p\,\lV\xv\rV_2.
\end{equation}

\end{lemma}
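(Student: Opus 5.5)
Write $\lg \vf_k,\D\xv\rg=\sum_{j=1}^n d_j a_j$ with $a_j:=\xv(j)\,\overline{\vf_k(j)}$. The point is that this is a sum of independent, mean-zero, bounded complex variables with deterministic weights. We assume the usual normalization $\lv\vf_k(j)\rv=1$ (the unnormalized DFT columns, so that $\sum_k\lv\lg\vf_k,\x_0\rg\rv^2=n\lV\x_0\rV_2^2$, as used in Section~\ref{sec:setup}). Then $\sum_j\lv a_j\rv^2=\lV\xv\rV_2^2$ and $\lv d_ja_j\rv\le M\lv a_j\rv$, so Fourier structure plays no further role. The plan is to bound real and imaginary parts separately with Hoeffding's inequality, combine them with a union bound to get \eqref{eq:sub_P}, and then integrate the tail to get \eqref{eq:sub_E}.

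For the tail bound, set $S:=\sum_j d_ja_j$ and write $\lv S\rv^2=(\operatorname{Re}S)^2+(\operatorname{Im}S)^2$. If $\lv S\rv\ge tM\lV\xv\rV_2$, then either $\lv\operatorname{Re}S\rv\ge tM\lV\xv\rV_2/\sqrt2$ or $\lv\operatorname{Im}S\rv\ge tM\lV\xv\rV_2/\sqrt2$. The variables $\operatorname{Re}(d_ja_j)$ are independent and mean zero, since $\E d_j=0$. They lie in $[-M\lv a_j\rv,M\lv a_j\rv]$, so Hoeffding gives
\begin{equation*}
\mathbb{P}\lk\lv\operatorname{Re}S\rv\ge s\rk\le 2\exp\lk-\frac{2s^2}{\sum_j(2M\lv a_j\rv)^2}\rk=2\exp\lk-\frac{s^2}{2M^2\lV\xv\rV_2^2}\rk.
\end{equation*}
With $s=tM\lV\xv\rV_2/\sqrt2$ this is $2e^{-t^2/4}$. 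The same bound holds for the imaginary part, and the union bound yields \eqref{eq:sub_P} with the constant $4$. The only care needed is tracking these constants exactly so that they match the stated $4$ and $1/4$.

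For the moment bound, normalize $X:=\lv S\rv/(M\lV\xv\rV_2)$ (the case $\xv=0$ is trivial). Then
\begin{equation*}
\E X^p=\int_0^\infty pt^{p-1}\mathbb{P}(X\ge t)\,dt\le 4p\int_0^\infty t^{p-1}e^{-t^2/4}dt=2p\,2^{p}\Gamma(p/2).
\end{equation*}
Using $\Gamma(p/2)\le (p/2)^{p/2}$ for $p\ge2$, we get $\E X^p\le 2p\,2^p(p/2)^{p/2}$, and hence $(\E X^p)^{1/p}\le (2p)^{1/p}\cdot 2\cdot\sqrt{p/2}=(2p)^{1/p}\sqrt2\sqrt p$. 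Since $(2p)^{1/p}\le 2$ for $p\ge2$ (the function $\log(2p)/p$ is decreasing there, and at $p=2$ it gives $4^{1/2}=2$), this yields $2\sqrt2M\sqrt p\lV\xv\rV_2$, which is \eqref{eq:sub_E}.

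The main obstacle is purely bookkeeping: making the constants in both the Gamma-function estimate and the Hoeffding step land exactly on $4$, $1/4$ and $2\sqrt2$. If the crude inequality $\Gamma(p/2)\le(p/2)^{p/2}$ turned out too lossy, I would instead check the case $p=2$ directly via $\E X^2\le 1$ (by independence and $\E d_j=0$) and interpolate for larger $p$.
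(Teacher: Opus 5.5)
Your proof is correct and follows the same route as the paper: Hoeffding's inequality applied separately to the real and imaginary parts of $\sum_j d_j a_j$ (using $|\vf_k(j)|=1$), then $|S|\le\sqrt2\max\{|\operatorname{Re}S|,|\operatorname{Im}S|\}$ and a union bound to get the constants $4$ and $1/4$. Your explicit tail integration, $\E X^p\le 2p\,2^p\Gamma(p/2)$ with $\Gamma(p/2)\le(p/2)^{p/2}$ and $(2p)^{1/p}\le 2$, correctly supplies the moment bound that the paper omits, and it lands exactly on $2\sqrt2$.
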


\begin{proof}
Write $f_{k,j}:=\lk \vf_k\rk_j$ and set $S:=\lg \vf_k,\D\xv\rg=\sum_{j=1}^n d_j \overline{f_{k,j}} v_j$.
Since $\lv f_{k,j}\rv=1$ for every $j$, the summands in $S$ are independent and centered, and their real and imaginary parts are bounded in absolute value by $M\lv v_j\rv$.
Hence, Hoeffding's inequality in the form of~\cite[Theorem~2.2.6]{vershynin2018highdimensional} gives, for every $u\ge0$,
\begin{equation*}
\mathbb{P}\lk\lv\operatorname{Re}\lk S\rk\rv\ge u\rk
\le
2\exp\lk-\frac{u^2}{2M^2\lV\xv\rV_2^2}\rk,
\end{equation*}
and the same bound holds for $\operatorname{Im}\lk S\rk$.
Since $\lv S\rv\le \sqrt{2}\max\left\{\lv\operatorname{Re}\lk S\rk\rv,\lv\operatorname{Im}\lk S\rk\rv\right\}$, taking a union bound yields~\eqref{eq:sub_P}. 
The moment estimate~\eqref{eq:sub_E} follows from the equivalence between sub-Gaussian tail bounds and moment growth, we omit it here.
\end{proof}

\subsubsection{Near Isotropicity}

We define the measurement operator associated with all $L$ masks:
\begin{equation}\label{eq:R_operator}
\mathcal{R}
:=\frac{1}{\nu^2nL}\mA^*\mA,
\qquad
\mathcal{R}\lk\Z\rk
=\frac{1}{\nu^2nL}\sum_{\ell=1}^L \sum_{k=1}^n \Fm_{k,\ell}\tr\lk\Fm_{k,\ell}\Z\rk.
\end{equation}
For a given mask, we define the corresponding block operator by
\begin{equation}\label{eq:F_operator}
\mathcal{F}\lk\Z\rk
:=
\frac{1}{\nu^2n} \sum_{k=1}^n \Fm_k\tr\lk\Fm_k\Z\rk, 
\qquad
\Fm_k:=\D^*\vf_k\vf_k^*\D,
\end{equation}
where $\D$ has the same distribution as each $\D_\ell$.

The following proposition records the near isotropicity identity under the above normalization.

\begin{proposition}[{\cite[Lemma~3.1]{candes2015cdp}},{\cite[Lemma~7]{gross2017improved}}]\label{prop:isotropicity}
Suppose that the random masks $\D_1,\ldots,\D_L$ are generated according to Assumption~\ref{ass:mask}, and let $\D$ be an independent mask with the same distribution.
Then, for every $\Z\in\mathcal{H}_n$,
\begin{equation*}
\E\mathcal{F}\lk\Z\rk
=
\E\mathcal{R}\lk\Z\rk
=
\Z+\tr\lk\Z\rk\I.
\end{equation*}
\end{proposition}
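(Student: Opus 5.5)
The plan is to compute $\E\mathcal{F}\lk\Z\rk$ entrywise. This uses only the moment conditions in Assumption~\ref{ass:mask} and the orthogonality of the Fourier columns. The statement for $\mathcal{R}$ then follows at once: by~\eqref{eq:R_operator} and~\eqref{eq:F_operator}, $\mathcal{R}=\frac1L\sum_{\ell=1}^L\mathcal{F}_\ell$, where $\mathcal{F}_\ell$ is the block operator built from $\D_\ell$. Each $\D_\ell$ has the same law as $\D$, so $\E\mathcal{R}\lk\Z\rk=\E\mathcal{F}\lk\Z\rk$ by linearity of expectation.

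\textbf{Step 1: entrywise formula.} Write $f_{k,j}=\lk\vf_k\rk_j$, so that $\lv f_{k,j}\rv=1$ and $f_{k,j}=e^{2\pi\mathrm{i}kj/n}$ up to the sign convention. Since $\tr\lk\Fm_k\Z\rk=\vf_k^*\D\Z\D^*\vf_k$, the $(a,b)$ entry of $\mathcal{F}\lk\Z\rk$ is
\begin{equation*}
\frac{1}{\nu^2 n}\sum_{k=1}^n\sum_{c,e=1}^n \overline{d_a}\,d_b\,d_c\,\overline{d_e}\; f_{k,a}\overline{f_{k,b}}\,\overline{f_{k,c}}\,f_{k,e}\;Z_{ce}.
\end{equation*}
Taking expectations, everything reduces to the fourth moments $\E\,\overline{d_a}d_bd_c\overline{d_e}$. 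The entries are independent and mean zero, so this moment vanishes unless the indices pair up. There are three possible pairings.
\begin{itemize}
\item[(i)] $a=b$, $c=e$ with $a\ne c$: the moment is $\nu^2$.
\item[(ii)] $a=c$, $b=e$ with $a\ne b$: the moment is $\nu^2$.
\item[(iii)] $a=e$, $b=c$ with $a\ne b$: the moment is $\lv\E d^2\rv^2=0$.
\end{itemize}
When all four indices coincide, the moment is $\E\lv d\rv^4=2\nu^2$.

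\textbf{Step 2: Fourier sums.} In case (i) the Fourier factor is $\lv f_{k,a}\rv^2\lv f_{k,c}\rv^2=1$, so the sum over $k$ equals $n$. In case (ii) it is $\lv f_{k,a}\rv^2\lv f_{k,b}\rv^2=1$, again giving $n$. Collecting terms, the expected diagonal entry is
\begin{equation*}
\sum_{c\ne a}Z_{cc}+2Z_{aa}=\tr\lk\Z\rk+Z_{aa},
\end{equation*}
and the expected off-diagonal entry ($a\ne b$) is $Z_{ab}$. Together these give $\E\mathcal{F}\lk\Z\rk=\Z+\tr\lk\Z\rk\I$. The factor $2$ in $\E\lv d\rv^4=2\nu^2$ is exactly what makes the diagonal come out correctly.

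\textbf{Step 3: the main subtlety, pairing (iii).} This is the only delicate point, and it is where the remark about odd $n$ enters. If $\E d^2\neq0$, the Fourier sum for pairing (iii) is $\sum_k f_{k,a}^2\overline{f_{k,b}}^{\,2}=\sum_k e^{4\pi\mathrm{i}k(a-b)/n}$. This vanishes whenever $2(a-b)\not\equiv0\pmod n$, which holds for every $a\ne b$ when $n$ is odd. So for odd $n$ the condition $\E d^2=0$ is unnecessary. For even $n$ we use $\E d^2=0$ to kill this term directly.
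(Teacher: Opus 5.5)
Your proposal is correct and is essentially the standard argument: an entrywise fourth-moment computation combined with Fourier orthogonality, as in the cited \cite[Lemma~3.1]{candes2015cdp}. The paper itself gives no proof and simply defers to that reference and \cite[Lemma~7]{gross2017improved}. Your handling of the $\lvert\E d^2\rvert^2$ pairing, which is killed by $\E d^2=0$ or, for odd $n$, by $n\nmid 2(a-b)$, is exactly what the paper's remark on odd $n$ relies on.
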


\begin{remark}\label{rem:isotropicity}
The near isotropicity identity in Proposition~\ref{prop:isotropicity} is guaranteed by the conditions in Assumption~\ref{ass:mask}.
If the condition $\E d^2=0$ is omitted, the identity still holds whenever $n$ is odd.
\end{remark}

\subsubsection{Concentration Inequalities}

We shall use the following two Bernstein-type inequalities.
The first one is the well-known matrix Bernstein inequality, which controls sums of independent self-adjoint random matrices.

\begin{lemma}[{\cite[Theorem~1.6]{tropp2012user}}]\label{lm:matrix_bernstein}
Let $\pmb{S}_1,\ldots,\pmb{S}_N$ be independent mean-zero self-adjoint random matrices of dimension $n$. Suppose that $\lV\pmb{S}_i\rV_{\mathrm{op}}\le R_1$ almost surely for some finite constant $R_1$ and set
$\sigma_1^2\ge\lV\sum_{i=1}^N\E\pmb{S}_i^2\rV_{\mathrm{op}}$.
Then, for every $t\ge0$,
\begin{equation*}
        \mathbb{P}\lk\lV\sum_{i=1}^N\pmb{S}_i\rV_{\mathrm{op}}\ge t\rk
        \le
        2n\exp\lk-\frac{t^2/2}{\sigma_1^2+R_1t/3}\rk.
\end{equation*}
\end{lemma}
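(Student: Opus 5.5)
We would prove this by the matrix Laplace transform method combined with Lieb's concavity theorem, following Tropp's argument. Set $\pmb{Y}:=\sum_{i=1}^N\pmb{S}_i$. Since $\lV\pmb{Y}\rV_{\mathrm{op}}=\max\{\lambda_{\max}(\pmb{Y}),\lambda_{\max}(-\pmb{Y})\}$, and the family $\{-\pmb{S}_i\}$ satisfies the same hypotheses with the same $R_1$ and $\sigma_1^2$, it suffices to prove the one-sided bound
\begin{equation*}
\mathbb{P}\lk\lambda_{\max}(\pmb{Y})\ge t\rk\le n\exp\lk-\frac{t^2/2}{\sigma_1^2+R_1t/3}\rk .
\end{equation*}
A union bound over the two signs then produces the factor $2n$.

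\textbf{Step 1 (Laplace transform).} For every $\theta>0$, Markov's inequality together with $e^{\theta\lambda_{\max}(\pmb{Y})}=\lambda_{\max}(e^{\theta\pmb{Y}})\le\tr e^{\theta\pmb{Y}}$ gives
\begin{equation*}
\mathbb{P}\lk\lambda_{\max}(\pmb{Y})\ge t\rk\le e^{-\theta t}\,\E\tr\exp(\theta\pmb{Y}).
\end{equation*}

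\textbf{Step 2 (subadditivity of matrix cumulants).} This step carries the real difficulty, because $\exp$ does not factor over non-commuting sums. We would invoke Lieb's theorem: for fixed Hermitian $\pmb{H}$, the map $\pmb{A}\mapsto\tr\exp(\pmb{H}+\log\pmb{A})$ is concave on positive definite matrices. We apply Jensen's inequality to one summand at a time, conditioning on the others and using independence, and then iterate. This yields
\begin{equation*}
\E\tr\exp\Big(\sum_i\theta\pmb{S}_i\Big)\le\tr\exp\Big(\sum_i\log\E e^{\theta\pmb{S}_i}\Big).
\end{equation*}

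\textbf{Step 3 (bounding each mgf).} Define $g(\theta):=(e^{\theta R_1}-\theta R_1-1)/R_1^2$. For real $x\le R_1$ the scalar inequality $e^{\theta x}\le1+\theta x+g(\theta)x^2$ holds, since $(e^{y}-y-1)/y^2$ is increasing in $y$. By the transfer rule for functions of a Hermitian matrix with spectrum in $(-\infty,R_1]$, we get $e^{\theta\pmb{S}_i}\preceq\I+\theta\pmb{S}_i+g(\theta)\pmb{S}_i^2$. Taking expectations and using $\E\pmb{S}_i=0$ gives
\begin{equation*}
\E e^{\theta\pmb{S}_i}\preceq\I+g(\theta)\E\pmb{S}_i^2 .
\end{equation*}
The logarithm is operator monotone and satisfies $\log(1+u)\le u$, so $\log\E e^{\theta\pmb{S}_i}\preceq g(\theta)\E\pmb{S}_i^2$. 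Monotonicity of $\tr\exp$ with respect to $\preceq$ then gives
\begin{equation*}
\tr\exp\Big(\sum_i\log\E e^{\theta\pmb{S}_i}\Big)\le n\exp\lk g(\theta)\sigma_1^2\rk .
\end{equation*}

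\textbf{Step 4 (optimization).} Comparing Taylor series term by term, using $k!\ge2\cdot3^{k-2}$, gives $g(\theta)\le\frac{\theta^2/2}{1-\theta R_1/3}$ for $0<\theta<3/R_1$. Choosing $\theta=t/(\sigma_1^2+R_1t/3)$, which lies in this range, we get
\begin{equation*}
-\theta t+\frac{\theta^2\sigma_1^2/2}{1-\theta R_1/3}=-\frac{t^2/2}{\sigma_1^2+R_1t/3}.
\end{equation*}
This is the one-sided bound stated at the start, and the two-sided statement follows by the union bound.

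The only nonelementary ingredient is Lieb's concavity theorem in Step 2. We would cite it rather than reprove it; everything else consists of scalar inequalities lifted to matrices through the spectral theorem and operator monotonicity.
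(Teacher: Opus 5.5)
Your proposal is correct and is the matrix Laplace-transform argument via Lieb's concavity theorem from Tropp's paper, which the paper cites rather than reproves. The one cosmetic difference is that Tropp's Theorem~1.6 obtains the factor $2n$ through the Hermitian dilation, whereas your union bound over $\pm\pmb{S}_i$ gives the same factor directly, so the two routes are equivalent here. The only loose ends are degenerate cases: your choice of $\theta$ needs $t>0$ and $\sigma_1^2>0$, and $g$ needs $R_1>0$. In each excluded case the bound is trivial, since either the right-hand side is at least $1$ or $\pmb{S}_i=\pmb{0}$ almost surely.
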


The second one is the vector-valued Bernstein inequality, which controls sums of independent mean-zero random vectors in a real Hilbert space.

\begin{lemma}[{\cite[Theorem~12]{gross2011recovering}}]
\label{lm:vector_bernstein}
Let $\px_1,\ldots,\px_N$ be independent mean-zero random vectors in a real Hilbert space $\mathsf{H}$ with norm $\lV\,\cdot\,\rV_{\mathsf{H}}$.  
Suppose that $\lV\px_i\rV_{\mathsf{H}}\le R_2$ almost surely for some finite constant $R_2$ and set
$\sigma^2_2\ge\sum_{i=1}^N\E\lV\px_i\rV^2_{\mathsf{H}}$.
Then, for every $0<t\le\sigma_2^2/R_2$,
\begin{equation*}
        \mathbb{P}\lk\lV\sum_{i=1}^N\px_i\rV_{\mathsf{H}}\ge t\rk
        \le
        \exp\lk-\frac{t^2}{4\sigma_2^2}+\frac{1}{4}\rk.
\end{equation*}
\end{lemma}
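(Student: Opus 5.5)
The plan is to split $\lV S\rV_{\mathsf{H}}$, where $S:=\sum_{i=1}^N\px_i$, into its mean and a deviation from the mean. The mean is controlled by orthogonality in Hilbert space, and the deviation by a Bernstein-type martingale inequality. This is the route of~\cite{gross2011recovering}, which in turn rests on Ledoux--Talagrand-type concentration for norms of sums of independent vectors.

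\emph{Step 1: the expectation.} Since the $\px_i$ are independent and mean zero, the cross terms vanish: $\E\lg\px_i,\px_j\rg_{\mathsf{H}}=0$ for $i\ne j$. Jensen's inequality then gives
\begin{equation*}
\E\lV S\rV_{\mathsf{H}}\le\lk\E\lV S\rV_{\mathsf{H}}^2\rk^{1/2}=\lk\sum_{i=1}^N\E\lV\px_i\rV_{\mathsf{H}}^2\rk^{1/2}\le\sigma_2 .
\end{equation*}
This step also disposes of small $t$. For $t\le\sigma_2$ the right-hand side $\exp\lk-t^2/(4\sigma_2^2)+1/4\rk$ is at least $1$, so the claim is trivial there. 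We may therefore assume $\sigma_2<t\le\sigma_2^2/R_2$.

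\emph{Step 2: concentration of the norm.} Consider the Doob martingale $Z_k:=\E\lz\lV S\rV_{\mathsf{H}}\mid\px_1,\ldots,\px_k\rz$ with increments $\Delta_k:=Z_k-Z_{k-1}$. Let $S^{(k)}:=S-\px_k$; conditionally on the other summands, it does not depend on $\px_k$. We can therefore write
\begin{equation*}
\Delta_k=\E_{>k}\lz g_k\rz-\E_{\ge k}\lz g_k\rz,\qquad g_k:=\lV S\rV_{\mathsf{H}}-\lV S^{(k)}\rV_{\mathsf{H}},
\end{equation*}
where $\E_{>k}$ and $\E_{\ge k}$ integrate out $\px_{k+1},\ldots,\px_N$ and $\px_k,\ldots,\px_N$, respectively. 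The triangle inequality gives $\lv g_k\rv\le\lV\px_k\rV_{\mathsf{H}}$. Consequently $\lv\Delta_k\rv\le 2R_2$, and the conditional variance satisfies $\E\lz\Delta_k^2\mid\px_1,\ldots,\px_{k-1}\rz\le\E\lV\px_k\rV_{\mathsf{H}}^2$, because the variance of a variable is at most its second moment. Summing over $k$, the predictable quadratic variation is at most $\sigma_2^2$. Freedman's inequality (martingale Bernstein) then yields
\begin{equation*}
\mathbb{P}\lk\lV S\rV_{\mathsf{H}}-\E\lV S\rV_{\mathsf{H}}\ge s\rk\le\exp\lk-\frac{s^2/2}{\sigma_2^2+cR_2s}\rk
\end{equation*}
for an explicit absolute constant $c$.

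\emph{Step 3: combining, and the main obstacle.} Set $s=t-\sigma_2$. The restriction $t\le\sigma_2^2/R_2$ gives $R_2s\le\sigma_2^2$, so the denominator is $\lesssim\sigma_2^2$ and the exponent becomes $-c'(t-\sigma_2)^2/\sigma_2^2$. What remains is to convert this into the stated form $-t^2/(4\sigma_2^2)+1/4$, which means absorbing the shift by $\sigma_2$ into the additive constant $1/4$. I expect this bookkeeping of constants to be the main obstacle. Done naively, with $\lv\Delta_k\rv\le 2R_2$ and the crude variance bound above, the comparison only closes for $t\gtrsim 10\sigma_2$. To close the gap at the stated constants, I would first sharpen the martingale step: use the one-sided bound $g_k\le\lV\px_k\rV_{\mathsf{H}}$ together with the exponential-moment form of Bennett's inequality, $\E e^{\lambda\Delta_k}\le\exp\lk\lambda^2\E\lV\px_k\rV_{\mathsf{H}}^2\,\phi\lk\lambda R_2\rk\rk$. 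I would then optimize in $\lambda$ only over the range $\lambda\le 1/R_2$ allowed by $t\le\sigma_2^2/R_2$, instead of passing through the Bernstein form. If that still falls short, I would follow Gross's original argument, which applies the Ledoux--Talagrand deviation inequality for norms of sums of independent vectors with its explicit constants. As a fallback, the result can simply be quoted as stated, since it is~\cite[Theorem~12]{gross2011recovering}.
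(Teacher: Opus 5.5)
\textbf{There is a gap in Step~3: your route proves the lemma with $8\sigma_2^2$ in place of $4\sigma_2^2$.}

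The paper gives no proof of this lemma; it simply quotes \cite[Theorem~12]{gross2011recovering}. Your Steps~1--2 are essentially the Ledoux--Talagrand martingale argument behind that theorem, and they are correct. What they prove is a centered bound: $\mathbb{P}\bigl(\|S\|_{\mathsf H}\ge\sigma_2+s\bigr)\le\exp(-c\,s^2/\sigma_2^2)$ for $0<s\le\sigma_2^2/R_2$ with $c\ge1/4$. The cited theorem has, as I recall, $c=1/4$; your Freedman step gives $c=3/10$. The inequality $(t-\sigma_2)^2\ge t^2/2-\sigma_2^2$ turns this into $\exp\bigl(-t^2/(8\sigma_2^2)+1/4\bigr)$, which is the form in which Gross's inequality is usually quoted. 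It does not give the constant $4\sigma_2^2$ of the statement.

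So Step~3 is a genuine gap, not bookkeeping, and falling back on the citation does not close it either. The loss is the shift by $\E\|S\|_{\mathsf H}$, which your method can only bound by $\sigma_2$. Consider the worst admissible case $R_2t=\sigma_2^2$, and use your sharpened version: one-sided increments $\Delta_k\le R_2$ (which do hold, since $\E\bigl[g_k\mid(\px_i)_{i\ne k}\bigr]\ge0$ by Jensen) together with Bennett's exponent. Even then you beat $\exp\bigl(-t^2/(4\sigma_2^2)+1/4\bigr)$ only for $t\gtrsim4.2\,\sigma_2$. Chebyshev, $\mathbb{P}(\|S\|_{\mathsf H}\ge t)\le\sigma_2^2/t^2$, covers only $t\lesssim3.2\,\sigma_2$, so a window is left uncovered. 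Restricting $\lambda\le1/R_2$ can only weaken Bennett's bound, so it does not help.

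\textbf{The missing idea is to bound $\|S\|_{\mathsf H}$ directly, not its deviation from the mean.} Pinelis's inequality for Hilbert-space-valued martingales does exactly this. It starts from the one-step estimate
$\E\cosh(\lambda\|x+\px\|)\le\bigl(1+\E(e^{\lambda\|\px\|}-1-\lambda\|\px\|)\bigr)\cosh(\lambda\|x\|)$
for fixed $x$ and mean-zero $\px$. Iterate this over the summands, then apply Markov's inequality with $\cosh(\lambda t)\ge e^{\lambda t}/2$. For $t\le\sigma_2^2/R_2$ this gives
$\mathbb{P}(\|S\|_{\mathsf H}\ge t)\le2\exp\bigl(-t^2/(2\sigma_2^2+2R_2t/3)\bigr)\le2\exp\bigl(-3t^2/(8\sigma_2^2)\bigr)$.

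This bound lies below the target as soon as $t^2\ge8(\log2-\tfrac14)\sigma_2^2\approx3.55\,\sigma_2^2$. Chebyshev handles $\sigma_2\le t\le\sqrt{10}\,\sigma_2$, because $\log y\ge(y-1)/4$ for $1\le y\le10$, and $t\le\sigma_2$ is trivial. These ranges overlap, which proves the lemma exactly as stated.

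Alternatively, the paper uses this lemma only through the unspecified absolute constant $\widetilde c_2$ in Proposition~\ref{prop:t}. The $8\sigma_2^2$ version that your Steps~1--2 already deliver would therefore suffice for everything downstream.
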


\subsubsection{Robust Injectivity and Dual Certificates}

We record the recovery ingredients that turn the construction of an approximate dual certificate, together with a robust injectivity event, into exact recovery.
The following proposition establishes a robust injectivity estimate on the tangent space.

\begin{proposition}[{\cite[Lemma~3.7]{candes2015cdp}},{\cite[Proposition~8]{gross2017improved}}]\label{prop:robust}
Under the random mask model in Assumption~\ref{ass:mask}, with probability at least $1-2n\exp\lk-c\frac{\nu^4L}{M^8}\rk$,
the estimate
\begin{equation}\label{eq:injectivity}
        \frac{1}{\nu^2nL}\lV\mA(\Z)\rV_2^2 
        \ge\frac{1}{4}\lV\Z\rV_F^2.
\end{equation}
holds simultaneously for all $\Z\in T$, where $c>0$ is an absolute constant.
Here, $\mathcal{A}$ is given in~\eqref{eq:sampling} and, $M$ and $\nu$ are the mask parameters specified in Assumption~\ref{ass:mask}.
\end{proposition}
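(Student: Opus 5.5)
The plan is to prove the one-sided bound \eqref{eq:injectivity} by viewing it as the smallest eigenvalue of a sum of $L$ independent positive semidefinite operators on the real vector space $T$, which has dimension at most $2n$. The lower tail is controlled by matrix Bernstein after a per-frequency truncation. Note first that $\frac{1}{\nu^2nL}\lV\mA(\Z)\rV_2^2=\lg \Z,\mathcal{R}(\Z)\rg=\frac1L\sum_{\ell=1}^L \lg\Z,\mathcal{F}_\ell(\Z)\rg$, where $\mathcal{F}_\ell$ is the block operator \eqref{eq:F_operator} built from $\D_\ell$. By Proposition~\ref{prop:isotropicity}, $\E\lg\Z,\mathcal{F}(\Z)\rg=\lV\Z\rV_F^2+\tr(\Z)^2\ge\lV\Z\rV_F^2$. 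Since the summands are nonnegative quadratic forms, it suffices to show that a \emph{smaller} quadratic form already has lower tail $\ge \frac14\lV\Z\rV_F^2$. This lets us discard the large values of the masked Fourier forms, which are the only source of unboundedness.

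\textbf{Step 1 (parametrization and truncation).} Write $\Z\in T$ as $\Z=\x_0\vz^*+\vz\x_0^*$, with the gauge $\operatorname{Im}(\x_0^*\vz)=0$; then $2\lV\vz\rV_2^2\le\lV\Z\rV_F^2\le 4\lV\vz\rV_2^2$. Set $u_k:=\lg\vf_k,\D\x_0\rg$ and $w_k:=\lg\vf_k,\D\vz\rg$. Then $\tr(\Fm_k\Z)=2\operatorname{Re}(\overline{u_k}w_k)$, so
\begin{equation*}
\lg\Z,\mathcal{F}(\Z)\rg=\frac{4}{\nu^2n}\sum_{k=1}^n\lk\operatorname{Re}\,\overline{u_k}w_k\rk^2\ \ge\ \frac{4}{\nu^2n}\sum_{k=1}^n\lk\operatorname{Re}\,\overline{u_k}w_k\rk^2\ind\{\lv u_k\rv^2\le\tau\}=:\lg\Z,\mathcal{G}(\Z)\rg .
\end{equation*}
The truncated operator $\mathcal{G}$ satisfies $0\preceq\mathcal{G}$. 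Using $\sum_k\lv w_k\rv^2=n\lV\D\vz\rV_2^2\le nM^2\lV\vz\rV_2^2$, it also satisfies $\lg\Z,\mathcal{G}(\Z)\rg\le\frac{4\tau M^2}{\nu^2}\lV\vz\rV_2^2\le \frac{2\tau M^2}{\nu^2}\lV\Z\rV_F^2$. Hence $\lV\mathcal{G}|_T\rV_{\mathrm{op}}\le R:=2\tau M^2/\nu^2$, and this bound is dimension-free.

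\textbf{Step 2 (the expectation loss is small).} The truncation loss is $\frac{4}{\nu^2n}\sum_k\E\lz\lv u_k\rv^2\lv w_k\rv^2\ind\{\lv u_k\rv^2>\tau\}\rz$. Bounding each term by Cauchy--Schwarz gives $\lk\E\lv u_k\rv^8\rk^{1/4}\lk\E\lv w_k\rv^8\rk^{1/4}\mathbb{P}(\lv u_k\rv^2>\tau)^{1/2}$. By Lemma~\ref{lm:subgaussian}, the moments are $\lesssim M^4\lV\vz\rV_2^2$ and the tail is $\le 4e^{-\tau/(4M^2)}$. Choosing $\tau= C_0M^2\log(M^4/\nu^2)+C_0M^2$ therefore makes the loss at most $\frac14\lV\vz\rV_2^2\le\frac18\lV\Z\rV_F^2$. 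Consequently $\E\mathcal{G}|_T\succeq\frac78\,\mathrm{Id}_T$.

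\textbf{Step 3 (matrix Bernstein).} Apply Lemma~\ref{lm:matrix_bernstein} on $T$, which is identified with $\mathbb{R}^{\dim T}$ with $\dim T\le 2n$, to $\pmb{S}_\ell:=\E\mathcal{G}|_T-\mathcal{G}_\ell|_T$. The inputs are $\lV\pmb{S}_\ell\rV_{\mathrm{op}}\le R$ and $\lV\sum_\ell\E\pmb{S}_\ell^2\rV_{\mathrm{op}}\le LR^2$, the latter using only $0\preceq\mathcal{G}_\ell\preceq R$. Taking $t=\frac58L$ gives
\begin{equation*}
\lambda_{\min}\Big(\sum_\ell\mathcal{G}_\ell|_T\Big)\ge\frac{7}{8}L-\frac58 L=\frac14 L
\end{equation*}
with failure probability at most $2n\exp(-cL/R^2)$. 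Since $R^2\asymp M^4\tau^2/\nu^4$, this is of the form $2n\exp\lk-c\,\nu^4L/M^8\rk$, up to the logarithmic factor in $\tau$. That factor can be absorbed, or avoided by the cruder choice of $\tau$ used in the cited references. On this event, $\frac{1}{\nu^2nL}\lV\mA(\Z)\rV_2^2\ge\frac1L\sum_\ell\lg\Z,\mathcal{G}_\ell(\Z)\rg\ge\frac14\lV\Z\rV_F^2$ holds simultaneously for all $\Z\in T$.

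\textbf{Main obstacle.} The delicate point is Step~2 together with the constants. The truncation level $\tau$ must be independent of $n$, so that $R$ carries no $\log n$; otherwise matrix Bernstein would force $L\gtrsim\log^2n$. This works only because the loss is controlled in expectation per frequency through the sub-Gaussian tails of $u_k$, rather than through $\max_k\lv u_k\rv$. One also has to check that the gauge choice keeps $\lV\vz\rV_2^2$ and $\lV\Z\rV_F^2$ comparable, so that the bounds of Steps~1 and~2 transfer to the Frobenius norm on $T$.
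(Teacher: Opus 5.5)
The paper gives no proof of this proposition. It defers to the argument of \cite[Proposition~8]{gross2017improved} combined with Proposition~\ref{prop:isotropicity}, and your route (truncation plus matrix concentration on $T$) is the same kind of argument. Steps~1 and~2 are correct:
\begin{itemize}
\item the truncation on $\lv\lg\vf_k,\D_\ell\x_0\rg\rv$ does not depend on $\Z$, so it defines a single PSD operator on $T$ and gives a bound uniform over $T$;
\item the gauge gives $2\lV\vz\rV_2^2\le\lV\Z\rV_F^2\le4\lV\vz\rV_2^2$;
\item Fourier orthogonality gives a dimension-free radius;
\item $\tau\asymp M^2(1+\log(M^4/\nu^2))$ makes the expected loss at most $\frac18\lV\Z\rV_F^2$.
\end{itemize}

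The gap is in Step~3, which does not deliver the stated dependence on $M$ and $\nu$. You use the crude variance proxy $\sigma_1^2=LR^2$ with $R=2\tau M^2/\nu^2\asymp\frac{M^4}{\nu^2}\bigl(1+\log\frac{M^4}{\nu^2}\bigr)$. This gives the exponent $cL\nu^4/\bigl(M^8(1+\log\frac{M^4}{\nu^2})^2\bigr)$. That loss cannot be absorbed into an absolute constant. Assumption~\ref{ass:mask} only yields $M^4\ge\E\lv d\rv^4=2\nu^2$, and $M^4/\nu^2$ can be arbitrarily large. Nor can $\tau$ be lowered, since your Step~2 needs $\tau\gtrsim M^2\log(M^4/\nu^2)$. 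The weaker exponent would still suffice for Theorem~\ref{thm:main}, whose constant already carries $\log^2(M^4/\nu^2)$. It does not, however, prove the proposition as stated.

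The repair is to use positivity in the variance, not only in the radius. Since $0\preceq\mathcal{G}_\ell|_T\preceq R$ and $\mathcal{G}_\ell|_T\preceq\mathcal{F}_\ell|_T$, you have $\E\pmb{S}_\ell^2\preceq\E(\mathcal{G}_\ell|_T)^2\preceq R\,\E\mathcal{G}_\ell|_T\preceq R\,\E\mathcal{F}|_T\preceq3R\,\mathrm{Id}_T$. The last inequality uses $\lg\Z,\E\mathcal{F}(\Z)\rg=\lV\Z\rV_F^2+\tr(\Z)^2\le3\lV\Z\rV_F^2$ for $\Z$ of rank at most two.

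Hence $\sigma_1^2\le3LR$, and matrix Bernstein with $t=\frac58L$ gives an exponent $\gtrsim L/R$; the matrix Chernoff lower tail gives the same. With $x:=M^4/\nu^2\ge2$ you have $R\lesssim x(1+\log x)\le x^2$. So the exponent is $\gtrsim\nu^4L/M^8$ with an absolute constant, as claimed.

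A minor point: $\dim_{\mathbb{R}}T=2n-1$, so the Bernstein prefactor is $2(2n-1)$ rather than $2n$. This is harmless. Indeed $2(2n-1)e^{-y}\le2ne^{-y/2}$ whenever the right-hand side is below $1$, so halving $c$ suffices.
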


\begin{remark}\label{rem:robust}
In particular, Proposition~\ref{prop:robust} provides the desired uniform lower bound on $T$ with high probability whenever $L=\mathcal{O}\lk\log n\rk$.
This estimate was first established in~\cite[Lemma~3.7]{candes2015cdp} with $L=\mathcal{O}\lk\log^3 n\rk$.
For real-valued mask ensembles,~\cite[Proposition~8]{gross2017improved} reduced the required number of masks to $L=\mathcal{O}\lk\log n\rk$.
Combining the near isotropicity identity in Proposition~\ref{prop:isotropicity} with the argument of~\cite[Proposition~8]{gross2017improved} yields Proposition~\ref{prop:robust} under the random mask model in Assumption~\ref{ass:mask}.
Since the proof follows the same lines, we omit the details.
\end{remark}

The robust injectivity estimate provides the required control on perturbations in the tangent space.
To rule out arbitrary perturbations in the feasible set of the PhaseLift program~\eqref{eq:PhaseLift}, we additionally require an approximate dual certificate.

\begin{definition}\label{def:dual}
A matrix $\Y\in\mathcal{H}_n$ is an approximate dual
certificate if
\begin{equation*}
        \Y\in \operatorname{range}\lk\mA^*\rk+\operatorname{span}\left\{\I\right\},
\end{equation*}
and
\begin{equation*}
        \lV\Y_T-\X_0\rV_F\le\frac{\nu}{4M^2\sqrt{n}},
        \qquad
        \lV\Y_{T^\perp}\rV_{\mathrm{op}}\le\frac{1}{2}.
\end{equation*}
\end{definition}

Finally, the following deterministic criterion shows that the robust injectivity estimate, together with an approximate dual certificate, guarantees uniqueness of the feasible solution.

\begin{proposition}[{\cite[Proposition~12]{gross2017improved}}]\label{prop:recovery}
Suppose that the signal energy $\lV\x_0\rV_2^2$ is known.
Assume further that~\eqref{eq:injectivity} in Proposition~\ref{prop:robust} holds and that there exists an approximate dual certificate $\Y$ in the sense of Definition~\ref{def:dual}.
Then $\X_0=\x_0\x_0^*$ is the unique feasible point of the PhaseLift feasibility program~\eqref{eq:PhaseLift}.
\end{proposition}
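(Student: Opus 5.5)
The plan is the standard descent-cone argument. Take any feasible $\Z$ and write $\pmb{H}:=\Z-\X_0$. Then $\mA(\pmb{H})=0$, $\tr(\pmb{H})=0$ and $\X_0+\pmb{H}\succeq0$, and the goal is to show $\pmb{H}=0$.

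\textbf{Step 1 (structure of $\pmb{H}$).} First, by~\eqref{eq:proj0}, $\pmb{H}_{T^\perp}=(\I-\X_0)\Z(\I-\X_0)\succeq 0$, because $(\I-\X_0)\X_0(\I-\X_0)=0$; hence $\lV\pmb{H}_{T^\perp}\rV_*=\tr(\pmb{H}_{T^\perp})$. Second, \eqref{eq:proj0} gives $\tr(\pmb{H}_T)=\x_0^*\pmb{H}\x_0=\lg\X_0,\pmb{H}_T\rg$. Third, $\tr(\pmb{H})=0$ yields
\begin{equation*}
\lg \X_0,\pmb{H}_T\rg=-\lV\pmb{H}_{T^\perp}\rV_*.
\end{equation*}

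\textbf{Step 2 (use of the certificate).} Write $\Y=\mA^*(\pmb{\lambda})+\alpha\I$. Then $\lg\Y,\pmb{H}\rg=\lg\pmb{\lambda},\mA(\pmb{H})\rg+\alpha\tr(\pmb{H})=0$. Splitting over $T$ and $T^\perp$ and inserting $\X_0$ gives
\begin{equation*}
0=\lg \Y_T-\X_0,\pmb{H}_T\rg+\lg\X_0,\pmb{H}_T\rg+\lg\Y_{T^\perp},\pmb{H}_{T^\perp}\rg
\le \frac{\nu}{4M^2\sqrt n}\lV\pmb{H}_T\rV_F-\lV\pmb{H}_{T^\perp}\rV_*+\frac12\lV\pmb{H}_{T^\perp}\rV_*,
\end{equation*}
so that $\lV\pmb{H}_{T^\perp}\rV_*\le \frac{\nu}{2M^2\sqrt n}\lV\pmb{H}_T\rV_F$.

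\textbf{Step 3 (use of injectivity).} Since $\mA(\pmb{H}_T)=-\mA(\pmb{H}_{T^\perp})$, estimate~\eqref{eq:injectivity} gives
\begin{equation*}
\tfrac14\lV\pmb{H}_T\rV_F^2\le \tfrac{1}{\nu^2nL}\lV\mA(\pmb{H}_{T^\perp})\rV_2^2.
\end{equation*}
To bound the right-hand side, set $\W:=\pmb{H}_{T^\perp}\succeq0$. All entries $\tr(\Fm_{k,\ell}\W)$ are nonnegative. By Parseval,
\begin{equation*}
\sum_k\tr(\Fm_{k,\ell}\W)=n\tr(\D_\ell\W\D_\ell^*)\le nM^2\tr\W .
\end{equation*}
Since every Fourier entry has modulus one, each single entry is at most $nM^2\tr\W$. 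Using $\lV\mA(\W)\rV_2^2\le(\max\text{entry})\cdot(\text{sum of entries})$, we get
\begin{equation*}
\lV\mA(\W)\rV_2^2\le n^2LM^4\lV\W\rV_*^2 .
\end{equation*}
Combining this with Step 2 yields $\frac14\lV\pmb{H}_T\rV_F^2\le \frac{nM^4}{\nu^2}\cdot\frac{\nu^2}{4M^4n}\lV\pmb{H}_T\rV_F^2$.

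\textbf{Main obstacle.} The last chain is exactly borderline: it closes with equality rather than strictly, and this constant bookkeeping is the delicate point. I would first look for a small slack. Candidates are the strict inequality $\frac12<1$ in the bound on $\lg\Y_{T^\perp},\pmb{H}_{T^\perp}\rg$, or the observation that equality in the max-times-sum bound forces all but one frequency per mask to vanish, which is incompatible with Cauchy--Schwarz equality $\D_\ell\xv\propto\vf_k$ simultaneously for all masks. Either should force $\pmb{H}_T=0$. If this fails, I would tighten the definition's constant slightly, since any factor below $1$ suffices.

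With $\pmb{H}_T=0$ in hand, Step 2 gives $\pmb{H}_{T^\perp}=0$, hence $\pmb{H}=0$ and $\Z=\X_0$. The degenerate case $\pmb{H}_{T^\perp}=0$ is immediate, since then $\mA(\pmb{H}_T)=0$ and injectivity gives $\pmb{H}_T=0$.
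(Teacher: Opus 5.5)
\textbf{There is a genuine gap at the final step.} Your Steps~1--3 are correct; they are the standard descent argument. The paper gives no proof of its own here and imports the statement from Gross--Krahmer--Kueng. With the constants $\frac{\nu}{4M^2\sqrt n}$, $\frac12$ and $\frac14$, your chain yields only $\frac14\lV\pmb{H}_T\rV_F^2\le\frac14\lV\pmb{H}_T\rV_F^2$, so $\pmb{H}_T=\pmb{0}$ is not established. None of your three proposed remedies closes this:
\begin{itemize}
\item The bound $\lV\Y_{T^\perp}\rV_{\mathrm{op}}\le\frac12$ offers no further slack. It is exactly what produced the factor $2$ in Step~2, and $\lg\Y_{T^\perp},\pmb{H}_{T^\perp}\rg=\frac12\lV\pmb{H}_{T^\perp}\rV_*$ can occur.
\item Shrinking the constant in Definition~\ref{def:dual} proves a different statement.
\item Simultaneous equality in your max-times-sum bound is \emph{not} impossible on its own. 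For $\D_\ell=M\operatorname{diag}(\vf_{k_\ell})$ and $\W\propto\pmb{1}\pmb{1}^*$ (with $\x_0\perp\pmb{1}$, so $\W\in T^\perp$), every mask attains it.
\end{itemize}

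\textbf{The missing step is to exclude the equality case using \eqref{eq:injectivity}.}
\begin{enumerate}
\item If $\W:=\pmb{H}_{T^\perp}\neq\pmb{0}$, every inequality in your chain is an equality. In particular $\lV\mA(\W)\rV_2^2=Ln^2M^4(\tr\W)^2$.
\item For each $\ell$ this forces $\tr(\D_\ell^*\D_\ell\W)=M^2\tr\W$ and exactly one nonzero entry, at some $k_\ell$. Since $\D_\ell\W\D_\ell^*\succeq\pmb{0}$ then annihilates every $\vf_k$ with $k\neq k_\ell$, you get $\D_\ell\W\D_\ell^*=\frac{M^2\tr\W}{n}\vf_{k_\ell}\vf_{k_\ell}^*$.
\item The diagonal gives $|d_{\ell j}|^2\W_{jj}=\frac{M^2\tr\W}{n}>0$ for every $j$. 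Combined with the trace identity, this yields $|d_{\ell j}|=M$ for all $j,\ell$.
\item Thus every $\D_\ell$ is invertible and $\W$ has rank one. Comparing the two rank-one expressions for $\W$ from masks $\ell$ and $1$ gives $\D_\ell=\beta_\ell\operatorname{diag}(\vf_{k_\ell}\odot\overline{\vf_{k_1}})\D_1$ with $|\beta_\ell|=1$.
\item Entrywise products of Fourier vectors are Fourier vectors, so $\Fm_{k,\ell}=\Fm_{\sigma_\ell(k),1}$ for a cyclic shift $\sigma_\ell$ of the frequencies. Hence $\mA$ is determined by the $n$ real functionals $\Z\mapsto\tr(\Fm_{k,1}\Z)$.
\item Since $\dim_{\mathbb{R}}T=2n-1>n$ for $n\ge2$, some nonzero $\Z\in T$ has $\mA(\Z)=\pmb{0}$. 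This contradicts \eqref{eq:injectivity}.
\end{enumerate}
Hence $\W=\pmb{0}$, so $\mA(\pmb{H}_T)=\pmb{0}$, and injectivity gives $\pmb{H}_T=\pmb{0}$. With this equality-case argument added, your proof establishes the proposition with the stated constants.
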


\subsection{Adaptive Truncated Operators}\label{subsec:adaptive}

The following adaptive truncation is designed for the iterative construction of the approximate dual certificate. 
At each stage, it discards those measurements whose interaction with the current residual is unusually large.
In the earlier CDP analyses~\cite{candes2015cdp,gross2017improved}, the cutoff level is chosen at a fixed logarithmic scale throughout the golfing construction.
Here we use a stage-adaptive truncation: for each batch, the current direction $\Z$ is fixed, and the truncation level is controlled by the parameter $\tau$.

For a fixed nonzero $\Z\in T$ and $\tau\ge1$, define the truncation events associated with an individual mask and with the $\ell$-th mask by
\begin{equation*}
\left\{
\begin{aligned}
        U_k^\tau\lk\Z\rk
        &:=
        \bigl\{\lv\tr\lk\Fm_k\Z\rk\rv \le K_0 M^2\tau\lV\Z\rV_F \bigr\},  \\
        U_{k,\ell}^{\tau}\lk\Z\rk
        &:=
        \bigl\{\lv\tr\lk\Fm_{k,\ell}\Z\rk\rv\le K_0 M^2\tau\lV\Z\rV_F\bigr\}
\end{aligned}
\right.
\end{equation*}
where $K_0>0$ is a sufficiently large numerical constant.

The adaptively truncated operator for an individual mask is defined by 
\begin{equation} 
\mathcal{F}_\tau^{\Z}\lk\W\rk
:= 
\frac{1}{\nu^2 n} \sum_{k=1}^n \ind_{U_k^\tau\lk\Z\rk} \Fm_k \tr\lk\Fm_k\W\rk. 
\end{equation}
For the $\ell$-th mask, we analogously define
\begin{equation*} 
\mathcal{F}_{\tau,\ell}^{\Z}(\W) 
:= 
\frac{1}{\nu^2 n} \sum_{k=1}^n \ind_{U_{k,\ell}^\tau\lk\Z\rk} \Fm_{k,\ell}\tr\lk\Fm_{k,\ell}\W\rk. 
\end{equation*} 
The associated truncated measurement operator formed from $L$ masks is defined by 
\begin{equation} 
\mathcal{R}_\tau^{\Z}\lk\W\rk 
:= 
\frac{1}{L}\sum_{\ell=1}^L \mathcal{F}_{\tau,\ell}^{\Z}\lk\W\rk
=\frac{1}{\nu^2 nL} \sum_{\ell=1}^L \sum_{k=1}^n \ind_{U_{k,\ell}^\tau\lk\Z\rk} \Fm_{k,\ell}\tr\lk\Fm_{k,\ell}\W\rk. 
\end{equation} 
%As $\tau\to\infty$, the truncation indicators converge to one, and hence $\mathcal{R}_\tau^{\Z}\lk\W\rk\to\mathcal{R}\lk\W\rk$.

\subsection{Bias of the Adaptive Truncation}\label{subsec:Bias}

We now estimate the bias introduced by the adaptive truncation. 
%For a fixed direction $\Z\in T$, this reduces to controlling the contribution of the discarded measurements,
%$\E\lz\lk\mathcal{F}-\mathcal{F}_\tau^{\Z}\rk\lk\Z\rk\rz$. 
The following proposition shows that, for a fixed direction $\Z\in T$, the truncation bias
$\lV\E\lz\lk\mathcal{F}-\mathcal{F}_\tau^{\Z}\rk\lk\Z\rk\rz\rV_{\mathrm{op}}$ is exponentially small in the truncation parameter $\tau$.

\begin{proposition}\label{prop:bias}
Suppose that the random masks satisfy Assumption~\ref{ass:mask} with parameters $M$ and $\nu$.
Let $\Z\in T$, and $\tau\ge1$.  
If the numerical constant $K_0>0$ in the truncation events is chosen sufficiently large,
then there exist an absolute constant $C_0>0$ such that
\begin{equation}\label{eq:EofF}
        \lV\E\lz\lk\mathcal{F}-\mathcal{F}_\tau^{\Z}\rk\lk\Z\rk\rz\rV_{\mathrm{op}}
        \le
        C_0\frac{M^4}{\nu^2}e^{-\tau}\lV\Z\rV_F.
\end{equation}
Consequently,
\begin{equation*}
\left\{
\begin{aligned}
& \lV\pmb{P}_{T^\perp}\E\lz\lk\mathcal{F}-\mathcal{F}_\tau^{\Z}\rk\lk\Z\rk\rz \rV_{\mathrm{op}}
        \le
        C_0\frac{M^4}{\nu^2}e^{-\tau}\lV\Z\rV_F,\\[6pt]
&\lV\pmb{P}_T\E\lz\lk\mathcal{F}-\mathcal{F}_\tau^{\Z}\rk\lk\Z\rk\rz\rV_F
        \le
        C_0\frac{M^4}{\nu^2}e^{-\tau}\lV\Z\rV_F.
\end{aligned}
\right.
\end{equation*}
\end{proposition}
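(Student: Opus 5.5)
The quantity to control is the contribution of the discarded measurements. By definition,
\begin{equation*}
\lk\mathcal{F}-\mathcal{F}_\tau^{\Z}\rk\lk\Z\rk=\frac{1}{\nu^2 n}\sum_{k=1}^n \ind_{(U_k^\tau(\Z))^c}\,\Fm_k\tr\lk\Fm_k\Z\rk .
\end{equation*}
The plan is to bound the operator norm of its expectation by duality. We test against unit vectors $\xu\in\C^n$:
\begin{equation*}
\lv\xu^*\E\lz\lk\mathcal{F}-\mathcal{F}_\tau^{\Z}\rk\lk\Z\rk\rz\xu\rv\le\frac{1}{\nu^2 n}\sum_{k=1}^n \E\lz\ind_{(U_k^\tau)^c}\,\lv\lg\vf_k,\D\xu\rg\rv^2\,\lv\tr\lk\Fm_k\Z\rk\rv\rz ,
\end{equation*}
where we used $\xu^*\Fm_k\xu=\lv\lg\vf_k,\D\xu\rg\rv^2$. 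Since the expectation is Hermitian, the supremum over $\xu$ of this quadratic form equals the operator norm. It therefore suffices to show that each of the $n$ summands is at most $C M^4 e^{-\tau}\lV\Z\rV_F$. The factor $1/n$ then cancels the sum, which yields~\eqref{eq:EofF}.

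For a single $k$, use the structure of $T$. Write $\Z=\x_0\vz^*+\vz\x_0^*$ and, after subtracting a multiple of $\x_0$ (which changes nothing), assume $\vz\perp\x_0$. Then $\lV\Z\rV_F^2=2\lV\vz\rV_2^2+\ldots$, which is comparable to $\lV\vz\rV_2^2$ up to constants. Moreover,
\begin{equation*}
\tr\lk\Fm_k\Z\rk=2\operatorname{Re}\lk\lg\vf_k,\D\x_0\rg\overline{\lg\vf_k,\D\vz\rg}\rk,
\end{equation*}
so $\lv\tr\lk\Fm_k\Z\rk\rv\le 2\lv a\rv\lv b\rv$ with $a=\lg\vf_k,\D\x_0\rg$ and $b=\lg\vf_k,\D\vz\rg$. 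By Lemma~\ref{lm:subgaussian}, $a$ and $b$ are sub-Gaussian at scales $M$ and $M\lV\vz\rV_2$, so their product is sub-exponential. This gives the tail bound
\begin{equation*}
\mathbb{P}\lk\lv\tr\lk\Fm_k\Z\rk\rv> K_0M^2\tau\lV\Z\rV_F\rk\le 8\exp\lk-c K_0\tau\rk .
\end{equation*}
To see it, note that the event forces $\lv a\rv\ge s M$ or $\lv b\rv\ge s M\lV\vz\rV_2$ with $s^2\asymp K_0\tau$, and apply~\eqref{eq:sub_P} to each.

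Next apply Hölder to the summand: it is at most
\begin{equation*}
\mathbb{P}\lk(U_k^\tau)^c\rk^{1/2}\lk\E\lv\lg\vf_k,\D\xu\rg\rv^8\rk^{1/4}\lk\E\lv\tr\lk\Fm_k\Z\rk\rv^4\rk^{1/4}.
\end{equation*}
By~\eqref{eq:sub_E}, the second factor is $\lesssim M^2$. Applying Cauchy--Schwarz to $\lv a\rv\lv b\rv$ together with~\eqref{eq:sub_E} gives $\lk\E\lv\tr\lk\Fm_k\Z\rk\rv^4\rk^{1/4}\lesssim M^2\lV\Z\rV_F$. Combining these, the summand is $\lesssim M^4 e^{-cK_0\tau/2}\lV\Z\rV_F$. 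Choosing $K_0$ large enough that $cK_0/2\ge1$ gives $e^{-\tau}$, and multiplying by $1/\nu^2$ gives the claim with an absolute $C_0$. The main point needing care is that the tail estimate is stated at the scale $\lV\Z\rV_F$ rather than through a logarithmic factor. This is exactly why we exploit the rank-two structure of $\Z\in T$, where the product of two sub-Gaussians is sub-exponential with no dimensional loss. For general Hermitian $\Z$ that step would fail.

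The consequences follow from the projection formulas~\eqref{eq:proj0}. First, $\pmb{P}_{T^\perp}\W=(\I-\X_0)\W(\I-\X_0)$ is a compression by a projector, so it does not increase the operator norm. Second, $\pmb{P}_T\W$ has rank at most two, so $\lV\pmb{P}_T\W\rV_F\le\sqrt2\lV\pmb{P}_T\W\rV_{\mathrm{op}}$. Also, $\pmb{P}_T\W=\W-(\I-\X_0)\W(\I-\X_0)$ has operator norm at most $2\lV\W\rV_{\mathrm{op}}$. Absorbing these factors of $2\sqrt2$ into $C_0$ gives both displayed bounds.
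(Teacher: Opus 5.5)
Your proof is correct up to one fixable slip, and it follows the paper's argument. You use the same reduction to the quadratic forms $\xu^*(\cdot)\xu$ of a Hermitian matrix. You use the same H\"older split into $\mathbb{P}(U_k^{\tau,c})^{1/2}$, an eighth moment of $\lg\vf_k,\D\xu\rg$ and a fourth moment of $\tr(\Fm_k\Z)$. You also use~\eqref{eq:proj0} the same way for the two consequences.

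The only real difference is how the rank-two structure enters. The paper uses the spectral form $\Z=\lambda_1\xv_1\xv_1^*+\lambda_2\xv_2\xv_2^*$ with $\lambda_1^2+\lambda_2^2=\|\Z\|_F^2$. Then $\tr(\Fm_k\Z)$ is a weighted sum of two squared sub-Gaussian forms, and the scale $\|\Z\|_F$ appears automatically. Your product bound $|\tr(\Fm_k\Z)|\le 2|a||b|$ instead needs $\|\vz\|_2\lesssim\|\Z\|_F$, and that is where the slip lies.

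Assuming $\vz\perp\x_0$ is not without loss of generality. Replacing $\vz$ by $\vz-\alpha\x_0$ changes $\Z$ by $-2\operatorname{Re}(\alpha)\X_0$. For instance, $\Z=\X_0\in T$ has no representative with $\vz\perp\x_0$. Only purely imaginary multiples of $\x_0$ can be removed for free. Use them to make $\x_0^*\vz$ real; then $\|\Z\|_F^2=2\|\vz\|_2^2+2(\x_0^*\vz)^2\ge 2\|\vz\|_2^2$, which is all your tail and moment estimates need.

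Your explicit requirement $cK_0/2\ge1$ properly accounts for the square root that H\"older places on $\mathbb{P}(U_k^{\tau,c})$. The paper's written proof records the tail as $8e^{-\tau}$ and then asserts $e^{-\tau}$ after that square root. That tacitly requires enlarging $K_0$ so the tail is of order $e^{-2\tau}$, which your version makes explicit.
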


\begin{proof}
By homogeneity, it suffices to prove the estimate under the normalization $\lV\Z\rV_F=1$.
Define $U_k^{\tau,c}\lk\Z\rk:=\{\lv\tr\lk\Fm_k\Z\rk\rv > K_0M^2\tau\}$.
For any unit vector $\xu\in\C^n$, we have\begin{equation}\label{eq:quad}
\begin{aligned}
        &\lv\xu^*\E\lz\lk\mathcal{F}-\mathcal{F}_\tau^{\Z}\rk\lk\Z\rk\rz\xu\rv\\
        &\quad\quad\quad\le
        \frac{1}{\nu^2 n}
        \sum_{k=1}^n
        \E\lz\lv\tr\lk\Fm_k\Z\rk\rv\cdot\lv\lg \D^*\vf_k,\xu\rg\rv^2\cdot\ind_{U_k^{\tau,c}\lk\Z\rk}\rz.
\end{aligned}
\end{equation}

Since $\Z\in T$, the matrix $\Z$ is Hermitian and has rank at most two. 
Hence we write
\begin{equation*}
\Z=\lambda_1\xv_1\xv_1^*+\lambda_2\xv_2\xv_2^*,
\end{equation*}
where $\lambda_1,\lambda_2\in\mathbb R$, $\lambda_1^2+\lambda_2^2=1$, and $\xv_1,\xv_2$ are orthonormal.
Consequently,
\begin{equation*}
\tr\lk\Fm_k\Z\rk
=
\lambda_1\lv\lg \vf_k,\D\xv_1\rg\rv^2+\lambda_2\lv\lg \vf_k,\D\xv_2\rg\rv^2.
\end{equation*}
By Lemma~\ref{lm:subgaussian}, we obtain
\begin{equation*} 
\mathbb{P}\lk \lv\lg \vf_k,\D\xv_r\rg\rv^2 
        > 4 M^2\tau \rk \le 4e^{-\tau},\qquad r=1,2.
\end{equation*}
Since $\lv\lambda_1\rv+\lv\lambda_2\rv\le \sqrt2$, choosing $K_0$ sufficiently large gives
\begin{equation}\label{eq:tail}
\E\lz\ind_{U_k^{\tau,c}\lk\Z\rk}\rz
=\mathbb P\lk\lv\tr\lk\Fm_k\Z\rk\rv>K_0M^2\tau\rk
\le
8e^{-\tau}.
\end{equation}
Moreover, Lemma~\ref{lm:subgaussian} with $p=8$ gives
\begin{equation}\label{eq:8}
\lk\E\lv\lg \vf_k,\D\xv_r\rg\rv^8\rk^{1/4}
\le
64M^2,\qquad r=1,2.
\end{equation}
Hence, by the triangle inequality in $L^4$,
\begin{equation}\label{eq:4}
\lk\E\lv\tr(\Fm_k\Z)\rv^4\rk^{1/4}
\le
\sum_{r=1}^2\lv\lambda_r\rv\lk\E\lv\lg \vf_k,\D\xv_r\rg\rv^8\rk^{1/4}
\le
128M^2.
\end{equation}

By Hölder's inequality, we obtain
\begin{equation*}
\begin{aligned}
&\E\lz\lv\tr\lk\Fm_k\Z\rk\rv\cdot\lv\lg \vf_k,\D\xu\rg\rv^2
\cdot\ind_{U_k^{\tau,c}\lk\Z\rk}\rz   \\                                                    
&\qquad\quad\le
\lk\E\lz\lv\tr\lk\Fm_k\Z\rk\rv^2\cdot\lv\lg \vf_k,\D\xu\rg\rv^4\rz\rk^{1/2}
\cdot\E\lz\ind_{U_k^{\tau,c}\lk\Z\rk}\rz^{1/2}\\
&\qquad\quad\le
\lk\E\lv\tr(\Fm_k\Z)\rv^4\rk^{1/4}
\cdot\lk\E\lv\lg \vf_k,\D\xu\rg\rv^8\rk^{1/4} 
\cdot\E\lz\ind_{U_k^{\tau,c}\lk\Z\rk}\rz^{1/2} \\              
&\qquad\quad\le
C_0 M^4e^{-\tau}.
\end{aligned}
\end{equation*}
In the last line, we used in order~\eqref{eq:4}, \eqref{eq:8}, and~\eqref{eq:tail}.
Substituting this estimate into~\eqref{eq:quad} gives
\begin{equation*}
\lv\xu^*\E\lz\lk\mathcal{F}-\mathcal{F}_\tau^{\Z}\rk\lk\Z\rk\rz\xu\rv
\le
C_0\frac{M^4}{\nu^2}e^{-\tau},
\qquad \lV\xu\rV_2=1.
\end{equation*}
Taking the supremum over all unit vectors $\xu$ gives~\eqref{eq:EofF}.

By the second identity in~\eqref{eq:proj0}, the map $\pmb{P}_{T^\perp}$ is a compression by the orthogonal projector $\I-\X_0$. 
Hence
\begin{equation*}
\lV\pmb{P}_{T^\perp}\W\rV_{\mathrm{op}}
\le
\lV\W\rV_{\mathrm{op}},
\end{equation*}
and the $T^\perp$ estimate follows from~\eqref{eq:EofF}.
Similarly, the first identity in~\eqref{eq:proj0} implies $\lV\pmb{P}_T\W\rV_{\mathrm{op}}\le 3\lV\W\rV_{\mathrm{op}}$. 
Moreover, $\pmb{P}_T\W\in T$ has rank at most two, and therefore
\begin{equation}\label{eq:Tbound}
\lV\pmb{P}_T\W\rV_F
\le
\sqrt2\lV\pmb{P}_T\W\rV_{\mathrm{op}}
\le
3\sqrt{2}\lV\W\rV_{\mathrm{op}}.
\end{equation}
Applying this with $\W=\E\lz\lk\mathcal{F}-\mathcal{F}_\tau^{\Z}\rk\lk\Z\rk\rz$ completes the proof.

\end{proof}

\subsection{Radius and Variance Bounds}\label{subsec:useful}

We first establish the radius and variance bounds needed for the subsequent Bernstein estimates for the adaptive truncated operators. 
The following lemma provides the required radius bounds for the individual summands.

\begin{lemma}\label{lm:R}
Suppose that the random masks satisfy Assumption~\ref{ass:mask} with parameters $M$ and $\nu$.
There exists an absolute constant $C_1>0$ such that, for every fixed $\Z\in T$, every $\tau\ge 1$,
\begin{equation*}
        \lV \mathcal{F}_{\tau}^{\Z}\lk\Z\rk \rV_{\mathrm{op}}
        \le
        C_1\frac{M^4\tau}{\nu^2}\lV\Z\rV_F.
\end{equation*}
Moreover, 
\begin{equation*} 
\left\{ 
\begin{aligned} 
&\lV\mathcal{F}_{\tau}^{\Z}\lk\Z\rk-\E \mathcal{F}_{\tau}^{\Z}\lk\Z\rk  \rV_{\mathrm{op}} 
\le 
C_1\frac{M^4\tau}{\nu^2}\lV\Z\rV_F, \\[4pt] 
&\lV \pmb{P}_T\lk\mathcal{F}_{\tau}^{\Z}\lk\Z\rk-\E \mathcal{F}_{\tau}^{\Z}\lk\Z\rk\rk \rV_F 
\le 
C_1\frac{M^4\tau}{\nu^2}\lV\Z\rV_F. 
\end{aligned} \right.
\end{equation*}
\end{lemma}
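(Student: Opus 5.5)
Normalize $\lV\Z\rV_F=1$. The plan is to bound the operator norm through quadratic forms, using the truncation to control the size of each coefficient and Fourier orthogonality to control the sum over frequencies.

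For a unit vector $\xu\in\C^n$, write
\begin{equation*}
\xu^*\mathcal{F}_\tau^{\Z}\lk\Z\rk\xu=\frac{1}{\nu^2 n}\sum_{k=1}^n \ind_{U_k^\tau\lk\Z\rk}\tr\lk\Fm_k\Z\rk\,\lv\lg \vf_k,\D\xu\rg\rv^2 .
\end{equation*}
On the event $U_k^\tau\lk\Z\rk$ we have $\lv\tr\lk\Fm_k\Z\rk\rv\le K_0M^2\tau$. Fourier orthogonality, i.e. Parseval for the unnormalized DFT columns, gives
\begin{equation*}
\sum_{k=1}^n\lv\lg\vf_k,\D\xu\rg\rv^2=n\lV\D\xu\rV_2^2\le nM^2.
\end{equation*}
Combining these yields
\begin{equation*}
\lv\xu^*\mathcal{F}_\tau^{\Z}\lk\Z\rk\xu\rv\le \frac{1}{\nu^2 n}\,K_0M^2\tau\cdot nM^2=K_0\frac{M^4\tau}{\nu^2}.
\end{equation*}
This holds deterministically for every realization of the mask, not just in expectation. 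Since $\mathcal{F}_\tau^{\Z}\lk\Z\rk$ is Hermitian, taking the supremum over unit $\xu$ gives the first bound with $C_1\ge K_0$.

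For the centered bound, the pointwise estimate also holds for the expectation, since $\lV\E\W\rV_{\mathrm{op}}\le\E\lV\W\rV_{\mathrm{op}}$. The triangle inequality then gives a constant $2K_0$.

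For the $T$-component, use \eqref{eq:Tbound} from the proof of Proposition~\ref{prop:bias}: $\lV\pmb{P}_T\W\rV_F\le3\sqrt2\lV\W\rV_{\mathrm{op}}$, because $\pmb{P}_T\W$ has rank at most two and $\lV\pmb{P}_T\W\rV_{\mathrm{op}}\le3\lV\W\rV_{\mathrm{op}}$. Applying this to $\W=\mathcal{F}_\tau^{\Z}\lk\Z\rk-\E\mathcal{F}_\tau^{\Z}\lk\Z\rk$ gives the constant $6\sqrt2K_0$. Set $C_1:=6\sqrt2K_0$. Since $K_0$ is a fixed numerical constant, $C_1$ is absolute.

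There is no real obstacle here. The only point to be careful about is that the bound must be almost sure, as a radius bound for Bernstein requires. This is why the argument uses Parseval together with $\lv d_j\rv\le M$, and not moment bounds. Note also that $\Fm_k$ is positive semidefinite, so the indicator and coefficient bound can be applied termwise inside the quadratic form without any sign issues.
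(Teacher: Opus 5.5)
Your proposal is correct and follows the paper's proof: your quadratic-form estimate, with the termwise coefficient bound and Parseval $\sum_k|\langle\vf_k,\D\xu\rangle|^2\le nM^2$, is the paper's Loewner sandwich $-a\sum_k\Fm_k\preceq\sum_k c_k\Fm_k\preceq a\sum_k\Fm_k$ with $\|\sum_k\Fm_k\|_{\mathrm{op}}\le nM^2$, written out for a fixed unit vector. The remaining steps are the same as in the paper: Jensen and the triangle inequality for the centered bound, then \eqref{eq:Tbound} for the $T$-component.
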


\begin{proof}
For each $k$, set $c_k:=\ind_{U_{k}^\tau\lk\Z\rk}\tr\lk\Fm_{k}\Z\rk$.
By the definition of $U_k^\tau\lk\Z\rk$, we have $\lv c_k\rv \le a:=K_0M^2\tau\lV\Z\rV_F$.
The orthogonality relations for $\left\{\vf_k\right\}_{k=1}^n$, together with Assumption~\ref{ass:mask} gives
\begin{equation*} 
        \lV \sum_{k=1}^n\Fm_{k}\rV_{\mathrm{op}}=\lV\D^*\lk\sum_{k=1}^n \vf_k\vf_k^*\rk\D\rV_{\mathrm{op}}=n\lV\D^*\D\rV_{\mathrm{op}}\le nM^2.
\end{equation*}
Since $\Fm_{k} \succeq\pmb{0}$ and $\lv c_k\rv\le a$ for every $k$, we have
\begin{equation*} 
        -a\sum_{k=1}^n\Fm_{k}
        \preceq
        \sum_{k=1}^n c_k\Fm_{k}
        \preceq
        a\sum_{k=1}^n\Fm_{k}.
\end{equation*}
It follows that
\begin{equation*}
\begin{aligned}
        \lV\mathcal{F}_{\tau}^{\Z}\lk\Z\rk\rV_{\mathrm{op}}
        =
        \frac{1}{\nu^2 n}\lV\sum_{k=1}^nc_k\Fm_{k}\rV_{\mathrm{op}} 
        \le
        \frac{a}{\nu^2 n}\lV\sum_{k=1}^n\Fm_{k}\rV_{\mathrm{op}}
        \le C_2\frac{M^4\tau}{\nu^2}\lV\Z\rV_F.
\end{aligned}
\end{equation*}
By Jensen's inequality, the same bound holds for $\lV\E \mathcal{F}_{\tau}^{\Z}\lk\Z\rk\rV_{\mathrm{op}}$.
Hence, the triangle inequality then gives
\begin{equation*} 
        \lV \mathcal{F}_{\tau}^{\Z}\lk\Z\rk-\E \mathcal{F}_{\tau}^{\Z}\lk\Z\rk\rV_{\mathrm{op}}
        \le
        \lV \mathcal{F}_{\tau}^{\Z}\lk\Z\rk\rV_{\mathrm{op}}+\lV\E \mathcal{F}_{\tau}^{\Z}\lk\Z\rk\rV_{\mathrm{op}}
        \le
        2C_2\frac{M^4\tau}{\nu^2}\lV\Z\rV_F .
\end{equation*}
Finally, applying ~\eqref{eq:Tbound} to $\W=\mathcal{F}_{\tau}^{\Z}\lk\Z\rk-\E \mathcal{F}_{\tau}^{\Z}\lk\Z\rk$ gives the last estimate.
Enlarging the absolute constant if necessary and denoting it by $C_1$ proves all three estimates.
\end{proof}

The next two lemmas provide the corresponding variance bounds for the individual summands.

\begin{lemma}\label{lm:variance1}
Suppose that the random masks satisfy Assumption~\ref{ass:mask} with parameters
$M$ and $\nu$.
There exists an absolute constant $C_3>0$ such that, for every fixed $\Z\in T$, every $\tau\ge1$
\begin{equation}\label{eq:variance1}
\lV \E\lz\lk\mathcal{F}_{\tau}^{\Z}\lk\Z\rk-\E\mathcal{F}_{\tau}^{\Z}\lk\Z\rk\rk^2 \rz\rV_{\mathrm{op}}
        \le
        C_3\frac{M^8}{\nu^4}\lV\Z\rV_F^2.
\end{equation}
\end{lemma}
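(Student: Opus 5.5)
The plan is to drop the centering, write $\mathcal{F}_{\tau}^{\Z}(\Z)$ as a weighted sum of rank-one matrices, and bound its second moment with a Cauchy--Schwarz argument that uses the Fourier frame bound once and exact fourth moments of the mask. By homogeneity we normalize $\lV\Z\rV_F=1$. Write $\pmb{G}:=\mathcal{F}_{\tau}^{\Z}(\Z)$. Since $\E\lz(\pmb{G}-\E\pmb{G})^2\rz=\E\pmb{G}^2-(\E\pmb{G})^2\preceq\E\pmb{G}^2$ and both sides are positive semidefinite, it suffices to show $\lV\E\pmb{G}^2\rV_{\mathrm{op}}\lesssim M^8/\nu^4$. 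As $\E\pmb{G}^2\succeq 0$, this amounts to bounding $\xu^*\E\pmb{G}^2\xu=\E\lV\pmb{G}\xu\rV_2^2$ uniformly over unit vectors $\xu$.

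Next we factor $\pmb{G}$. Set $\pmb{w}_k:=\D^*\vf_k$, so that $\Fm_k=\pmb{w}_k\pmb{w}_k^*$, and let $\W$ be the matrix with columns $\pmb{w}_k$. With $c_k:=\ind_{U_k^\tau(\Z)}\tr(\Fm_k\Z)$, as in the proof of Lemma~\ref{lm:R}, we have
\begin{equation*}
\pmb{G}=\frac{1}{\nu^2 n}\W\operatorname{diag}(c)\W^*.
\end{equation*}
Moreover $\W\W^*=\D^*\lk\sum_k\vf_k\vf_k^*\rk\D=n\D^*\D$, so $\lV\W\rV_{\mathrm{op}}^2\le nM^2$. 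Therefore
\begin{equation*}
\lV\pmb{G}\xu\rV_2^2\le\frac{M^2}{\nu^4 n}\sum_{k=1}^n\lv c_k\rv^2\lv\lg\vf_k,\D\xu\rg\rv^2.
\end{equation*}
Taking expectations and applying Cauchy--Schwarz gives
\begin{equation*}
\E\lz\lv c_k\rv^2\lv\lg\vf_k,\D\xu\rg\rv^2\rz\le\lk\E\lv\tr(\Fm_k\Z)\rv^4\rk^{1/2}\lk\E\lv\lg\vf_k,\D\xu\rg\rv^4\rk^{1/2}.
\end{equation*}
The first factor is at most $(128M^2)^2$ by~\eqref{eq:4}.

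The main obstacle is the second factor. Using the crude sub-Gaussian bound from Lemma~\ref{lm:subgaussian} would yield $M^{10}/\nu^4$ overall, which overshoots the claimed $M^8/\nu^4$. Instead I would compute the fourth moment exactly. Write $S=\sum_j d_ja_j$ with $a_j=\overline{f_{k,j}}u_j$, so that $\sum_j\lv a_j\rv^2=1$. Expanding with independence and $\E d=0$ gives
\begin{equation*}
\E\lv S\rv^4=\sum_j\lk\E\lv d\rv^4-2\nu^2-\lv\E d^2\rv^2\rk\lv a_j\rv^4+2\nu^2\lk\sum_j\lv a_j\rv^2\rk^2+\lv\E d^2\rv^2\lv\sum_j a_j^2\rv^2 .
\end{equation*}
Now use $\E\lv d\rv^4=2\nu^2$, $\lv\E d^2\rv\le\nu$ (or $\E d^2=0$) and $\sum_j\lv a_j\rv^4\le 1$. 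These give $\E\lv S\rv^4\le 4\nu^2$, so the second factor is at most $2\nu$.

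Summing the $n$ terms, we obtain $\E\lV\pmb{G}\xu\rV_2^2\lesssim\frac{M^2}{\nu^4}\cdot M^4\nu=\frac{M^6}{\nu^3}$. Since $\nu=\E\lv d\rv^2\le M^2$, this is at most $\frac{M^8}{\nu^4}$. Taking the supremum over unit $\xu$ and restoring $\lV\Z\rV_F^2$ by homogeneity yields~\eqref{eq:variance1} with an absolute constant $C_3$. Note that the bound does not depend on $\tau$, because the truncation indicators were simply dropped via $\lv c_k\rv\le\lv\tr(\Fm_k\Z)\rv$.
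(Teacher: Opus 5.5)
Your proof is correct and follows the paper's argument step for step: drop the centering, bound $\E\lV\mathcal{F}_{\tau}^{\Z}\lk\Z\rk\xu\rV_2^2$ via the Fourier frame bound with constant $nM^2$, then apply Cauchy--Schwarz together with~\eqref{eq:4}. The only deviation is your exact fourth-moment computation, which is valid (it even gives the sharper bound $M^6/\nu^3$) but unnecessary. Your stated reason for it is a miscount: the paper applies Lemma~\ref{lm:subgaussian} with $p=4$, which gives $\lk\E\lv\lg\vf_k,\D\xu\rg\rv^4\rk^{1/2}\le 32M^2$, so the total is $M^2\cdot M^4\cdot M^2/\nu^4=M^8/\nu^4$, not $M^{10}/\nu^4$.
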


\begin{proof}
Since $\mathcal{F}_{\tau}^{\Z}\lk\Z\rk$ is Hermitian, the variance identity gives, 
\begin{equation*} 
\E \lz\lk\mathcal{F}_{\tau}^{\Z}\lk\Z\rk - \E \mathcal{F}_{\tau}^{\Z}\lk\Z\rk\rk^2 \rz
= 
\E \lz\mathcal{F}_{\tau}^{\Z}\lk\Z\rk^2\rz-\lk\E \lz\mathcal{F}_{\tau}^{\Z}\lk\Z\rk\rz\rk^2 
\preceq \E \lz\mathcal{F}_{\tau}^{\Z}\lk\Z\rk^2\rz. 
\end{equation*} 
It therefore suffices to prove that
\begin{equation}\label{eq:single}
       \lV\E \lz\mathcal{F}_{\tau}^{\Z}\lk\Z\rk^2\rz\rV_{\mathrm{op}}
       \le
        C_3\frac{M^8}{\nu^4 }\lV\Z\rV_F^2.
\end{equation}

Fix a unit vector $\xu\in\C^n$.  
Then $\xu^* \E \lz\mathcal{F}_{\tau}^{\Z}\lk\Z\rk^2\rz\xu=\E \lV \mathcal{F}_{\tau}^{\Z}\lk\Z\rk \xu\rV_2^2$.
Moreover,
\begin{equation*}
        \mathcal{F}_{\tau}^{\Z}\lk\Z\rk \xu
        =
        \frac{1}{\nu^2n}\sum_{k=1}^n\ind_{U_{k}^{\tau}\lk\Z\rk}\tr\lk\Fm_{k}\Z\rk\,\D^*\vf_k\lg\D^*\vf_k,\xu\rg.
\end{equation*}
For arbitrary coefficients $\{\widetilde{c}_k\}_{k=1}^n$, Assumption~\ref{ass:mask} and Fourier orthogonality give
\begin{equation*}
        \lV\sum_{k=1}^n \widetilde{c}_k\D^*\vf_k\rV_2^2
        \le M^2\lV\sum_{k=1}^n \widetilde{c}_k \vf_k\rV_2^2
        = M^2 n \sum_{k=1}^n\lv \widetilde{c}_k\rv^2.
\end{equation*}
Taking
\begin{equation*}
        \widetilde{c}_k=
        \ind_{U_{k}^{\tau}\lk\Z\rk}
        \tr\lk\Fm_{k}\Z\rk\lg\D^*\vf_k,\xu\rg,
\end{equation*}
it follows that
\begin{equation}\label{eq:Sl}
\begin{aligned}
\E \lV \mathcal{F}_{\tau}^{\Z}\lk\Z\rk  \xu\rV_2^2
        &\le
        \frac{M^2}{\nu^4 n }
        \sum_{k=1}^n
        \E\lz\ind_{U_{k}^{\tau}\lk\Z\rk}\cdot\lv\tr\lk\Fm_{k}\Z\rk\rv^2\cdot\lv\lg\D^*\vf_k,\xu\rg\rv^2\rz \\
        &\le
        \frac{M^2}{\nu^4 n }\sum_{k=1}^n 
        \E\lz\lv\tr\lk\Fm_{k}\Z\rk\rv^2\cdot\lv\lg\D^*\vf_k,\xu\rg\rv^2\rz\\
        &\le
        \frac{M^2}{\nu^4n }\sum_{k=1}^n 
        \lk\E\lv\tr\lk\Fm_{k}\Z\rk\rv^4\rk^{1/2}\cdot\lk\E\lv\lg\D^*\vf_k,\xu\rg\rv^4\rk^{1/2}\\
        &\le C_3 \frac{M^8}{\nu^4 }\lV\Z\rV_F^2.
\end{aligned}
\end{equation}
Here, the second inequality follows from $\ind_{U_k^\tau\lk\Z\rk}\le1$, the third from the Cauchy--Schwarz inequality, and the last from the fourth moment bound~\eqref{eq:4} and
Lemma~\ref{lm:subgaussian} with $p=4$, applied to the unit vector $\xu$.

Since $\E \lz\mathcal{F}_{\tau}^{\Z}\lk\Z\rk^2\rz\succeq\pmb{0}$, taking the supremum over all $\lV\xu\rV_2=1$ proves~\eqref{eq:single}, and hence the desired result.
\end{proof}

\begin{lemma}\label{lm:variance2}
Suppose that the random masks satisfy Assumption~\ref{ass:mask} with parameters $M$ and $\nu$. 
There exists an absolute constant $C_4>0$ such that, for every fixed $\Z\in T$ and every $\tau\ge1$,
\begin{equation*}
       \E\lz\lV\pmb{P}_T\lk \mathcal{F}_{\tau}^{\Z}\lk\Z\rk-\E \mathcal{F}_{\tau}^{\Z}\lk\Z\rk\rk\rV_F^2\rz
        \le
    C_4\frac{M^8}{\nu^4}\lV\Z\rV_F^2.
\end{equation*}
\end{lemma}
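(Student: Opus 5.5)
We aim to bound $\E\lV\pmb{P}_T\mathcal{F}_\tau^{\Z}(\Z)\rV_F^2$ directly. Since $\pmb{P}_T$ is linear, $\pmb{P}_T\E\mathcal{F}_\tau^{\Z}(\Z)=\E\pmb{P}_T\mathcal{F}_\tau^{\Z}(\Z)$. The variance identity in the Hilbert space $(\mathcal{H}_n,\lV\cdot\rV_F)$ therefore gives
\begin{equation*}
\E\lV\pmb{P}_T\lk\mathcal{F}_\tau^{\Z}(\Z)-\E\mathcal{F}_\tau^{\Z}(\Z)\rk\rV_F^2\le\E\lV\pmb{P}_T\mathcal{F}_\tau^{\Z}(\Z)\rV_F^2 .
\end{equation*}
We normalize $\lV\Z\rV_F=1$ and write $\W:=\mathcal{F}_\tau^{\Z}(\Z)$. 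By~\eqref{eq:proj0}, $\pmb{P}_T\W=\X_0\W+\W\X_0-\X_0\W\X_0$, so
\begin{equation*}
\lV\pmb{P}_T\W\rV_F\le 2\lV\W\x_0\rV_2+\lv\x_0^*\W\x_0\rv\le 3\lV\W\x_0\rV_2 .
\end{equation*}
This uses $\lV\X_0\W\rV_F=\lV\x_0\x_0^*\W\rV_F=\lV\W\x_0\rV_2$, together with the same identity for $\W\X_0$ (by Hermitian symmetry of $\W$). It therefore suffices to show $\E\lV\W\x_0\rV_2^2\lesssim M^8/\nu^4$.

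This is exactly the quantity already controlled in the proof of Lemma~\ref{lm:variance1}. There, for any fixed unit vector $\xu$, \eqref{eq:Sl} gives $\E\lV\mathcal{F}_\tau^{\Z}(\Z)\xu\rV_2^2\le C_3 M^8\nu^{-4}$. The ingredients of that bound are the Fourier orthogonality estimate $\lV\sum_k\widetilde c_k\D^*\vf_k\rV_2^2\le M^2n\sum_k\lv\widetilde c_k\rv^2$, dropping the truncation indicator, Cauchy--Schwarz, the fourth moment bound~\eqref{eq:4}, and Lemma~\ref{lm:subgaussian} with $p=4$. Since $\x_0$ is a fixed (deterministic) unit vector, we apply that bound with $\xu=\x_0$. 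This yields $\E\lV\pmb{P}_T\W\rV_F^2\le 9C_3M^8/\nu^4$, and we set $C_4=9C_3$. Homogeneity in $\Z$ (the truncation events scale with $\lV\Z\rV_F$) restores the factor $\lV\Z\rV_F^2$.

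The only point needing care is that the vector $\xu$ in \eqref{eq:Sl} must be deterministic, so that Lemma~\ref{lm:subgaussian} applies to $\lg\D^*\vf_k,\xu\rg$. This holds for $\x_0$. I do not expect a genuine obstacle beyond confirming that the bound $\lV\pmb{P}_T\W\rV_F\lesssim\lV\W\x_0\rV_2$ avoids any operator-norm-to-Frobenius loss of dimension. It does, because $T$-projections only involve $\W\x_0$.
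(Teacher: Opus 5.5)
Your proposal is correct and matches the paper's proof essentially step for step. Like the paper, you drop the centering via the variance identity and use the projection formula to get $\lV\pmb{P}_T\W\rV_F\le 3\lV\W\x_0\rV_2$. You then invoke the vector estimate \eqref{eq:Sl} with the deterministic unit vector $\xu=\x_0$, which yields $C_4=9C_3$.
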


\begin{proof}
Since $\pmb{P}_T$ is linear, the variance identity gives
\begin{equation*} 
\begin{aligned} 
\E\lV\pmb{P}_T\lk \mathcal{F}_{\tau}^{\Z}\lk\Z\rk-\E \mathcal{F}_{\tau}^{\Z}\lk\Z\rk\rk\rV_F^2 
&= 
\E\lV\pmb{P}_T \mathcal{F}_{\tau}^{\Z}\lk\Z\rk\rV_F^2-\lV\E\pmb{P}_T \mathcal{F}_{\tau}^{\Z}\lk\Z\rk\rV_F^2\\
&\le 
\E\lV\pmb{P}_T \mathcal{F}_{\tau}^{\Z}\lk\Z\rk\rV_F^2. 
\end{aligned} 
\end{equation*}
It therefore suffices to bound $\E\lV\pmb{P}_T \mathcal{F}_{\tau}^{\Z}\lk\Z\rk\rV_F^2$. 

For a Hermitian matrix $\W$, set $\y:=\W\x_0$.
By the projection formula~\eqref{eq:proj0},
\begin{equation*}
\pmb{P}_T\W = \x_0\y^*+\y\x_0^* - \lk\x_0^*\W\x_0\rk \x_0\x_0^*.
\end{equation*}
Using $\lV\x_0\rV_2=1$, we obtain
\begin{equation} 
\begin{aligned} 
\lV\pmb{P}_T\W\rV_F 
&\le \lV\x_0\y^*\rV_F + \lV\y\x_0^*\rV_F + \lv\x_0^*\W\x_0\rv\cdot\lV\x_0\x_0^*\rV_F\\ 
&\le 2\lV\y\rV_2 + \lv \lg \x_0,\y\rg\rv \le 3\lV\W\x_0\rV_2. 
\end{aligned} 
\end{equation}
Hence, 
\begin{equation*} 
\E\lV\pmb{P}_T \mathcal{F}_{\tau}^{\Z}\lk\Z\rk\rV_F^2 
\le 
9\E\lV \mathcal{F}_{\tau}^{\Z}\lk\Z\rk\x_0\rV_2^2
\le
C_4\frac{M^8}{\nu^4}\lV\Z\rV_F^2,
\end{equation*} 
where the second inequality follows from the vector estimate~\eqref{eq:Sl} from Lemma~\ref{lm:variance1}, applied with $\xu=\x_0$.
Combining the preceding estimates proves the result.
\end{proof}

\subsection{\texorpdfstring{$T^\perp$}{Tperp} and \texorpdfstring{$T$}{T} Estimates}\label{subsec:two_estimates}

We now combine the bias estimate from Section~\ref{subsec:Bias} with the radius and variance bounds from Section~\ref{subsec:useful} to obtain high probability estimates for the $T^\perp$ and $T$ components of the adaptive truncated operators. 

We begin with the $T^\perp$ component.

\begin{proposition}\label{prop:tperp}
Suppose that the random masks satisfy Assumption~\ref{ass:mask} with parameters $M$ and $\nu$, and let $C_0>0$ be the constant from Proposition~\ref{prop:bias}.
There exists an absolute constant $\widetilde{c}_1>0$ such that the following holds.
For every fixed $\Z\in T$, every $\tau\ge1$, and every $t_1>0$ satisfying
\begin{equation}\label{eq:asmp1}
C_0\frac{M^4}{\nu^2}e^{-\tau}
\le\frac{t_1}{4},
\end{equation}
we have
\begin{equation}
\begin{aligned}
        &\mathbb{P}\lk\lV\pmb{P}_{T^\perp} \lk\mathcal{R}_\tau^{\Z}\lk\Z\rk-\tr\lk\Z\rk\I\rk\rV_{\mathrm{op}}
        \ge t_1\lV\Z\rV_F\rk\\
        &\qquad\qquad\qquad\qquad\qquad\le 2n\exp\lk-\widetilde{c}_1\frac{\nu^4L}{M^8\tau}\min\left\{t_1^2,t_1\right\}\rk.
        \end{aligned}
\end{equation}
\end{proposition}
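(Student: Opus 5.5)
By homogeneity we normalize $\lV\Z\rV_F=1$. The plan is a bias–fluctuation decomposition followed by matrix Bernstein. Since $\Z\in T$, Proposition~\ref{prop:isotropicity} gives $\E\mathcal{F}(\Z)=\Z+\tr\lk\Z\rk\I$, and $\pmb{P}_{T^\perp}\Z=\pmb{0}$. Hence
\begin{equation*}
\pmb{P}_{T^\perp}\lk\mathcal{R}_\tau^{\Z}\lk\Z\rk-\tr\lk\Z\rk\I\rk
=\pmb{P}_{T^\perp}\lk\mathcal{R}_\tau^{\Z}\lk\Z\rk-\E\mathcal{F}_\tau^{\Z}\lk\Z\rk\rk
-\pmb{P}_{T^\perp}\E\lz\lk\mathcal{F}-\mathcal{F}_\tau^{\Z}\rk\lk\Z\rk\rz,
\end{equation*}
using that each $\mathcal{F}_{\tau,\ell}^{\Z}(\Z)$ has the same law as $\mathcal{F}_\tau^{\Z}(\Z)$.

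The bias term is handled by Proposition~\ref{prop:bias}: its operator norm is at most $C_0\frac{M^4}{\nu^2}e^{-\tau}\le t_1/4$ by \eqref{eq:asmp1}. It therefore suffices to bound the probability that the fluctuation term exceeds $3t_1/4$ (or $t_1/2$).

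For the fluctuation term, write it as $\frac1L\sum_{\ell=1}^L\pmb{S}_\ell$ with $\pmb{S}_\ell:=\mathcal{F}_{\tau,\ell}^{\Z}(\Z)-\E\mathcal{F}_{\tau}^{\Z}(\Z)$. These are independent across $\ell$ because the masks are independent and $\Z$ is fixed. They are also mean zero and Hermitian. Since $\pmb{P}_{T^\perp}$ is a compression by $\I-\X_0$, we have $\lV\pmb{P}_{T^\perp}\W\rV_{\mathrm{op}}\le\lV\W\rV_{\mathrm{op}}$, so we may bound $\lV\sum_\ell\pmb{S}_\ell\rV_{\mathrm{op}}$ directly. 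Alternatively, we can apply Bernstein to the compressed summands $\pmb{P}_{T^\perp}\pmb{S}_\ell$; these are still independent, mean zero and Hermitian, and satisfy $\E(\pmb{P}_{T^\perp}\pmb{S}_\ell)^2\preceq(\I-\X_0)\E\pmb{S}_\ell^2(\I-\X_0)$.

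The two Bernstein parameters come from the earlier lemmas:
\begin{itemize}
\item Lemma~\ref{lm:R} gives the radius bound $R_1=C_1M^4\tau/\nu^2$.
\item Lemma~\ref{lm:variance1} gives $\sigma_1^2=LC_3M^8/\nu^4$.
\end{itemize}
Lemma~\ref{lm:matrix_bernstein} with threshold $Lt_1/2$ then yields
\begin{equation*}
\mathbb P\lk\lV\tfrac1L\textstyle\sum_\ell\pmb{S}_\ell\rV_{\mathrm{op}}\ge t_1/2\rk
\le 2n\exp\lk-\frac{L^2t_1^2/8}{LC_3M^8/\nu^4+C_1M^4\tau Lt_1/6}\rk.
\end{equation*}

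It remains to simplify the exponent. If $t_1\le1$, the denominator is at most $L\frac{M^8}{\nu^4}\tau(C_3+C_1)$, using $\tau\ge1$ and $M^4/\nu^2\ge1$. The latter holds because $\nu^2=(\E|d|^2)^2\le M^4$. This gives exponent $\gtrsim\frac{\nu^4L}{M^8\tau}t_1^2$. If $t_1>1$, the denominator is at most $L\frac{M^8}{\nu^4}\tau t_1(C_3+C_1)$, which gives exponent $\gtrsim\frac{\nu^4L}{M^8\tau}t_1$. Together these produce the $\min\{t_1^2,t_1\}$ form with an absolute constant $\widetilde c_1$.

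The main obstacle is only bookkeeping: the $T^\perp$ projection must not spoil the Bernstein application, and the two regimes must be merged with a single absolute constant. The key check is $M^4/\nu^2\ge1$, which lets the radius term and the variance term both be bounded by $\frac{M^8}{\nu^4}\tau$.
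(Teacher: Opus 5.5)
Your proposal is correct and follows essentially the same route as the paper: the bias--fluctuation split via Proposition~\ref{prop:isotropicity}, the bias bound from Proposition~\ref{prop:bias} together with \eqref{eq:asmp1}, and matrix Bernstein with radius and variance taken from Lemmas~\ref{lm:R} and~\ref{lm:variance1}. The exponent is then merged into the $\min\{t_1^2,t_1\}$ form using $\tau\ge1$ and $M^4/\nu^2\ge1$, exactly as the paper does; one small slip is that the radius term in your Bernstein denominator should read $C_1M^4\tau Lt_1/(6\nu^2)$, which your case analysis already handles correctly.
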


\begin{proof} 
By homogeneity, we may assume without loss of generality $\lV\Z\rV_F=1$.
By Proposition~\ref{prop:isotropicity}, $\E\mathcal{R}\lk\Z\rk=\Z+\tr\lk\Z\rk\I$.
Since $\Z\in T$, we have $\pmb{P}_{T^\perp}\Z=0$, and hence
\begin{equation*}
        \pmb{P}_{T^\perp}\E\mathcal{R}\lk\Z\rk
        =
        \tr\lk\Z\rk\pmb{P}_{T^\perp}\I.
\end{equation*}
Therefore, by the triangle inequality and the projection formula~\eqref{eq:proj0}, we have
\begin{equation}\label{eq:Perp}
\begin{aligned}
        &\lV\pmb{P}_{T^\perp}\lk\mathcal{R}_\tau^{\Z}\lk\Z\rk-\tr\lk\Z\rk\I\rk\rV_{\mathrm{op}}\\
        &\qquad
        \le
        \lV\pmb{P}_{T^\perp}\lk \mathcal{R}_\tau^{\Z}\lk\Z\rk-\E\mathcal R_\tau^{\Z}\lk\Z\rk\rk\rV_{\mathrm{op}}
        +
        \lV\pmb{P}_{T^\perp}\lk \E\lk\mathcal{R}_\tau^{\Z}-\mathcal R\rk\lk\Z\rk\rk\rV_{\mathrm{op}}\\
        &\qquad
        \le
        \lV\mathcal{R}_\tau^{\Z}\lk\Z\rk-\E\mathcal R_\tau^{\Z}\lk\Z\rk\rV_{\mathrm{op}}
        +
        \lV \E\lk\mathcal{R}_\tau^{\Z}-\mathcal R\rk\lk\Z\rk\rV_{\mathrm{op}}.
\end{aligned}
\end{equation}
By Proposition~\ref{prop:bias} and the assumption~\eqref{eq:asmp1}, the second term on the right hand side is bounded by $t_1/4$.

Let
\begin{equation*}
        \pmb{S}_\ell
        :=\frac{1}{L}\lk\mathcal{F}_{\tau,\ell}^{\Z}\lk\Z\rk -\E \mathcal{F}_{\tau,\ell}^{\Z}\lk\Z\rk\rk.
\end{equation*}
Then $\pmb{S}_1,\ldots,\pmb{S}_L$ are independent centered Hermitian random matrices, and
\begin{equation*}
        \mathcal{R}_\tau^{\Z}\lk\Z\rk-\E\mathcal{R}_\tau^{\Z}\lk\Z\rk
        =
        \sum_{\ell=1}^L\pmb{S}_\ell.
\end{equation*}
By Lemma~\ref{lm:R} and Lemma~\ref{lm:variance1}, the centered summands satisfy
\begin{equation*}
\left\{
\begin{aligned}
&\lV\pmb{S}_\ell\rV_{\mathrm{op}}
        \le \widetilde{C}_1\frac{M^4\tau}{\nu^2 L}
        :=R_1,\\[6pt]
&\lV \sum_{\ell=1}^L
        \E\lz\pmb{S}_\ell^2\rz
        \rV_{\mathrm{op}}
        \le
        \widetilde{C}_2\frac{M^8}{\nu^4 L}
        :=\sigma_1^2.
\end{aligned}
\right.
\end{equation*}
Applying the matrix Bernstein inequality in Lemma~\ref{lm:matrix_bernstein} with $3t_1/4$, we obtain
\begin{equation}\label{eq:bernstein1}
\begin{aligned}
        \mathbb{P}\lk \lV\mathcal{R}_\tau^{\Z}\lk\Z\rk-\E\mathcal{R}_\tau^{\Z}\lk\Z\rk\rV_{\mathrm{op}}
        \ge
        \frac{3}{4} t_1 \rk
        &\le
        2n\exp\lk-c_1\frac{t_1^2}{\sigma_1^2+R_1t_1}\rk\\
        &\le
        2n\exp\lk-\widetilde{c}_1\frac{\nu^4L}{M^8\tau}\min\{t_1^2,t_1\}\rk.
\end{aligned}
\end{equation}
In the last step, we used $\tau\ge1$ and the parameter relations in Assumption~\ref{ass:mask}.

Combining~\eqref{eq:Perp} with~\eqref{eq:bernstein1} gives the desired bound.
\end{proof}

The second proposition gives the corresponding $T$ estimate.

\begin{proposition}\label{prop:t}
Suppose that the random masks satisfy Assumption~\ref{ass:mask} with parameters $M$ and $\nu$, and let $C_0>0$ be the constant from Proposition~\ref{prop:bias}. 
Then there exists an absolute constant $\widetilde{c}_2>0$ such that the following holds.
For every fixed $\Z\in T$, every $\tau\ge1$, and every $0<t_2\le1$ satisfying
\begin{equation}\label{eq:asmp2} 
C_0\frac{M^4}{\nu^2}e^{-\tau} \le \frac{t_2}{4},
\end{equation} 
we have
\begin{equation}\label{eq:t}
\begin{aligned}
\mathbb{P}\lk\lV\pmb{P}_T \lk\mathcal{R}_\tau^{\Z}\lk\Z\rk-\Z-\tr\lk\Z\rk\I\rk\rV_F 
\ge t_2\lV\Z\rV_F\rk
\le \exp\lk-\widetilde{c}_2 \frac{t_2^2\nu^4L}{M^8\tau} +\frac{1}{4}\rk. 
\end{aligned}
\end{equation}
\end{proposition}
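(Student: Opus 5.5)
We mirror the proof of Proposition~\ref{prop:tperp}, replacing the matrix Bernstein inequality by the vector Bernstein inequality in Lemma~\ref{lm:vector_bernstein}, applied in the real Hilbert space $T$ with the Frobenius norm. By homogeneity we assume $\lV\Z\rV_F=1$. By Proposition~\ref{prop:isotropicity}, $\E\mathcal{R}\lk\Z\rk=\Z+\tr\lk\Z\rk\I$. The triangle inequality therefore gives
\begin{equation*}
\lV\pmb{P}_T\lk\mathcal{R}_\tau^{\Z}\lk\Z\rk-\Z-\tr\lk\Z\rk\I\rk\rV_F
\le
\lV\pmb{P}_T\lk\mathcal{R}_\tau^{\Z}\lk\Z\rk-\E\mathcal{R}_\tau^{\Z}\lk\Z\rk\rk\rV_F
+
\lV\pmb{P}_T\E\lk\mathcal{R}_\tau^{\Z}-\mathcal{R}\rk\lk\Z\rk\rV_F .
\end{equation*}
Each mask has the same distribution as $\D$, so $\E\mathcal{R}_\tau^{\Z}\lk\Z\rk=\E\mathcal{F}_\tau^{\Z}\lk\Z\rk$ and $\E\mathcal{R}\lk\Z\rk=\E\mathcal{F}\lk\Z\rk$. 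The second term is then bounded by $C_0\frac{M^4}{\nu^2}e^{-\tau}\le t_2/4$, using the $T$-part of Proposition~\ref{prop:bias} and~\eqref{eq:asmp2}. It remains to show that the first term exceeds $3t_2/4$ only with the stated probability.

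For the first term, set $\px_\ell:=\frac{1}{L}\pmb{P}_T\lk\mathcal{F}_{\tau,\ell}^{\Z}\lk\Z\rk-\E\mathcal{F}_{\tau,\ell}^{\Z}\lk\Z\rk\rk\in T$. These are independent, mean-zero vectors in the real Hilbert space $\lk T,\lV\cdot\rV_F\rk$, because $\Z$ is fixed and the truncation events for the $\ell$-th mask depend only on $\D_\ell$. Their sum is $\pmb{P}_T\lk\mathcal{R}_\tau^{\Z}\lk\Z\rk-\E\mathcal{R}_\tau^{\Z}\lk\Z\rk\rk$. The third estimate of Lemma~\ref{lm:R} gives $\lV\px_\ell\rV_F\le R_2:=C_1\frac{M^4\tau}{\nu^2L}$, and Lemma~\ref{lm:variance2} gives $\sum_\ell\E\lV\px_\ell\rV_F^2\le\sigma_2^2:=C_4\frac{M^8}{\nu^4L}$.

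The only nontrivial point is the range restriction $t\le\sigma_2^2/R_2$ in Lemma~\ref{lm:vector_bernstein}. With $t=3t_2/4$ we have $t\le 3/4$, while $\sigma_2^2/R_2=\frac{C_4M^4}{C_1\nu^2\tau}$. This ratio need not be large, since $\tau$ may grow. To handle this, inflate the variance proxy to $\widetilde\sigma_2^2:=\max\{\sigma_2^2,R_2\}$, which is still a valid upper bound. Then $\widetilde\sigma_2^2/R_2\ge1\ge 3t_2/4$, so the lemma applies and yields the bound $\exp\lk-\frac{9t_2^2}{64\widetilde\sigma_2^2}+\frac14\rk$.

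Finally we compare $\widetilde\sigma_2^2$ with $\frac{M^8\tau}{\nu^4L}$. Assumption~\ref{ass:mask} gives $\nu=\E\lv d\rv^2\le M^2$, so $M^4/\nu^2\ge1$. Hence $\sigma_2^2\lesssim \frac{M^8}{\nu^4L}\le\frac{M^8\tau}{\nu^4L}$ and $R_2\lesssim\frac{M^4\tau}{\nu^2L}\le\frac{M^8\tau}{\nu^4L}$, so $\widetilde\sigma_2^2\lesssim\frac{M^8\tau}{\nu^4L}$. Substituting this gives~\eqref{eq:t} with an absolute constant $\widetilde c_2$, which is where the factor $\tau$ in the exponent comes from. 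The main step to verify carefully is that inflating the variance proxy is legitimate and that $M^4\ge\nu^2$, so the radius is absorbed into the variance term.
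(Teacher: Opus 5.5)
Your proposal is correct and follows essentially the same route as the paper's proof: the same triangle-inequality split, the bias term bounded via Proposition~\ref{prop:bias}, and the vector Bernstein inequality applied to $\px_\ell$ with Lemma~\ref{lm:R} and Lemma~\ref{lm:variance2}. The only difference is cosmetic. You inflate the variance proxy to $\max\{\sigma_2^2,R_2\}$, while the paper multiplies it by $\tau$ and uses $M^4/\nu^2\ge1$ to get a $\tau$-independent ratio $\sigma_2^2/R_2\ge1$; both yield a proxy of order $M^8\tau/(\nu^4L)$ and hence the same bound.
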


\begin{proof}  
By homogeneity, it suffices to consider $\lV\Z\rV_F=1$.
By Proposition~\ref{prop:isotropicity}, we obtain
\begin{equation}\label{eq:tt}
\begin{aligned}
        &\lV\pmb{P}_T\lk\mathcal{R}_\tau^{\Z}\lk\Z\rk-\Z-\tr\lk\Z\rk\I\rk\rV_F\\
        &\qquad\qquad
        \le
        \lV\pmb{P}_T\lk\mathcal{R}_\tau^{\Z}\lk\Z\rk-\E\mathcal{R}_\tau^{\Z}\lk\Z\rk \rk\rV_F
        +
        \lV\pmb{P}_T \E\lk\mathcal{R}_\tau^{\Z}-\mathcal R\rk\lk\Z\rk\rV_F.
\end{aligned}
\end{equation}
By Proposition~\ref{prop:bias} and the assumed condition~\eqref{eq:asmp2}, the second term on the right hand side is at most $t_2/4$.

Let
\begin{equation*}
        \px_\ell
        :=\frac{1}{L}\pmb{P}_T\lk\mathcal{F}_{\tau,\ell}^{\Z}\lk\Z\rk -\E \mathcal{F}_{\tau,\ell}^{\Z}\lk\Z\rk\rk.
\end{equation*}
Then $\px_1,\ldots,\px_L$ are independent mean-zero random vectors in the real Hilbert space $T$, and
\begin{equation*}
\pmb{P}_T \lk \mathcal{R}_\tau^{\Z}\lk\Z\rk-\E\mathcal{R}_\tau^{\Z}\lk\Z\rk\rk
= \sum_{\ell=1}^L\px_\ell. 
\end{equation*}
By Lemma~\ref{lm:R}, Lemma~\ref{lm:variance2} and $\tau\ge1$, we have 
\begin{equation*} 
\left\{ \begin{aligned}
&\lV\px_\ell\rV_F \le 
\widetilde{C}_3\frac{M^4\tau}{\nu^2L}
:=R_2, \\[6pt] 
&\sum_{\ell=1}^L \E\lV\px_\ell\rV_F^2 
\le \widetilde{C}_4\frac{M^8}{\nu^4L}
\le \widetilde{C}_4\frac{M^8\tau}{\nu^4L}
:=\sigma_2^2. 
\end{aligned} \right. 
\end{equation*} 
After enlarging $\widetilde{C}_4$ if necessary, the parameter relations in Assumption~\ref{ass:mask} imply
\begin{equation*}
\frac{\sigma_2^2}{R_2}
=
\frac{\widetilde{C}_4}{\widetilde{C}_3} \frac{M^4}{\nu^2}
\ge1.
\end{equation*}
Since $0<t_2\le1$, the threshold $\frac{3}{4}t_2\le\sigma_2^2/R_2$.
Thus, the vector-valued Bernstein inequality in Lemma~\ref{lm:vector_bernstein} gives
\begin{equation}\label{eq:bernstein2} 
\mathbb{P}\lk \lV \pmb{P}_T \lk \mathcal{R}_\tau^{\Z}\lk\Z\rk-\E\mathcal{R}_\tau^{\Z}\lk\Z\rk\rk \rV_F \ge \frac{3}{4} t_2 \rk 
\le 
\exp\lk -\widetilde{c}_2\frac{t_2^2\nu^4L}{M^8\tau} +\frac{1}{4} \rk. 
\end{equation}

Combining~\eqref{eq:bernstein2} with~\eqref{eq:tt} gives the desired bound.
\end{proof}

\subsection{Golfing Construction}\label{subsec:golfing}

The approximate dual certificate is obtained by an improved golfing construction based on~\cite{candes2015cdp,gross2017improved}.
The first modification concerns the allocation of masks.
Although the construction still uses $\mathcal{O}\lk\log n\rk$ successful golfing updates, the number of masks assigned to the updates decreases as the construction progresses.
This differs from the fixed batch size strategy in~\cite{candes2015cdp} and from the two stage scheme in~\cite{gross2017improved}.
The second modification is that each golfing update is built from an adaptive truncated operator with a dimension-independent truncation threshold (up to the mask distribution parameters), rather than from the logarithmic scale truncated operators used in~\cite{candes2015cdp,gross2017improved}.
This allows us to apply the estimates in Section~\ref{subsec:two_estimates} to the current residual at each step. 

The next proposition builds an approximate dual certificate using only $\mathcal{O}\lk\log n\rk$ masks.

\begin{proposition}\label{prop:golfing}
Suppose that the random masks satisfy Assumption~\ref{ass:mask} with parameters $M$ and $\nu$.
For any $\omega\ge1$, there exists a constant $C=C\lk M,\nu\rk>0$ such that, if the total number of coded diffraction
patterns satisfies
\begin{equation*}
        L \ge C\,\omega\,\log n,
\end{equation*}
then with probability at least $1-\frac{1}{2}n^{-\omega}$, an approximate dual certificate satisfying Definition~\ref{def:dual} can be constructed.
\end{proposition}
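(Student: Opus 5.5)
We build the certificate with a golfing scheme whose residual lives in $T$. Set $\Q_0:=\X_0$ and split the $L$ masks into disjoint consecutive batches $B_1,\dots,B_J$ of sizes $L_1\ge L_2\ge\cdots\ge L_J$. At step $j$ we form the truncated operator $\mathcal{R}_{\tau_j,(j)}^{\Q_{j-1}}$ built from the masks in $B_j$, truncated with respect to the current residual $\Q_{j-1}$. The update is
\begin{equation*}
\Y_j:=\Y_{j-1}+\mathcal{R}_{\tau_j,(j)}^{\Q_{j-1}}\lk\Q_{j-1}\rk-\tr\lk\Q_{j-1}\rk\I,
\qquad
\Q_j:=\X_0-\pmb{P}_T\Y_j ,
\end{equation*}
so that $\Q_j=\pmb{P}_T\lk\Q_{j-1}+\tr(\Q_{j-1})\I-\mathcal{R}^{\Q_{j-1}}_{\tau_j,(j)}(\Q_{j-1})\rk$. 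Every $\Y_j$ lies in $\operatorname{range}(\mA^*)+\operatorname{span}\{\I\}$, since a truncated operator is still a combination of the $\Fm_{k,\ell}$. Conditioning on the masks of $B_1,\dots,B_{j-1}$ fixes $\Q_{j-1}\in T$, and $B_j$ is independent of them. We may therefore apply Propositions~\ref{prop:t} and~\ref{prop:tperp} to the fixed direction $\Z=\Q_{j-1}$. This gives, with the stated probabilities,
\begin{equation*}
\lV\Q_j\rV_F\le t_2\lV\Q_{j-1}\rV_F,
\qquad
\lV\pmb{P}_{T^\perp}\lk\mathcal{R}^{\Q_{j-1}}_{\tau_j,(j)}(\Q_{j-1})-\tr(\Q_{j-1})\I\rk\rV_{\mathrm{op}}\le t_1^{(j)}\lV\Q_{j-1}\rV_F .
\end{equation*}

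\textbf{Parameters.} We fix $t_2=1/2$ for all steps. The truncation level is a single dimension-independent constant $\tau=\tau_0(M,\nu)$ with $C_0\frac{M^4}{\nu^2}e^{-\tau_0}\le 1/8$, so $\tau_0\asymp 1+\log(M^4/\nu^2)$. We take $J=\lceil \log_2(8M^2\sqrt n/\nu)\rceil=\mathcal O(\log n)$ steps, which gives $\lV\Y_T-\X_0\rV_F=\lV\Q_J\rV_F\le 2^{-J}\le \nu/(4M^2\sqrt n)$.

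\textbf{Key point: the $T^\perp$ budget.} We have $\lV\Y_{T^\perp}\rV_{\mathrm{op}}\le\sum_j t_1^{(j)}2^{-(j-1)}$, and this sum must be at most $1/2$. We take $t_1^{(j)}=c\,(3/2)^{j}$ (or similar), so the series converges geometrically. For large $j$, $t_1^{(j)}\ge1$, and the Bernstein exponent becomes linear in $t_1^{(j)}$. The per-step failure probability of Proposition~\ref{prop:tperp} is then
\begin{equation*}
2n\exp\lk-\widetilde c_1\frac{\nu^4L_j}{M^8\tau_0}\,t_1^{(j)}\rk .
\end{equation*}
To make this at most $n^{-\omega}/(8J)$ we only need
\begin{equation*}
L_j\gtrsim \frac{M^8\tau_0}{\nu^4}\,\frac{\omega\log n+\log J}{t_1^{(j)}} .
\end{equation*}
These are the decreasing batch sizes. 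Since $1/t_1^{(j)}$ is geometrically summable, the total is $\sum_j L_j\lesssim \frac{M^8\tau_0}{\nu^4}\,\omega\log n$, and $\log J\lesssim\log\log n$ is absorbed.

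\textbf{The main obstacle: the $T$-contraction.} The failure probability for the $T$ bound at step $j$ is
\begin{equation*}
\exp\lk-\widetilde c_2\,t_2^2\nu^4L_j/(M^8\tau_0)+1/4\rk,
\end{equation*}
which does not benefit from a growing $t$ and is never smaller than $n^{-\omega}$ unless $L_j\gtrsim\omega\log n$. This would cost $J\omega\log n$ masks in total. I would handle it in the style of Gross' golfing with retries, as in~\cite{gross2017improved}. Allow more rounds than successful steps, say $J'=\kappa J$ rounds. A round counts as successful if both estimates hold; in that case we keep the update, and otherwise we discard the batch and keep $\Q$ unchanged. Each round succeeds with probability at least $3/4$ given the past, provided $L_j\ge C(M,\nu)\tau_0$. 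The number of successes then dominates a binomial variable, and a Chernoff bound shows that at least $J$ of the $J'$ rounds succeed, except with probability $e^{-cJ'}\le n^{-\omega}/4$ once $\kappa\gtrsim\omega$.

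Two features of the bookkeeping matter here. First, the $T^\perp$ failure events must still be made small, uniformly in $n$, through the $t_1^{(j)}$ schedule above. Second, the schedule is indexed by the number of successes so far, not by the round count. Each round receives a batch of size at least a constant $C(M,\nu)\tau_0$. The early rounds, which carry large residuals, receive the larger batches dictated by the $T^\perp$ requirement.

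\textbf{Total count.} Summing gives
\begin{equation*}
L\lesssim \frac{M^8\tau_0}{\nu^4}\lk\omega\log n+\omega J\rk\lesssim \frac{M^8}{\nu^4}\,\tau_0^2\,\omega\log n ,
\end{equation*}
which is consistent with the stated form of $C(M,\nu)$. A union bound over all rounds, combined with the Chernoff bound, gives success probability at least $1-\tfrac12 n^{-\omega}$.
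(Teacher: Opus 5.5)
\textbf{Overall.} Your construction is essentially the paper's: a constant truncation level $\tau_0$, a fixed contraction $t_2=1/2$, and a geometrically growing $T^\perp$ tolerance $t_1^{(j)}$, so that the $T^\perp$-driven batch sizes $L_j\propto\log n/\min\{(t_1^{(j)})^2,t_1^{(j)}\}$ are summable. You also use retries whose batch size is indexed by the number of successes so far. The gap is in the last step, where you bound the total number of masks consumed.

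\textbf{Where the accounting fails.} The Chernoff bound on the number of successes among $J'=\kappa J$ rounds controls how many rounds fail, but not what those failures cost. A failed round at level $j$ must be repeated with a fresh batch of the same size $L_j$. Your total $\frac{M^8\tau_0}{\nu^4}(\omega\log n+\omega J)$, however, charges the $T^\perp$-dictated part of each $L_j$ only once and charges every failure only the constant floor. Nothing in your argument justifies this.

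\emph{Why a union bound cannot rescue it.} Consider the levels where $\nu^4L_j/(M^8\tau_0)$ lies well between a constant and order $\omega\log n$; there are roughly $\log(\omega\log n)$ of them. At these levels the $T$-failure probability $\exp(-\widetilde c_2\nu^4L_j/(4M^8\tau_0)+1/4)$ is much larger than $n^{-\omega}$, so these failures cannot be union-bounded away. Yet each retry there costs far more than a constant.

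\emph{What your tools actually give.} The Chernoff bound alone yields only the trivial estimate $\kappa J\max_jL_j$, which is of order $\omega^2\log^2 n$. A crude level-by-level union bound loses an extra $\log\log n$ factor, which is exactly the kind of loss the proposition is meant to avoid.

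\textbf{The missing ingredient.} With your choice of $L_j$, the conditional failure probability of one trial at level $j$ is exponentially small in that trial's own cost:
\[
p_j\le\exp\bigl(-\alpha\nu^4L_j/(M^8\tau_0)\bigr)
\]
for an absolute $\alpha>0$. The $T$ part follows directly from Proposition~\ref{prop:t}. The $T^\perp$ part holds because $L_ja_j\gtrsim\frac{M^8\tau_0}{\nu^4}\log n$ absorbs the prefactor $2n$, and $a_j\ge a_0$ is bounded below.

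The numbers of trials $N_j$ are then conditionally dominated by geometric variables. An exponential-moment bound for the weighted sum $\sum_jL_jN_j$ with $\theta=\alpha\nu^4/(2M^8\tau_0)$, namely Lemma~\ref{lm:geometric}, gives
\[
\mathbb{P}\Bigl(\sum_jL_jN_j>4\omega B\Bigr)\le e^{-2\omega\theta B},\qquad B=\sum_jL_j .
\]
Every $L_j$ is at least the constant floor, so $\theta B\gtrsim J\gtrsim\log n$ after enlarging that floor. Hence this probability is at most $(2n)^{-\omega}$. This is Step~4 of the paper's proof.

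The same argument shows that you do not need $\omega$ inside the batch sizes or a separate union bound over the $T^\perp$ events. The paper takes $m_j\propto\frac{M^8\tau_0}{\nu^4}(1+\log n/a_j)$ and obtains the $\omega$-dependence solely from the budget $4\omega B$.

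\textbf{A minor point.} To get $\sum_jt_1^{(j)}2^{-(j-1)}\le1/2$, your smallest tolerance must be about $1/8$. Condition~\eqref{eq:asmp1} then requires $C_0M^4e^{-\tau_0}/\nu^2\le1/32$, not $1/8$. This only needs a slightly larger $\tau_0$.
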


\begin{proof}

We partition a subset of the $L$ masks into independent batches and use only these batches in the construction.
Since the unused masks are assigned zero coefficients, the resulting certificate still belongs to $\operatorname{range}\lk\mA^*\rk+\operatorname{span}\left\{\I\right\}$ for the full measurement operator.
In the following, the proof proceeds in five steps.

\textbf{Step 1. Parameter Selection.}
The golfing scheme constructs the certificate iteratively. 
Set
\begin{equation}\label{eq:r}
        r:=
        \left\lceil\frac{1}{2}\log_2 n\right\rceil
        +\left\lceil\log_2\lk\frac{M^2}{\nu}\rk\right\rceil+2.
\end{equation}
to be the number of successful golfing updates required to satisfy Definition~\ref{def:dual}.
Fix $\eta=1/7$, so that $\eta\sum_{j=0}^{\infty}2^{-j/2}\le\frac{1}{2}$.
For $0\le j\le r-1$, define
\begin{equation}\label{eq:t1}
        t_{1,j}:=\eta 2^{j/2},\quad t_{2}:=\frac{1}{2}
        \qquad
        a_j:=\min\left\{t_{1,j}^2,t_{1,j}\right\}.
\end{equation}
Here, $t_{1,j}$ and $t_{2}$ are the admissible $T^\perp$, $T$ errors at level $j$.
Choose a constant order truncation level $\tau_0=C_1 \log\lk M^2/\nu\rk\ge1$ such that
\begin{equation}\label{eq:tau0}
        C_0\frac{M^4}{\nu^2}e^{-\tau_0}
        \le
        \frac{1}{4}\min\left\{\eta,\frac{1}{2}\right\}\le\frac{1}{28}.
\end{equation}
Here $C_0>0$ is the constant from Proposition~\ref{prop:bias} and $C_1$ is a sufficiently large numerical constant.

For $0\le j\le r-1$, let
\begin{equation}\label{eq:mj}
        m_j:=
        \left\lceil
        C_2\frac{M^8\tau_0}{\nu^4}\lk1+\frac{\log n}{a_j}\rk
        \right\rceil,
\end{equation}
be the number of masks used in each trial at level $j$, where $C_2>0$ is a
sufficiently large numerical constant to be chosen below.
Finally, set
\begin{equation*}
B:=\sum_{j=0}^{r-1}m_j
\end{equation*}
to be the baseline mask cost corresponding to one successful trial at each level.
By~\eqref{eq:t1},
\begin{equation*}
\frac1{a_j}
=
\frac1{\min\left\{\eta^2 2^j,\eta 2^{j/2}\right\}}
\le
\frac{1}{\eta^2 2^j}+ \frac{1}{\eta 2^{j/2}}.
\end{equation*}
Therefore
\begin{equation*}
        \sum_{j=0}^{r-1}\frac1{a_j}
        \le
        \sum_{j=0}^{\infty}\lk\frac{1}{\eta^2 2^j}+\frac{1}{\eta 2^{j/2}}\rk
        \le C_\eta,
\end{equation*}
By~\eqref{eq:r}, $r\lesssim_{M,\nu}\log n$ for $n\ge2$, the definition~\eqref{eq:mj} implies
\begin{equation}\label{eq:B_bound}
        B=\sum_{j=0}^{r-1}m_j\le C_3\lk M,\nu\rk \log n.
\end{equation}

\textbf{Step 2. Adaptive Golfing Construction.}
We now describe the construction. 
At each step, $\Y$ denotes the current partial certificate and
\begin{equation*}
        \Q=\X_0-\pmb{P}_T\Y
\end{equation*}
is the remaining tangent space residual.
Initially $\Y=\pmb{0}$ and $\Q=\X_0$. 
We shall construct at most $r$ successful updates.
Suppose that $j$ successful updates have already been obtained, where $0\le j\le r-1$. 
If $\Q_j=\pmb{0}$, we stop the construction.
Otherwise, we start independent trials for the $\lk j+1\rk$-th successful update, which we index as level $j$. Each trial uses a fresh batch of $m_j$ masks.

Conditionally on all previously used batches, the current residual $\Q_j\in T$ is fixed and is independent of the fresh batch.
Let $\mathcal{R}_{\tau_0,k}^{\Q_j}$ denote the corresponding $\Q_j$-adapted truncated operator formed from the $k$-th trial batch at level $j$.
For this trial, define the candidate certificate
\begin{equation}\label{eq:Y_new}
        \Y_{\mathrm{new}}
        :=
        \Y_j+ \mathcal{R}_{\tau_0,k}^{\Q_j}\lk\Q_j\rk - \tr\lk\Q_j\rk\I
\end{equation}
and the corresponding candidate residual
\begin{equation}\label{eq:Q_new}
        \Q_{\mathrm{new}}
        :=
        \X_0-\pmb{P}_T\Y_{\mathrm{new}}.
\end{equation}
Equivalently,
\begin{equation*}
        \Q_{\mathrm{new}}
        =
        \pmb{P}_T \lk \Q_j+\tr\lk\Q_j\rk\I-\mathcal{R}_{\tau_0,k}^{\Q_j}\lk\Q_j\rk\rk.
\end{equation*}

The trial is declared successful if
\begin{equation}\label{eq:success}
\left\{
\begin{aligned}
&\lV\pmb{P}_{T^\perp}
\lk\mathcal{R}_{\tau_0,k}^{\Q_j}\lk\Q_j\rk-\tr\lk\Q_j\rk\I\rk\rV_{\mathrm{op}}
\le
t_{1,j}\lV\Q_j\rV_F,\quad t_{1,j}=\eta 2^{j/2}\\[6pt]
&\lV\Q_{\mathrm{new}}\rV_F
\le
t_{2}\lV\Q_j\rV_F,\quad t_{2}=\frac{1}{2}.
\end{aligned}
\right.
\end{equation}
If the trial succeeds, we set
\begin{equation*}
        \lk\Y_{j+1},\Q_{j+1}\rk:=\lk\Y_{\mathrm{new}},\Q_{\mathrm{new}}\rk,
\end{equation*}
and proceed to level $j+1$.
If the trial fails, the pair $\lk\Y_j,\Q_j\rk$ is left unchanged, and we repeat
level $j$ with a new independent batch of masks.

\textbf{Step 3. Success Probability of One Trial.}
We next estimate the conditional success probability of one trial at level $j$.
Let $\mathcal{F}_{\mathrm{past}}$ denote the randomness generated by all previously used batches. 
Conditionally on $\mathcal{F}_{\mathrm{past}}$, the residual $\Q_j\in T$ is fixed and is independent of the fresh batch used in the current trial.
Since $t_{1,j}\ge\eta$ and $t_2=1/2$, the choice~\eqref{eq:tau0} implies the bias assumptions in both Proposition~\ref{prop:tperp} and Proposition~\ref{prop:t}.
Applying Proposition~\ref{prop:tperp} conditionally on $\mathcal{F}_{\mathrm{past}}$ gives
\begin{equation}\label{eq:tperp_fail}
\begin{aligned}
&\mathbb{P}\lk
\lV\pmb{P}_{T^\perp}\lk\mathcal{R}_{\tau_0,k}^{\Q_j}\lk\Q_j\rk-\tr\lk\Q_j\rk\I\rk\rV_{\mathrm{op}}>t_{1,j}\lV\Q_j\rV_F\,\middle|\,\mathcal{F}_{\mathrm{past}}\rk\\
%\qquad\qquad\qquad\qquad\qquad\qquad\qquad
&\qquad\qquad\qquad\qquad\qquad\qquad\qquad\qquad
\le
2n\exp\lk-c\frac{\nu^4 m_j}{M^8\tau_0}a_j\rk.
\end{aligned}
\end{equation}
Similarly, applying Proposition~\ref{prop:t} conditionally on $\mathcal{F}_{\mathrm{past}}$ with $t_2=1/2$ gives
\begin{equation}\label{eq:t_fail}
\begin{aligned}
&\mathbb{P}\lk\lV\pmb{P}_T\lk\mathcal{R}_{\tau_0,k}^{\Q_j}\lk\Q_j\rk-\Q_j-\tr\lk\Q_j\rk\I\rk\rV_F
>
\frac{1}{2}\lV\Q_j\rV_F\,\middle|\,\mathcal{F}_{\mathrm{past}}\rk \\
&\qquad\qquad\qquad\qquad\qquad\qquad\qquad\qquad\le
\exp\lk-c\frac{\nu^4 m_j}{M^8\tau_0}+\frac{1}{4}\rk.
\end{aligned}
\end{equation}
By~\eqref{eq:mj}, 
\begin{equation*}
        \frac{\nu^4 m_j}{M^8\tau_0}a_j
        \ge
        C_2\lk a_j+\log n\rk\ge C_2\log n,\qquad \frac{\nu^4 m_j}{M^8\tau_0}\ge C_2.
\end{equation*}
Thus, combining~\eqref{eq:tperp_fail} and~\eqref{eq:t_fail} and after increasing $C_2$ if necessary, there exists an absolute constant $\alpha_0$, such that every trial at level $j$ fails with conditional probability at most
\begin{equation}\label{eq:conditional_success}
        \mathbb{P}\lk\text{the current trial fails}\,\middle|\,\mathcal{F}_{\mathrm{past}}\rk
        \le
        \exp\lk-\alpha_0\frac{\nu^4 m_j}{M^8\tau_0}\rk:=p_j.
\end{equation}
%uniformly over all admissible conditioning events and all $0\le j\le r-1$.

\textbf{Step 4. Total Mask Consumption.}
Let $N_j$ be the number of trials required to obtain one successful update at level $j$.
By~\eqref{eq:conditional_success}, for every integer $l\ge0$,
\begin{equation}
        \mathbb{P}\lk N_j>l \,\middle|\, N_0,\ldots,N_{j-1}\rk\le\lk p_j\rk^l.
\end{equation}
Thus $N_j$ is conditionally stochastically dominated by a geometric random variable with success probability at least $1-p_j$.
%Notice that the variables $\left\{ N_j\right\}_{j=0}^{r-1}$ are not independent.
The total number of masks consumed before obtaining all required successful updates is bounded by
\begin{equation*}
        L_{\mathrm{total}}:=\sum_{j=0}^{r-1}m_j N_j.
\end{equation*}
We use  Lemma~\ref{lm:geometric} in Appendix~\ref{ap:geo} to bound the tail probability of $L_{\mathrm{total}}$.
Set $\theta :=\frac{\alpha_0\nu^4}{2M^8\tau_0}$ in Appendix~\ref{ap:geo}.
By~\eqref{eq:conditional_success},
we have
\begin{equation*}
        p_je^{\theta m_j}=e^{-\theta m_j}<1,
        \qquad 0\le j\le r-1.
\end{equation*}
For every $j$, it then follows that
\begin{align*}
\frac{\lk 1-p_j\rk e^{\theta m_j}}{1-p_je^{\theta m_j}}
=\frac{\lk 1-e^{-2\theta m_j}\rk e^{\theta m_j}}{1-e^{-\theta m_j}}
=1+e^{\theta m_j}\le e^{2\theta m_j}.
\end{align*}
Applying Lemma~\ref{lm:geometric} with $t=4\omega B$, we obtain
\begin{equation}\label{eq:total0}
\begin{aligned}
\mathbb{P}\lk L_{\mathrm{total}}>4\omega B\rk
&\le
e^{-4\omega\theta B}
\prod_{j=0}^{r-1}
\frac{\lk 1-p_j\rk e^{\theta m_j}}{1-p_je^{\theta m_j}}\\
&\le
\exp\lk-\lk 4\omega-2\rk\theta B\rk\\
%&\le
%\exp\lk -2\omega\theta B\rk\\
&\le
\exp\lk -\alpha_0\omega \frac{\nu^4B}{M^8\tau_0}\rk,
\end{aligned}
\end{equation}
where we used $\omega\ge1$ in the last inequality. 
By the definition of $m_j$ in~\eqref{eq:mj} and $r$ in~\eqref{eq:r},
\begin{equation*}
        \alpha_0\frac{\nu^4B}{M^8\tau_0}
        =\alpha_0\frac{\nu^4}{M^8\tau_0}\sum_{j=0}^{r-1}m_j
        \ge \alpha_0 C_2r 
        \ge \log\lk 2n\rk,
\end{equation*}
after increasing $C_2$ once more.
Consequently,~\eqref{eq:total0} yields
\begin{align*}
\mathbb{P}\lk L_{\mathrm{total}}>4\omega B\rk
&\le\exp\lk-\omega\log(2n)\rk\\
&=\lk 2n\rk^{-\omega}
\le\frac{1}{2}n^{-\omega}.
\end{align*}
By~\eqref{eq:B_bound}, $4\omega B \le C_3\lk M,\nu\rk\,\omega\log n$.
Therefore, after increasing the constant in the assumption $L\ge C(M,\nu)\,\omega\log n$ if necessary, we have $4\omega B\le L$.
Consequently, with probability at least $1-\frac{1}{2}n^{-\omega}$, the construction obtains all the required successful updates before exhausting the available $L$ masks.

\textbf{Step 5: Dual Certificate Verification.}
It remains to verify that, on the event constructed above, the resulting matrix $\Y$ is an approximate dual certificate.
By the second condition in~\eqref{eq:success}, each successful update satisfies
\begin{equation*}
        \lV\Q_{j+1}\rV_F
        \le
        \frac{1}{2}\lV\Q_j\rV_F .
\end{equation*}
Since $\lV\Q_0\rV_F=\lV\X_0\rV_F=1$ and the definition~\eqref{eq:r} of $r$, we have
\begin{equation}\label{eq:T_estimate}
        \lV\Y_T-\X_0\rV_F
        =\lV\Q_r\rV_F\le2^{-r}\le
        \frac{\nu}{4M^2\sqrt n}.
\end{equation}

We now control the $T^\perp$ component. Let $\mathcal S_j$ denote the truncated
operator associated with the batch used in the $(j+1)$-th successful update, so that
$\mathcal S_j$ is applied to $\Q_j$. Let
\begin{equation*}
        \Delta_j
        :=
        \pmb{P}_{T^\perp}\lk\mathcal{S}_j\lk\Q_j\rk-\tr\lk\Q_j\rk\I\rk,
        \qquad 0\le j\le r-1.
\end{equation*}
By the update formula,
\begin{equation*}
        \Y_{T^\perp}
        =
        \sum_{j=0}^{r-1}\Delta_j .
\end{equation*}
On the success event, the first condition in~\eqref{eq:success} gives
\begin{equation*}
        \lV\Delta_j\rV_{\mathrm{op}}
        \le
        t_{1,j}\lV\Q_j\rV_F
        \le
        \eta 2^{j/2}2^{-j}=\eta 2^{-j/2}.
\end{equation*}
Therefore,
\begin{equation}\label{eq:tperp_estimate}
        \lV\Y_{T^\perp}\rV_{\mathrm{op}}
        \le
        \eta\sum_{j=0}^{\infty}2^{-j/2}
        \le
        \frac{1}{2}.
\end{equation}

It is clear from the construction that 
$\Y\in\operatorname{range}\lk\mA^*\rk+\operatorname{span}\left\{\I\right\}$.
Combining this with~\eqref{eq:T_estimate} and~\eqref{eq:tperp_estimate}, we conclude that $\Y$ satisfies Definition~\ref{def:dual}.

\end{proof}

\subsection{Proof of Theorem~\ref{thm:main}}

By Proposition~\ref{prop:golfing}, an approximate dual certificate exists with probability at least
$1-\frac{1}{2}n^{-\omega}$ if
\begin{equation*}
L\gtrsim_{M,\nu}\,\omega\,\log n.
\end{equation*}
This condition also gives
\begin{equation*}
2n\exp\lk-c\frac{\nu^4L}{M^8}\rk\le\frac{1}{2}n^{-\omega},
\end{equation*}
so the robust injectivity in Proposition~\ref{prop:robust} holds with probability at least $1-\frac{1}{2}n^{-\omega}$. 
By a union bound, these two events hold simultaneously with probability at least $1-n^{-\omega}$, and Proposition~\ref{prop:recovery} yields exact recovery on their intersection.

\section{Numerical Experiments}\label{sec:numerics}

We conduct two complementary experiments.  
The first illustrates empirical behavior consistent with our theoretical prediction by comparing fixed and logarithmic mask budgets in the noiseless setting,
while the second probes the empirical robustness of PhaseLift under photon-limited observations. 
Throughout, we assume $\lV\x_0\rV_2=1$ and consider the coordinate and flat cases:
\begin{equation*}
    \x_0^{\mathrm{coord}}=\ve_1,
    \qquad
    \x_0^{\mathrm{flat}}
    =\frac{1}{\sqrt{n}}\lk 1,\ldots,1\rk^\top.
\end{equation*}
%The coordinate signal is the sparse extremal example underlying the erasure-mask lower bound, whereas the flat signal provides a dense contrast.
Masks are drawn independently from either the octanary ensemble in~\eqref{eq:octanary_mask} or the erasure ensemble in~\eqref{eq:erasure_mask}. 
We use even dimensions for octanary masks and odd dimensions for erasure masks, the latter matching the setting in which the real-valued mask condition is admissible.  

\paragraph{Sampling rate scaling.}
We compare the fixed mask budget $L=4$ with the logarithmic budget $L=\lceil4\log n\rceil$.  
Since each mask yields $n$ intensities, these budgets correspond to $m=4n$ and $m=n\lceil4\log n\rceil$ scalar measurements.  
For the octanary ensemble we test $n=8,16,\ldots,96$, and for the erasure ensemble we test $n=7,15,\ldots,95$.  
At each dimension we generate $20$ independent mask sequences of length $\lceil4\log n\rceil$.  
The first four masks constitute the fixed budget, so the two mask budgets are paired within each trial, and the same mask sequence is used for both test signals.

We solve the feasibility program~\eqref{eq:PhaseLift} with CVXPY~1.6.0 and SCS~3.2.7, setting the tolerance to $3\times10^{-6}$ and the maximum number of iterations to $10^3$.
All $1920$ solves return finite estimates. 
A trial is counted as successful only if
\begin{equation}\label{eq:numerical_success}
\frac{\lV\widehat{\X}-\X_0\rV_F}{\lV\X_0\rV_F}
\le 10^{-2},\quad
\frac{\lV\mA\lk\widehat{\X}\rk-\mA\lk\X_0\rk\rV_2}{\max\{\lV\mA(\X_0)\rV_2,1\}}
\le 10^{-4}.
\end{equation}
In all runs, the relative measurement residual is below $1.3\times10^{-6}$, so the success classification is determined by the reconstruction error criterion in~\eqref{eq:numerical_success}.

Figure~\ref{fig:octanary1} and Figure~\ref{fig:erasure1} show the empirical recovery rates at each tested dimension.  
Under the logarithmic budget, PhaseLift succeeds in $219$ of $240$ coordinate signal trials and all $240$ flat signal trials for the octanary masks. For the erasure masks, the corresponding success counts are $239$ of $240$ for both signal types.  
By contrast, under the fixed budget, PhaseLift succeeds in only $21$ of $240$ coordinate signal trials and $61$ of $240$ flat signal trials for the octanary masks, and in $42$ of $240$ and $20$ of $240$ trials, respectively, for the erasure masks.  
This pronounced separation is consistent with the sufficient sampling rate $m=\mathcal{O}\lk n\log n\rk$ established in Theorem~\ref{thm:main}.

\begin{figure}[htbp]
\centering
\includegraphics[width=\textwidth]{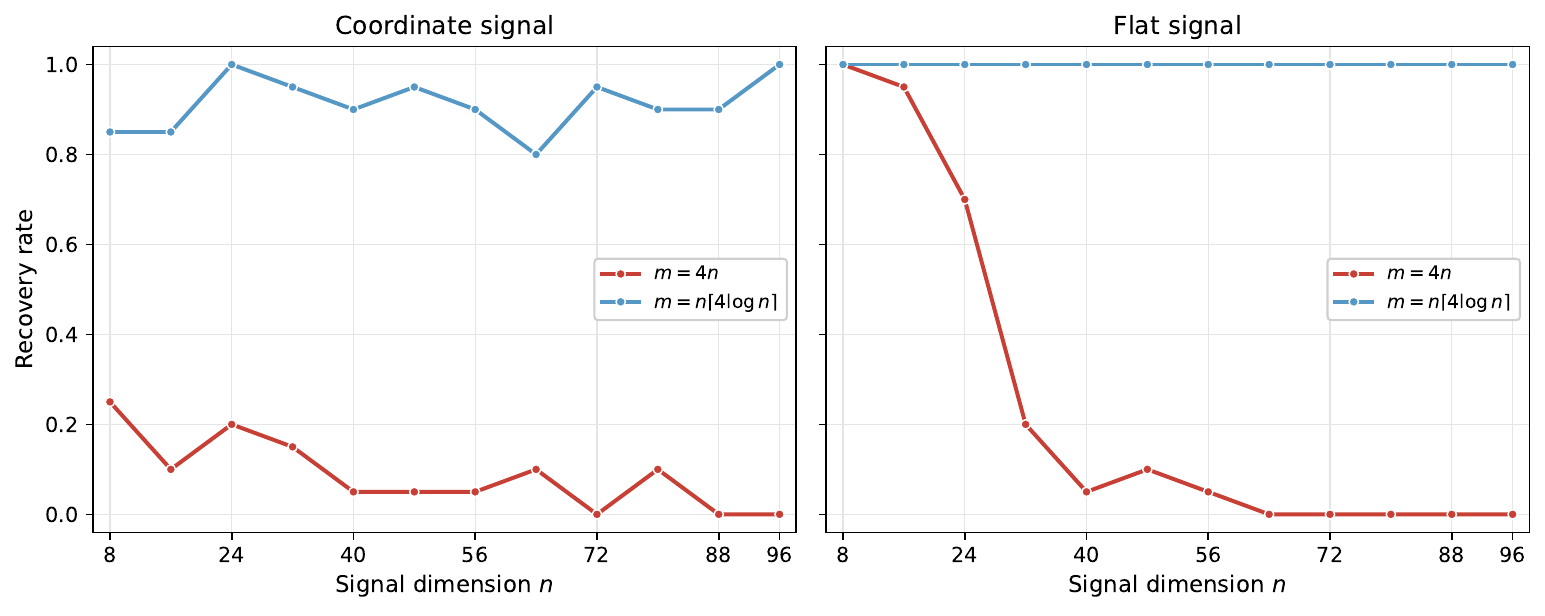}
\caption{Empirical recovery rates with octanary masks.}
\label{fig:octanary1}
\end{figure}

\begin{figure}[htbp]
\centering
\includegraphics[width=\textwidth]{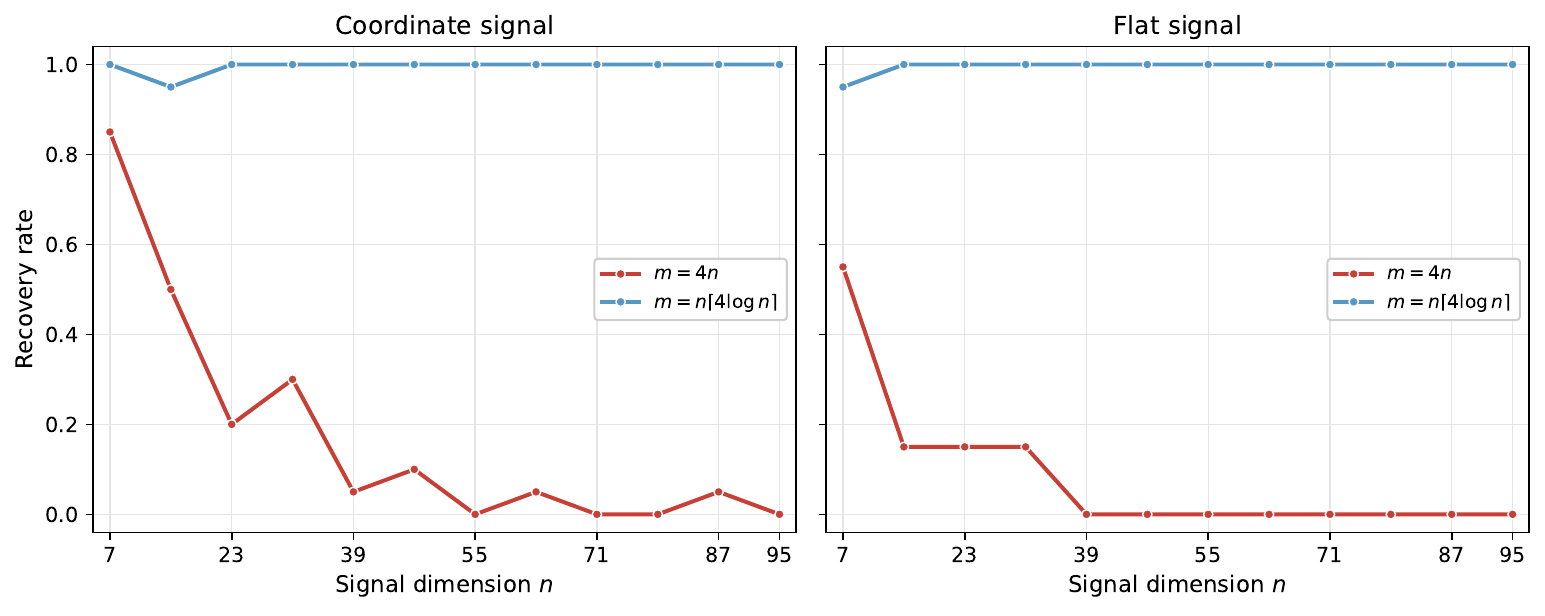}
\caption{Empirical recovery rates with erasure masks.}
\label{fig:erasure1}
\end{figure}

\paragraph{Poisson noise robustness.}
To examine a standard photon counting model beyond the noiseless setting, conditionally on the sampled masks, we generate independent Poisson observations according to
\begin{equation}\label{eq:poisson_model}
Z_{k,\ell}\overset{\mathrm{ind}}{\sim}
   \operatorname{Poisson}\lk\kappa y_{k,\ell}\rk,\qquad1\le k\le n,\quad 1\le \ell\le L,
\end{equation} 
with no background counts.
Here, $\kappa>0$ denotes the exposure level, so that $\kappa y_{k,\ell}$ is the expected photon count associated with intensity $y_{k,\ell}$; in particular, $\kappa$ is the expected count at unit intensity.
For $u,v\ge0$, define the generalized Kullback--Leibler divergence by
\begin{equation*}
D_{\mathrm{KL}}\lk u\,\middle\|\, v\rk := u\log\lk\frac{u}{v}\rk-u+v,
\end{equation*}
with the standard extended value conventions at zero.
We estimate $\X_0$ by solving the convex program
\begin{equation}\label{eq:poisson_phaselift}
\widehat{\X}_{\kappa}
\in
\mathop{\arg\min}_{\Z\succeq0,\ \tr\lk\Z\rk=1,\ \mA\lk\Z\rk\ge\pmb{0}}
\sum_{\ell=1}^{L}\sum_{k=1}^{n}
D_{\mathrm{KL}}\lk\frac{Z_{k,\ell}}{\kappa}\,\middle\|\,\tr\lk\Fm_{k,\ell}\Z\rk\rk.
\end{equation}
Up to an additive term independent of $\Z$ and multiplication by the positive scalar $\kappa$, this objective coincides with the Poisson negative log likelihood and therefore has the same minimizers.
%This normalization keeps the exponential cone variables at unit scale. 
We run SCS with a tolerance of $10^{-5}$ and a maximum of $5\times10^3$ iterations. 
All solver runs return finite matrix estimates, and we include every run in the reported summaries rather than filtering by solver status.
%Among the returned estimates, the largest absolute trace constraint violation is $3.1\times10^{-7}$, and the smallest eigenvalue is $-2.5\times10^{-4}$.

We fix $L=\lceil4\log n\rceil$ and consider dimensions $n\in\{16,32\}$ for octanary masks and $n\in\{15,31\}$ for erasure masks.  
The exposure levels are
\begin{equation*}
    \kappa\in\left\{1,3,10,30,100,300,1000\right\}.
\end{equation*}
For each combination of mask ensemble, dimension, and signal type, we perform $10$ independent trials, each using a fresh mask sequence and an independent Poisson process.
Within each trial, the masks are fixed, and the observations across exposure levels are coupled through Poisson increments, preserving the correct marginal distribution at each $\kappa$ while enabling paired comparisons.
We report the median relative matrix error
\begin{equation*}
    \frac{\lV\widehat{\X}_{\kappa}-\X_0\rV_F}{\lV\X_0\rV_F}.
\end{equation*}
The shaded regions in Figure~\ref{fig:octanary2} and Figure~\ref{fig:erasure2} indicate the interquartile range across the $10$ trials.

\begin{figure}[htbp]
\centering
\includegraphics[width=\textwidth]{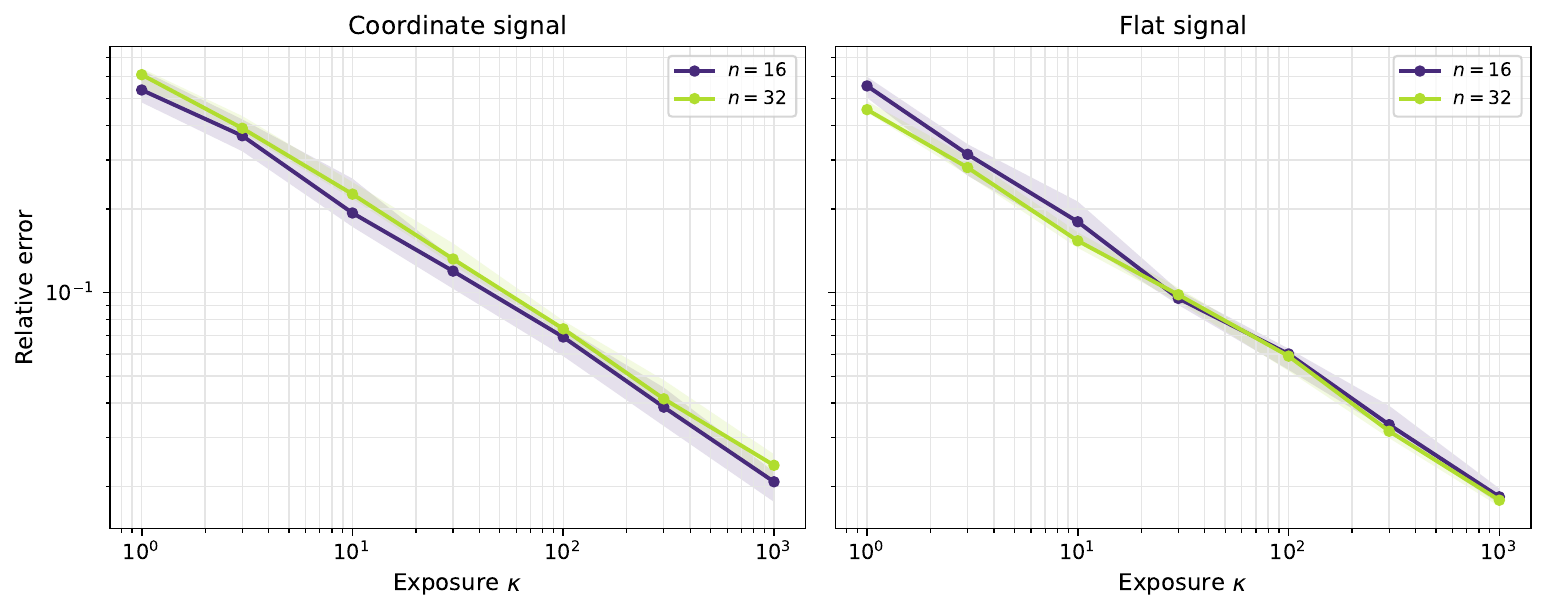}
\caption{Poisson-noise robustness with octanary masks.}
\label{fig:octanary2}
\end{figure}

\begin{figure}[htbp]
\centering
\includegraphics[width=\textwidth]{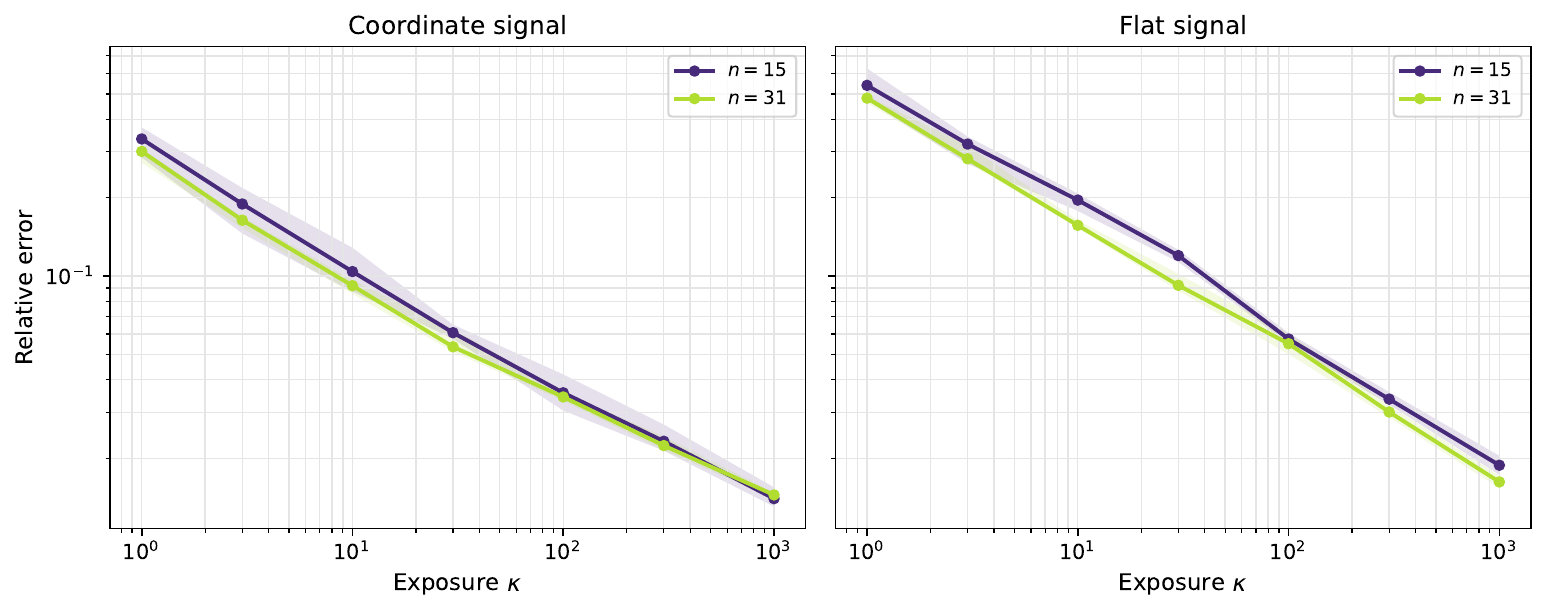}
\caption{Poisson-noise robustness with erasure masks.}
\label{fig:erasure2}
\end{figure}

Across all combinations of mask ensemble, dimension, and signal type, least-squares fits of the log-median error against $\log\kappa$ yield slopes between $-0.49$ and $-0.44$. 
These slopes are close to the canonical $\kappa^{-1/2}$ fluctuation scale of the normalized Poisson counts $Z_{k,\ell}/\kappa$. 
The median error decreases from between $0.30$ and $0.61$ at $\kappa=1$ to between $1.4\times10^{-2}$ and $2.4\times10^{-2}$ at $\kappa=1000$, with comparable decay rates for both mask ensembles and signal types.
Taken together, these findings provide empirical evidence that PhaseLift is robust to Poisson noise, while establishing corresponding theoretical guarantees remains an interesting direction for future work.

\section{Discussion}

This work establishes the optimal mask complexity, and hence the optimal sampling rate, for PhaseLift under the random mask model for coded diffraction patterns. 
For any unknown signal in $\C^n$ and every $\omega\ge1$, exact recovery is achieved from $\mathcal{O}\lk\omega\log n\rk$ random masks with failure probability at most $n^{-\omega}$. 
For the erasure mask ensemble, this dependence on both the dimension and the failure probability matches the corresponding lower bound, yielding the optimal total sampling rate.

%The proof relies on a refined golfing scheme that combines adaptive mask allocation with a dimension independent truncation level.
Several directions merit further investigation. 
First, our result is nonuniform, and establishing a uniform guarantee for the simultaneous recovery of all signals from a single realization of the random masks remains an important direction. 
Second, it would be interesting to determine whether other computationally tractable methods, particularly nonconvex algorithms, can attain the same optimal mask complexity with global exact recovery from a fixed collection of masks.
Finally, the numerical results suggest robustness to Poisson noise, motivating nonasymptotic stability guarantees that quantify the reconstruction error under photon-limited measurements.

\section*{Acknowledgments}

We would like to thank Huiping Li for helpful discussions during the early stages of this work.
This work was supported by the National Natural Science Foundation of China under Grant No.~U21A20426.

\appendix

\section{Mask Lower Bound}\label{ap:flat} 

The logarithmic lower bound for coordinate signals under the erasure mask ensemble~\cite[Lemma~19]{gross2017improved} is not specific to such highly localized signals.
We show that the same obstruction also arises for the flat signal (and, in fact, for a broader class of signals), for which $\Omega\lk\log n\rk$ masks are necessary.
Moreover, achieving success probability at least $1-n^{-\omega}$ requires $\Omega\lk\omega\log n\rk$ masks, showing that the dependence on the failure probability is also optimal up to constants.

\begin{proposition} \label{prop:lower_bound}
Let $n\ge2$, and suppose that the random masks are drawn independently according to the erasure mask ensemble in~\eqref{eq:erasure_mask}.
Consider the flat signal $\x_0=\frac{1}{\sqrt{n}}\lk 1,\ldots,1\rk^\top$.
Let $\mathsf{Inj}\lk\x_0\rk$ denote the event that $\x_0$ is uniquely determined, up to a global phase, by the CDP measurements in~\eqref{eq:cdp0}.
Then if $\mathbb{P}\lk\mathsf{Inj}\lk\x_0\rk\rk \ge1-\delta$ for some $\delta\in\lk0,1\rk$, then necessarily
\begin{equation}\label{eq:lower_bound}
        L
        \ge\log_2\lk\frac{n}{-\log\lk1-\delta\rk}\rk.
\end{equation}
Consequently, unique recovery of the flat signal with high probability requires $L=\Omega\lk\log n\rk$ masks, regardless of the recovery algorithm.
\end{proposition}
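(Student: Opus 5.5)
The plan is to exhibit, on an event of non-negligible probability, a second signal that is not a global phase multiple of $\x_0$ but produces identical CDP measurements. We then compare the probability of that event with $\delta$.

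\textbf{Step 1: the bad event.} Under the erasure ensemble~\eqref{eq:erasure_mask}, each mask entry $d_{j,\ell}$ vanishes with probability $1/2$, independently over $j$ and $\ell$. For a coordinate $1\le j\le n$, let $E_j$ be the event that $d_{j,\ell}=0$ for every $1\le \ell\le L$, so that coordinate $j$ is erased by all masks. Then $\mathbb{P}\lk E_j\rk=2^{-L}$, and the events $E_1,\ldots,E_n$ are independent because they depend on disjoint sets of mask entries. Hence
\begin{equation*}
\mathbb{P}\lk\textstyle\bigcup_{j}E_j\rk=1-\lk1-2^{-L}\rk^n .
\end{equation*}

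\textbf{Step 2: non-identifiability on $E_j$.} On $E_j$, we have $\lk\D_\ell\x\rk_j=0$ for all $\ell$ and every $\x$, so each $y_{k,\ell}=\lv\lg\vf_k,\D_\ell\x\rg\rv^2$ does not depend on the $j$-th coordinate of $\x$. Let $\x_1$ be $\x_0$ with its $j$-th entry $1/\sqrt n$ replaced by $-1/\sqrt n$. Then $\x_1$ yields the same measurements as $\x_0$. Since $n\ge2$, $\x_1$ keeps some other entry equal to $1/\sqrt n$, so $\x_1=c\,\x_0$ with $\lv c\rv=1$ would force $c=1$ from that entry and $c=-1$ from the $j$-th entry, a contradiction. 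Thus $\mathsf{Inj}\lk\x_0\rk\subseteq\bigcap_j E_j^c$, and therefore
\begin{equation*}
1-\delta\le\mathbb{P}\lk\mathsf{Inj}\lk\x_0\rk\rk\le\lk1-2^{-L}\rk^n .
\end{equation*}
The same argument works for any signal with at least two nonzero entries, which explains the remark about a broader class of signals.

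\textbf{Step 3: inversion.} Taking logarithms and using $\log\lk1-u\rk\le-u$ gives
\begin{equation*}
\log\lk1-\delta\rk\le n\log\lk1-2^{-L}\rk\le-n2^{-L},
\end{equation*}
that is, $2^{L}\ge n/\lk-\log\lk1-\delta\rk\rk$, which is exactly~\eqref{eq:lower_bound}. For the consequence, take $\delta=n^{-\omega}$. Then $-\log\lk1-\delta\rk\le2\delta$ for $\delta\le1/2$, so $L\ge\lk1+\omega\rk\log_2 n-1=\Omega\lk\omega\log n\rk$.

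The only step requiring care is Step 2: the perturbation must genuinely leave the global-phase orbit of $\x_0$, which is where $n\ge2$ and the flatness (nonvanishing of the other entries) are used. The rest is elementary.
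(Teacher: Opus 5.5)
Your proposal is correct and follows the paper's proof essentially step for step. It uses the same events $E_j$ (coordinate $j$ erased by all $L$ masks), the same sign-flip competitor $\x_1=\x_0-\frac{2}{\sqrt n}\ve_j$, the same bound $\lk1-2^{-L}\rk^n\le\exp\lk-n2^{-L}\rk$ followed by inversion, and the same specialization $\delta=n^{-\omega}$ as in the paper's remark.

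One small caveat concerns your closing aside about other signals: for a general signal the sign flip only uses the events $E_j$ with $j$ in the support. As stated, it therefore gives the bound with $n$ replaced by the support size, not $n$ itself.
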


\begin{remark}\label{re:flat}
Taking $\delta=n^{-\omega}$ with $\omega\ge1$, Proposition~\ref{prop:lower_bound} yields
\begin{equation*}
L
\ge \log_2\lk\frac{n}{2n^{-\omega}}\rk
= \lk 1+\omega\rk\log_2 n-1.
\end{equation*}
Hence, recovering the flat signal with probability at least $1-n^{-\omega}$ requires $L=\Omega\lk\omega\log n\rk$ masks. 
Thus, for the erasure mask ensemble, the mask complexity $L=\mathcal{O}\lk\omega\log n\rk$ in Theorem~\ref{thm:main} is optimal when a success probability of at least $1-n^{-\omega}$ is required.
\end{remark}

\begin{proof}
For each $1\le j\le n$, define the event
\begin{equation*}
        \mathcal{E}_j:= \left\{d_{\ell,j}=0\text{for every}1\le\ell\le L\right\}.
\end{equation*}
The events $\mathcal{E}_1,\ldots,\mathcal{E}_n$ are independent, and $\mathbb{P}\lk \mathcal{E}_j\rk=2^{-L}$.

Suppose that $\mathcal{E}_{j_0}$ occurs for some $j_0$. 
Define
$\x_1 :=\x_0-\frac{2}{\sqrt n}\ve_{j_0}$.
Thus, $\x_1$ is obtained from $\x_0$ by changing the sign of its $j_0$-th coordinate, while $\x_1\neq\alpha\x_0$ for every $\alpha\in\C$ with $\lv\alpha\rv=1$. 
Since $\mathcal{E}_{j_0}$ occurs, $\x_1$ and $\x_0$ produce identical coded diffraction measurements.
Thus, the occurrence of any $\mathcal{E}_j$ prevents unique recovery of $\x_0$.
It follows that $ \mathsf{Inj}\lk\x_0\rk\subset\bigcap_{j=1}^n \mathcal{E}_j^c$.
By independence,
\begin{equation}
\mathbb{P}\lk\mathsf{Inj}\lk\x_0\rk\rk
\le\mathbb{P}\lk\bigcap_{j=1}^n \mathcal{E}_j^c\rk
=\lk1-2^{-L}\rk^n
\le\exp\lk-n2^{-L}\rk.
\end{equation}
Finally, if the success probability is at least $1-\delta$, then taking logarithms and rearranging yields~\eqref{eq:lower_bound}.
\end{proof}

\section{Weighted Geometric Tail Bound}\label{ap:geo}

The following lemma controls a weighted sum of conditionally geometric waiting times and will be used to bound the total number of masks consumed by the adaptive golfing construction in Section~\ref{subsec:golfing}.

\begin{lemma}\label{lm:geometric}
Let $m_0,\ldots,m_{r-1}>0$ be deterministic numbers, and set $B:=\sum_{j=0}^{r-1}m_j $.
Let $N_0,\ldots,N_{r-1}$ be positive integer-valued random variables generated sequentially. 
Assume that, there exist deterministic numbers $p_0,\ldots,p_{r-1}\in[0,1)$ such that, for every $0\le j\le r-1$ and every integer $l\ge0$,
\begin{equation}\label{eq:conditional}
        \mathbb{P}\lk N_j>l \,\middle|\, N_0,\ldots,N_{j-1} \rk
        \le
         p_j^l,
\end{equation}
where, for $j=0$, the conditional probability is interpreted as an unconditional probability.
Then, for every $t>0$ and every $\theta>0$ satisfying
\begin{equation*}
        p_j e^{\theta m_j}<1,\qquad 0\le j\le r-1,
\end{equation*}
one has
\begin{equation}\label{eq:tail1}
\mathbb{P}\lk\sum_{j=0}^{r-1}m_j N_j > t \rk
\le
e^{-\theta t}\prod_{j=0}^{r-1}\frac{\lk 1-p_j\rk e^{\theta m_j}} {1-p_je^{\theta m_j}}.
\end{equation}
\end{lemma}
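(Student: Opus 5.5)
The plan is the standard exponential moment (Chernoff) argument, with the sequential dependence handled by iterated conditioning. By Markov's inequality, for every $\theta>0$,
\begin{equation*}
\mathbb{P}\lk\sum_{j=0}^{r-1}m_jN_j>t\rk\le e^{-\theta t}\,\E\exp\lk\theta\sum_{j=0}^{r-1}m_jN_j\rk ,
\end{equation*}
so it suffices to show that the expectation on the right is at most $\prod_{j=0}^{r-1}\frac{(1-p_j)e^{\theta m_j}}{1-p_je^{\theta m_j}}$.

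\textbf{Step 1 (one conditional factor).} Fix $j$ and condition on $N_0,\ldots,N_{j-1}$. Since $N_j$ is a positive integer, write
\begin{equation*}
e^{\theta m_jN_j}=e^{\theta m_j}+\sum_{l\ge1}\lk e^{\theta m_j(l+1)}-e^{\theta m_jl}\rk\ind_{\{N_j>l\}}.
\end{equation*}
Take conditional expectations and use~\eqref{eq:conditional}; all terms are nonnegative, so Tonelli applies. With $q:=e^{\theta m_j}$ this gives
\begin{equation*}
\E\lz e^{\theta m_jN_j}\,\middle|\,N_0,\ldots,N_{j-1}\rz\le q+(q-1)\sum_{l\ge1}q^lp_j^l=q+\frac{(q-1)qp_j}{1-qp_j}=\frac{(1-p_j)q}{1-p_jq}.
\end{equation*}
The geometric series converges because $p_jq<1$. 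The bound is deterministic, i.e.\ it does not depend on the conditioning values.

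\textbf{Step 2 (telescoping).} Peel off the last factor:
\begin{equation*}
\E\exp\lk\theta\sum_{j\le r-1}m_jN_j\rk=\E\lz \exp\lk\theta\sum_{j\le r-2}m_jN_j\rk\E\lz e^{\theta m_{r-1}N_{r-1}}\,\middle|\,N_0,\ldots,N_{r-2}\rz\rz .
\end{equation*}
Bound the inner conditional expectation by its deterministic constant from Step 1, and induct downward on $j$. The case $j=0$ uses the unconditional bound. This produces the product and hence~\eqref{eq:tail1}.

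The only delicate points are the following:
\begin{itemize}
\item justifying the tail-sum expansion of $e^{\theta m_jN_j}$ (monotone convergence), which is why I use the summation-by-parts identity rather than the probability mass function, since only tail bounds are assumed;
\item making sure the conditional bound is uniform in the conditioning, so the tower property can be applied.
\end{itemize}
Neither is a serious obstacle. The condition $p_je^{\theta m_j}<1$ is exactly what is needed for finiteness.
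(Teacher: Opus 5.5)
Your proposal is correct and follows essentially the same route as the paper. Both arguments use the tail-sum expansion of $e^{\theta m_j N_j}$ and a deterministic geometric-series bound on each conditional moment generating function. They then peel off factors by iterated conditioning and finish with Markov's inequality.
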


\begin{proof}
Fix $0\le j\le r-1$ and set $u_j:=\theta m_j$.
For every positive integer-valued random variable $N$, the tail-sum identity gives
\begin{equation*}
e^{u_jN}=
e^{u_j}+
\sum_{l=1}^{\infty} \lk e^{u_j\lk l+1\rk}-e^{u_jl} \rk\ind_{\{N>l\}}.
\end{equation*}
Consequently, by conditional expectation and~\eqref{eq:conditional},
\begin{align*}
\E\lk e^{u_jN_j}\,\middle|\,N_0,\ldots,N_{j-1} \rk
&=
e^{u_j}
+\sum_{l=1}^{\infty} \lk e^{u_j\lk l+1\rk}-e^{u_jl}\rk
\mathbb{P}\lk N_j>l \,\middle|\,  N_0,\ldots,N_{j-1} \rk\\
&\le
e^{u_j}
+\sum_{l=1}^{\infty} \lk e^{u_j(l+1)}-e^{u_jl} \rk p_j^l.
\end{align*}
Since $p_je^{u_j}<1$, the geometric series converges, and hence
\begin{align*}
\E\lk  e^{u_jN_j}\,\middle|\, N_0,\ldots,N_{j-1} \rk
&\le e^{u_j}+\lk e^{u_j}-1\rk \sum_{l=1}^{\infty} \lk p_je^{u_j}\rk^l\\
&= \frac{\lk 1-p_j\rk e^{u_j}}{1-p_je^{u_j}}.
\end{align*}

Applying~\eqref{eq:conditional} successively and conditioning on $N_0,\ldots,N_{j-1}$ at each stage, we obtain
\begin{equation*}
\E\exp\lk \theta\sum_{j=0}^{r-1}m_jN_j\rk
\le
\prod_{j=0}^{r-1}
\frac{\lk 1-p_j\rk e^{\theta m_j}}{1-p_je^{\theta m_j}}.
\end{equation*}
Therefore, by Markov's inequality,
\begin{align*}
\mathbb{P}\lk \sum_{j=0}^{r-1}m_jN_j>t\rk
&\le
e^{-\theta t}\E\exp\lk\theta\sum_{j=0}^{r-1}m_jN_j\rk\\
&\le
e^{-\theta t}\prod_{j=0}^{r-1}\frac{\lk 1-p_j\rk e^{\theta m_j}}{1-p_je^{\theta m_j}},
\end{align*}
which proves~\eqref{eq:tail1}.

\end{proof}

\bibliographystyle{plain}

\bibliography{ref}

\end{document}